\newcommand{\R}{{\mathbb R}}
\newcommand{\C}{{\mathbb C}}
\newcommand{\D}{{\mathbb D}}
\newcommand{\re}{\text{\upshape Re\,}}
\newcommand{\im}{\text{\upshape Im\,}}
\newtheorem{theorem}{Theorem}[section]
\newtheorem{lemma}[theorem]{Lemma}
\newtheorem{assumption}[theorem]{Assumption}
\newtheorem{remark}[theorem]{Remark}
\newtheorem{figuretext}{Figure}
\numberwithin{equation}{section}
\date{\today}
\title[Construction of solutions of the defocusing NLS equation]
{Construction of solutions of the defocusing nonlinear Schr\"odinger equation with asymptotically time-periodic boundary values}
\author{Samuel Fromm}
\address{Department of Mathematics, KTH Royal Institute of Technology, \\ 100 44 Stockholm, Sweden.}
\email{samfro@kth.se}
\begin{document}

\begin{abstract}
\noindent
We study the defocusing nonlinear Schr\"odinger equation in the quarter plane with asymptotically periodic boundary values.
By studying  an associated Riemann--Hilbert problem and employing nonlinear steepest descent arguments, we construct solutions in a sector close to the boundary whose leading behaviour is described by a single exponential plane wave.
Furthermore, we compute the subleading terms in the long time asymptotics of the constructed solutions. 
\end{abstract}

\maketitle

\noindent
{\small{\sc AMS Subject Classification (2010)}: 35Q55, 41A60, 35Q15.}

\noindent
{\small{\sc Keywords}: Nonlinear Schr\"odinger equation, Riemann--Hilbert problem, asymptotics, initial-boundary value problem, nonlinear steepest descent.}

\setcounter{tocdepth}{1}
\tableofcontents


\section{Introduction}

An effective way of computing the long time asymptotics of solutions of integrable nonlinear PDEs is given by analyzing a related Riemann--Hilbert (RH) problem with the help of the nonlinear steepest descent method introduced by Deift and Zhou in \cite{DeiftZhou}.
In this way the long time asymptotics of the solutions for initial boundary value problems on the half-line have been derived for several equations, provided that the Dirichlet and Neumann boundary values decay as \(t\to\infty\) (see for example \cite{monvel_fokas_shepelsky_2004} and \cite{FokasItsNLS_half-line,FokasItsSung_NLS_half-line} for the KdV and NLS equation, respectively).

In the case of non-decaying boundary values with respect to \(t\to\infty\), much less is known.
One reason for this is that in this case the Riemann--Hilbert approach is generally harder to apply, because not all boundary data necessary for a well-posed problem are known \cite{FokasUnifiedApproach}.

A special case of non-decaying boundary data, especially important from the standpoint applications, is given by asymptotically time-periodic boundary conditions \cite{BonaFokas2008,MKSZ2010,LenellsFokasUnifiedMethodII}. 

In this paper we consider boundary data whose leading order long time behaviour is described by a single exponential.
More precisely, we consider the defocusing nonlinear Schr\"odinger (NLS) equation
\begin{align}\label{nls}
i u_t +u_{xx}-2|u|^2 u=0
\end{align}
in the quarter plane \(\{(x,t)\in\R^2\,|\,x\geq0, t\geq0\}\) with boundary values of the form
\begin{align}\label{nls-boundary-values}
u(0, t) \sim \alpha e^{i\omega t}, \qquad u_x(0,t) \sim ce^{i\omega t}, \qquad t\to \infty,
\end{align}
where \(\alpha>0\), \(\omega\in\R\), and \(c\in\C\) are three parameters.

Boundary values of the type \eqref{nls-boundary-values} for the \textit{focusing} NLS equation 
\begin{align}\label{focnls}
i u_t +u_{xx}+2|u|^2 u=0
\end{align}
have been studied by Boutet de Monvel and coauthors \cite{BIK2007,BIK2009,BKS2009,MKSZ2010,BK2007}.
They showed that there exists a solution of the focusing NLS equation
\eqref{focnls}
in the quarter plane
with boundary values of the form \eqref{nls-boundary-values} and with decay as \(x\to\infty\) if and only if the parameters \((\alpha, \omega,c)\) satisfy either 
 \begin{align}\label{firsttriple}
c=\pm\alpha\sqrt{\omega-\alpha^2}\quad\text{and}\quad \omega\geq \alpha^2
\end{align}
or
\begin{align}\label{analogoustriple}
c=i\alpha\sqrt{|\omega|+2\alpha^2}\quad\text{and}\quad \omega\leq-6\alpha^2.
\end{align}
Using the Deift--Zhou nonlinear steepest descent method they were also able to determine the leading order asymptotics of any such solution.
In particular, in \cite{BIK2009} it was shown that in the case \(\omega< -6\alpha^2\), the quarter plane \(\{x>0,t>0\}\) is divided into three asymptotic sectors:
A plane wave sector \(\{0\leq x/(4t)<\tilde\beta-\alpha\sqrt{2}\}\) (where \(\tilde\beta=\sqrt{{\alpha^2}/2-{\omega}/4}\)), an elliptic wave sector \(\{\tilde\beta-\alpha\sqrt{2}<x/(4t)<\tilde\beta\}\), and a Zakharov-Manakov sector \(\{\tilde\beta<x/t\}\).
In the plane wave sector, the solution \(u\) approaches the plane wave
\begin{align*}
(x,t)\mapsto\alpha e^{2i \tilde\beta x+i \omega t}.
\end{align*}
This behaviour is expected since the sector lies near the boundary. In the elliptic wave and Zakharov-Manakov sectors the asymptotics are described by a modulated elliptic wave and Zakharov-Manakov type formulas, respectively.

The defocusing case has been studied by Lenells \cite{Ldefocusing-admissible} and Fokas and Lenells  \cite{tperiodicI,tperiodicII}.
As in the focusing case, in \cite{Ldefocusing-admissible} \textit{necessary} conditions on the parameter triple \((\alpha, \omega, c)\) for the existence of a solution of \eqref{nls} with boundary values of the type \eqref{nls-boundary-values} were derived. 
However, in the defocusing case this leads to five different families of triples.
Two of these families are analogous to the two sets of triples \eqref{firsttriple} and \eqref{analogoustriple}.
The family corresponding to \eqref{analogoustriple} is given by the family (cf. (2.4) of \cite{Ldefocusing-admissible})
\begin{align}\label{parameterfamilyalt}
&\bigg\{\bigg(\alpha, \omega, c = i\alpha \sqrt{-2\alpha^2-\omega} \bigg)  
\; \bigg|  \alpha>0, \; \omega<-3\alpha^2\bigg\}.
\end{align} 

To motivate the results of this paper, let us recall that the unified transform method, also known as the Fokas method, expresses the solution of the initial boundary value problem for the defocusing NLS equation \eqref{nls} on the half-line with vanishing boundary condition at $x = \infty$ in terms of the solution of a \(2\times2\)-matrix valued RH problem \cite{FokasUniformTransform,FokasUnifiedApproach,FokasItsSung_NLS_half-line}.
This RH problem is formulated in terms of spectral functions \(r_1(k)\) and \(h(k)\), which in turn are defined in terms of the initial and boundary values.
One can also construct solutions of \eqref{nls} by solving the RH problem for independent spectral data \(r_1(k)\) and \(h(k)\), assuming that these functions satisfy certain conditions.
Thus, a natural question to ask in this situation is: What conditions need to be imposed on \(r_1(k)\) and \(h(k)\) in order for the corresponding RH problem to give rise to a solution of \eqref{nls} on the half-line with asymptotically time-periodic boundary values? 
This question is also of interest with respect to the results presented in \cite{Ldefocusing-admissible} as outlined above: Are the conditions on the parameter triple \((\alpha, \omega, c)\) given by \eqref{parameterfamilyalt} also \textit{sufficient} for the existence of a solution of \eqref{nls}  with decay as \(x\to\infty\) and with boundary values satisfying \eqref{nls-boundary-values}?
The aim of the work at hand is to take first steps towards answering these questions.

We will show that for any triple from the family \eqref{parameterfamilyalt}, we may construct solutions of \eqref{nls} with boundary values satisfying \eqref{nls-boundary-values}, at least in a sector close to the boundary .
More precisely, we will provide a class of independent spectral data \(r_1(k)\) and \(h(k)\) which gives rise to solutions of \eqref{nls} in the plane wave sector \(\{0\leq\frac{x}t<4\beta-2\alpha-\delta\}\) for \(\delta>0\) small and \(t\) large, and which have boundary values of the form \eqref{nls-boundary-values}.
Furthermore, we will compute the first two terms in the long time asymptotics of the constructed solutions and their \(x\)-derivatives.

The proof is based on the Riemann--Hilbert approach and combines the uniform transform method for nonlinear integrable PDEs introduced by Fokas \cite{FokasUniformTransform} with the nonlinear steepest descent method introduced by Deift and Zhou \cite{DeiftZhou}.
First, we will transform the associated RH problem, containing the spectral data \(r_1(k)\) and \(h(k)\), into a small norm RH problem, which can be solved for large \(t\) in the plane wave sector via a Neumann series.
Using ideas based on the dressing method \cite{ZakharovShabat1,ZakharovShabat2}, it can be shown that the solutions of the original RH problem leads to a solution \(u\) of \eqref{nls}.
Finally, by studying the asymptotics of the solution of the associated RH problem, we compute the asymptotics of the solution \(u\) and its \(x\)-derivative \(u_x\).

A crucial step in the proof is the construction of analytic approximations of the spectral functions defined on the jump contour. 
Our construction of such approximations is based on the original approach of \cite{DeiftZhou}; however, since one of the spectral functions has singularities on the jump contour, the construction of an analytic approximation of this function requires some novel ideas.

%

\subsection*{Acknowledgement}
The author thanks Jonatan Lenells for helpful discussions.
Support  is  acknowledged  from  the  European  Research  Council,  Grant  Agreement No. 682537.

\section{Notation}

The family \eqref{parameterfamilyalt} can be written as (let \(K\to\beta\) in Section 5.3 in \cite{Ldefocusing-admissible})
\begin{align}\label{parameterfamily}
&\bigg\{\bigg(\alpha  = \sqrt{\frac{|\omega|}{2} - 2\beta^2}, \omega, c = 2i\alpha \beta \bigg)  
\; \bigg|  -12 \beta^2 < \omega < -4\beta^2, \; \beta > 0\bigg\}.
\end{align} 
For a given parameter triple \((\alpha, \omega, c)\) belonging to the family
\eqref{parameterfamily} and the associated parameter \(\beta=\sqrt{{|\omega|}/4-{\alpha^2}/2}\), define a function \(\Omega(k)\) by
$$\Omega(k) = 2(k-\beta)X(k),\quad k\in\C\setminus[E_1,E_2],$$
where 
$$X(k) = \sqrt{(k-E_1)(k - E_2)},\qquad E_1 = -\beta - \alpha, \qquad E_2 = -\beta + \alpha,$$
with a branch cut along \([E_1,E_2]\) and the branch of the square root being choosen such that
$$X(k)=\sqrt{(k+\beta)^2-\alpha^2}=k+\beta+O(k^{-1}),\qquad k\to\infty.$$
Similarly, define a function \(\Delta(k)\) by
\begin{align}\label{defindelta}
\Delta(k) = \bigg(\frac{k-E_2}{k-E_1}\bigg)^{1/4},\quad k\in\C\backslash[E_1,E_2],
\end{align} 
with the branch being chosen such that \(\Delta(k)=1+O(k^{-1})\) as \(k\to\infty\).
We orient the interval \([E_1,E_2]\), viewed as a contour in the complex plane, from left to right and denote by \(\Delta_+(s)\) and \(\Delta_-(s)\) for \(s\in(E_1,E_2)\) the boundary values of the function \(\Delta\) from the left and right, respectively.
Similar notation is used for other functions.

\begin{figure}
\begin{center}
 \begin{overpic}[width=.55\textwidth]{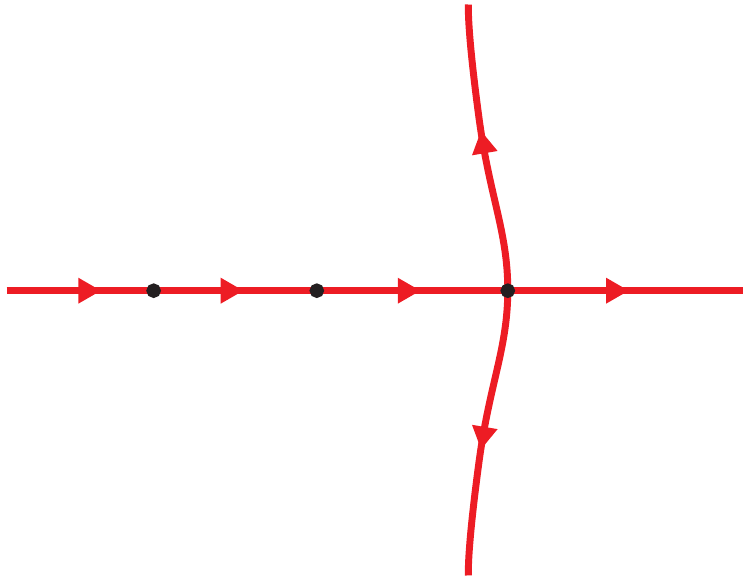}
      \put(102,37.5){\small $\Gamma$}
      \put(18,33){\small $E_1$}
      \put(40,33){\small $E_2$}
      \put(69,35){\small $\kappa_+$}
      \put(81,34){\small $1$}
      \put(60,56){\small $2$}
      \put(11,34){\small $3$}
      \put(54,34){\small $3$}
      \put(60,18){\small $4$}
      \put(30,34){\small $5$}
      \put(80,60){\small $D_1$}
	\put(30,60){\small $D_2$}
	\put(30,10){\small $D_3$}
	\put(80,10){\small $D_4$}
    \end{overpic}
     \begin{figuretext}\label{Gamma.pdf}
        The contour  $\Gamma$ and the domains \(D_j\), \(j=1,2,3,4\), in the complex $k$-plane. 
         \end{figuretext}
     \end{center}
\end{figure}

Define domains \(D_j\subseteq \C\), \(j=1,2,3,4\), by (see Figure \ref{Gamma.pdf})
\begin{align*}
D_1&=\{k\in\C\,|\, \im k>0,\im \Omega(k)>0\},
&D_2=\{k\in\C\,|\, \im k>0,\im \Omega(k)<0\},\\
D_3&=\{k\in\C\,|\, \im k<0,\im \Omega(k)>0\},
&D_4=\{k\in\C\,|\, \im k<0,\im \Omega(k)<0\},
\end{align*}
and let $\Gamma = \R \cup (\bar{{D}}_1 \cap \bar{{D}}_2) \cup (\bar{{D}}_3 \cap \bar{{D}}_4)$ 
denote the contour separating the domains ${D}_j$ oriented as in Figure \ref{Gamma.pdf}.
For complex functions \(r(k)\), \(r_1(k)\) and \(h(k)\), defined on appropriate parts of \(\Gamma\), and \(x,t>0\) define a jump matrix \(v(x,t,k)\) for \(k\in\Gamma\) by
\begin{align}\label{jumpmatrixRHproblem}
v(x,t,k)=
\begin{cases}
\begin{pmatrix}1 - |r_1(k)|^2 & \overline{r_1(k)} e^{-2i(kx+2k^2t) } \\
-r_1(k) e^{2i(kx+2k^2 t)} & 1 \end{pmatrix}, &k\in\bar{D}_1\cap\bar{D}_4,
\\
\begin{pmatrix}1 & 0 \\
-h(k) e^{2i(kx+2k^2 t)} & 1 \end{pmatrix},&k\in\bar{D}_1\cap\bar{D}_2,
\\
 \begin{pmatrix}1 - |r(k)|^2 & \overline{r(k)} e^{-2i(kx+2k^2 t)} \\
-r(k) e^{2i(kx+2k^2 t)} & 1 \end{pmatrix},&k\in\bar{D}_2\cap\bar{D}_3,
\\
\begin{pmatrix}1 & \overline{h(\bar{k})} e^{-2i(kx+2k^2 t)} \\
0 & 1 \end{pmatrix},&k\in\bar{D}_3\cap\bar{D}_4.
\end{cases}
\end{align}
%
%
Put \(\zeta\!:=x/t\). For \(0\leq \zeta<4\beta-2\alpha\) define real numbers \(\kappa_+\) and \(k_0\) by
\begin{align}\label{definition_points_of_interest}
\begin{split}
 &k_0\equiv k_0(\zeta)=-\frac{4\beta + \zeta}{8} + \sqrt{\frac{\alpha^2}{2} + \Big(\frac{4\beta - \zeta}{8}\Big)^2},
 \\
 &\kappa_{+}=\bigcap_{j=1}^4 \bar{D}_j=k_0(0)=-\frac{\beta}{2} + \sqrt{\frac{\alpha^2}{2} + \frac{\beta^2}{4}}.
 \end{split}
\end{align}

Finally, let us introduce some general notation, not specific to our problem.
Given an open connected set \(D\subseteq \C\) bounded by a piecewise smooth curve \(\partial D\subseteq \C\cup\{\infty\}\), we say that an analytic function \(f\colon D\to \C\) belongs to the Smirnoff class  \(\dot{E}^2(D)\) if there exist piecewise smooth curves \(\{C_n\}_{n=1}^\infty\) in \(D\) tending to \(\partial D\) in the sense that \(C_n\) eventually surrounds each compact subdomain of \(D\) and such that
\begin{align*}
\sup_{n \geq 1} \int_{C_n} |f(z)|^2 |dz| < \infty.
\end{align*}
In the case that \(D=D_1 \cup \cdots \cup D_m\) is a finite union of such open subsets, \(\dot{E}^2(D)\) denotes the space of all functions \(f:D\to \C\) satisfying \(f|_{D_j} \in \dot{E}^2(D_j)\) for each \(j\).
The space of bounded analytic functions on \(D\) is denoted by \(E^\infty(D)\).

For a piecewise smooth contour \(\Sigma\) in \(\C\cup\{\infty\}\) and real numbers \(a<b\), we say that the sector \(W_{a,b}=\{k\in\C\,|\, a\leq \arg k\leq b\}\) defines a nontangential sector at \(\infty\) with respect to the contour \(\Sigma\), if there exists a \(\delta>0\) such that \(\{a-\delta\leq \arg k\leq b+\delta\}\) does not intersect \(\Sigma\cap\{|k|>R\}\) for \(R>0\) large enough.
We say that a function \(f\) of \(k\in\C\setminus\Sigma\) has nontangential limit \(L\) at \(\infty\), denoted by
$$
\overset{\angle}{\lim_{k \to \infty}} f(k)=L,
$$
if \(\lim_{k\to\infty, k\in W_{a,b}}f(k)=L\) for every nontangential sector \(W_{a,b}\) at \(\infty\).

Lastly, we let \(C>0\) denote a generic constant which may change within a computation.

\section{Main Result}

Before stating our main result we introduce the assumptions under which the result is valid.
For reasons of clarity, the assumptions are split up into two parts.
The first part of our assumptions is somewhat standard in this context, in the sense that spectral data originating from a solution of \eqref{nls} typically satisfies these assumptions if the initial and boundary data have sufficient smoothness and decay and are compatible  \cite{FokasUnifiedApproach,HuangLenellsSineGordonquarter,FokasItsSung_NLS_half-line,mKdVLenells}. 
In particular, the so called global relation typically implies part (\ref{globalrelationassumption}) of the assumptions below.

\begin{assumption}\label{assumptions-spectral-functions}
Suppose that \(r_1\colon\R\to \C\) and \(h\colon\bar{D}_2\to \C\) are continuous functions satisfying the following properties:
\begin{enumerate}[(a)]
\item \(h\) is analytic in \(D_2\) and \(h\in C^6(\bar{D}_1\cap \bar{D}_2)\).
\item \(r_1\in C^6((E_2,\infty))\).
\item The continuous function \(r\colon(-\infty,\kappa_+]\to\C\) defined by \(r(k)=r_1(k)+h(k)\) is \(C^3\) on \((-\infty,E_1)\cup(E_1,E_2)\) and \(C^6\) on \((E_2,\kappa_+]\).
\item \(|r|<1\) on \((-\infty,\kappa_+]\setminus[E_1,E_2]\).
\item \label{globalrelationassumption} It holds that \(r(\kappa_+)=0\). 
\item \label{seriesexpansionsassumptions} There exist constants \(h_j\in\C\), \(j=1,2,3,4\), such that
 \begin{align}\label{seriesexpansionh}
\frac{d^n}{dk^n}h(k)=
	\frac{d^n}{dk^n}\left(\sum_{j=1}^4\frac{h_j}{k^j}\right)+O(k^{-5-n}),&\qquad k\to\infty,\, k\in \bar{D}_2,
\end{align}
and
 \begin{align}\label{seriesexpansionr1}
\frac{d^n}{dk^n}r_1(k)=
	\frac{d^n}{dk^n}\left(\sum_{j=1}^4\frac{-h_j}{k^j}\right)+O(k^{-5-n}),&\qquad |k|\to\infty,\, k\in\R,
\end{align}
for \(n=0,1,2\).
\end{enumerate}
\end{assumption}

\begin{remark}
Note that \eqref{seriesexpansionh} and \eqref{seriesexpansionr1} imply
\begin{align}\label{seriesexpansionrinfinity}
\frac{d^n}{dk^n}r(k)=
	O(k^{-5-n}),&\qquad k\to-\infty,\, k\in \R.
\end{align}
\end{remark}

In order to motivate the second part of our assumptions, we recall that our goal is to construct solutions approaching the background plane wave solution $u^b(x,t)$ of \eqref{nls} defined by
\begin{align}\label{backgroundsolution}
u^b(x,t) = \alpha e^{2i\beta x + i\omega t}.
\end{align}
This problem bears similarities to shock problems for the defocusing NLS on the whole line, see for instance \cite{Jenkins2015}.
In view of the spectral data found in \cite{Jenkins2015} and the spectral data appearing for the analogous problem for the focusing NLS \cite{BIK2009} (the latter having a single cut in the complex plane), we choose our data near the branch cut to resemble the boundary values \(r^b_+(k)\) for \(k\in\R\) of the function \(r^b(k)\) defined by
\begin{align}\label{rb}
r^b(k)=i\frac{\Delta(k)-\Delta^{-1}(k)}{\Delta(k)+\Delta^{-1}(k)},\quad k\in\C\setminus[E_1,E_2],
\end{align}
where \(\Delta\) is defined by \eqref{defindelta}.
The function \(r^b\) has a single cut from \(E_1\) to \(E_2\) with \(|r^b_+|=1\) on \([E_1,E_2]\), \(r^b_+(E_j)\in\{\pm i\}\), \(j=1,2\), and \(r^b_+(k)\) behaves like a square root as \(k\) approaches one of the branch points \(E_1\) and \(E_2\).
The assumptions below capture this behaviour.
For a more technical motivation we refer to Remark \ref{derivationofmodeldata}.

\begin{assumption}[Assumptions near the branch cut]\label{assumptions-spectral-functions-branch}
Suppose that the function \(r\colon(-\infty,\kappa_+]\to\C\) defined in Assumption \ref{assumptions-spectral-functions} (c) satisfies:
\begin{enumerate}[(a)]
\item $|r|=1$ on $ [E_1,E_2]$.
\item \(r\neq\pm i\) on \((E_1,E_2)\), \(r(E_1)\in\{\pm i\}\), and \(r(E_2)\in\{\pm i\}\).
\item Near the branch points the function \(r\) admits series expansions of the form
\begin{align}\label{seriesexpansionr}
r(k)=
	\begin{cases}
	i\left(\sum_{l=0}^7q_{2,l}(k-E_2)^{l/2}\right)+O((k-E_2)^{4}),&k\downarrow E_2,
		\\ 
	i\left(\sum_{l=0}^7i^lq_{2,l}(E_2-k)^{l/2}\right)+O((E_2-k)^{4}),&k\uparrow E_2,
		\\
	i\left(\sum_{l=0}^7i^lq_{1,l}(k-E_1)^{l/2}\right)+O((k-E_1)^{4}),&k\downarrow E_1,
		\\ 
	i\left(\sum_{l=0}^7(-1)^{l}q_{1,l}(E_1-k)^{l/2}\right)+O((E_1-k)^{4}),&k\uparrow E_1,
	\end{cases}
\end{align}
where \(q_{i,l}\) are real coefficients with \(q_{i,1}\neq 0\) and the expansions in \eqref{seriesexpansionr} can be differentiated termwise three times.
\end{enumerate}
\end{assumption}

\begin{remark}
We note that Assumption \ref{assumptions-spectral-functions} (d), Assumption \ref{assumptions-spectral-functions-branch} (a) and  Assumption \ref{assumptions-spectral-functions-branch} (b) pose additional conditions on the coefficients \(q_{i,l}\).
In particular, it follows that
\begin{align}\label{cancellation_assumption}
q_{i,0}\in\{\pm1\},\quad (-1)^iq_{i,0}q_{i,1}<0,\quad 2q_{i,0}q_{i,2}=q_{i,1}^2,\quad i=1,2.
\end{align}
\end{remark}

Our main result reads as follows. The sector in question is displayed in Figure \ref{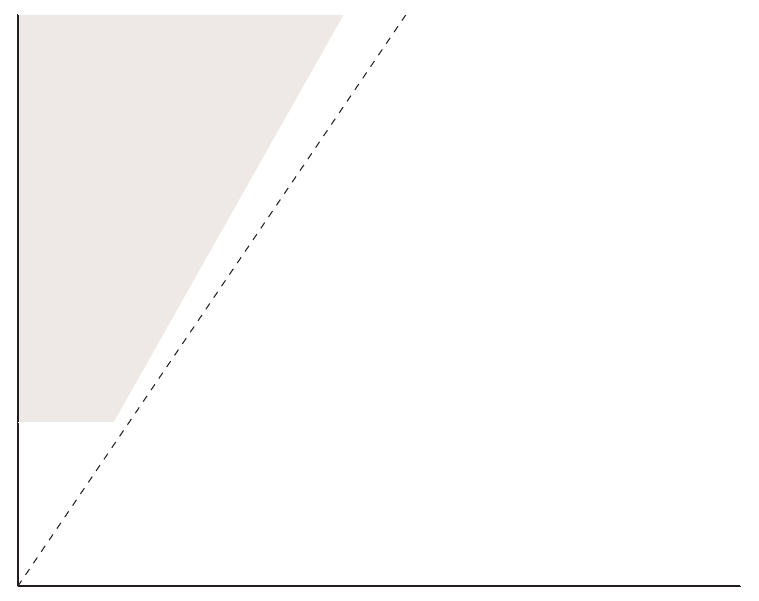}.

\begin{figure}
\begin{center}
 \begin{overpic}[width=.55\textwidth]{quarter_plane_regions_filled_final.pdf}
      \put(-1,-2){\small $0$}
      \put(2,78){\small $t$}
      \put(-2,22){\small $T$}
      \put(14,51){\small $S$}
      \put(99,1){\small $x$}
      \put(54,78){\small $\zeta=4\beta-2\alpha$}
    \end{overpic}
     \begin{figuretext}\label{quarter_plane_regions_filled_final.pdf}
        The sector \(S\) in which Theorem \ref{maintheorem} applies is shaded.
         \end{figuretext}
     \end{center}
\end{figure}

\begin{theorem}\label{maintheorem}
Assume that the parameter triple \((\alpha,\omega,c)\) belongs to the family \eqref{parameterfamily}. Let \(r_1\), \(h\), and \(r=r_1+h\) be functions satisfying Assumptions \ref{assumptions-spectral-functions} and \ref{assumptions-spectral-functions-branch}.
\begin{enumerate}[(a)]
\item For each \(c_0\in(0,4\beta-2\alpha)\) there exists a time \(T>0\) such that the \(L^2\)-RH problem
\begin{align}\label{RHm}
\begin{cases}
m(x, t, \cdot) \in I + \dot{E}^2(\C \setminus \Gamma),\\
m_+(x,t,k) = m_-(x, t, k) v(x, t, k) \quad \text{for a.e.} \ k \in \Gamma,
\end{cases}
\end{align}
where the jump matrix \(v\) is defined by \eqref{jumpmatrixRHproblem}, has a unique solution for each \((x,t)\in S\!:=\{0\leq \frac{x}t\leq c_0\}\cap\{t\geq T\}\).

\item The nontangential limit 
\begin{align}\label{ulim}
u(x,t) = -2i D(0,\infty)^2 \overset{\angle}{\lim_{k \to \infty}} (k\,m(x,t,k))_{12}
\end{align}
exists for each \((x,t)\in S\) and  the function \(u\colon S\to\C\) defined by \eqref{ulim} is \(C^2\) in \(x\) and \(C^1\) in \(t\) and satisfies \eqref{nls} for \((x,t)\in S\).
Here \(D(0,\infty)\) is a constant of absolute value $1$ given by
\begin{align}\label{D0infinitydefinition}
D(0,\infty)= e^{-\frac{1}{2\pi i}\big\{\big(\int_{-\infty}^{E_1} + \int_{E_2}^{\kappa_+}\big) \frac{\ln(1 - |r(s)|^2)}{X(s)} ds
+ \int_{E_1}^{E_2} \frac{i \arg(r(s))}{X_+(s)} ds\big\}}
\end{align}
where \(\arg(r)\) is continuous on \([E_1,E_2]\) and \(\arg(r(E_2))\in(-\pi,\pi]\).

\item The \(x\)-derivative of \(u\) is given by
\begin{align}\label{uxlim}
u_x(x,t)&=-D(0,\infty)^2\overset{\angle}{\lim_{k \to \infty}}\left(4k^2m_{12}(x,t,k)+2i u(x,t) k m_{22}(x,t,k)\right) 
\end{align}
for \((x,t)\in S\).

\item As \(t\to\infty\), the following asymptotic expansions are valid uniformly for  \(\zeta\in[0,c_0]\):
\begin{align}\label{asymptoticexpansionformulas}
\begin{aligned}
u(x,t) &= e^{-\frac{1}{\pi i}\int_{k_0}^{\kappa_+} \frac{\ln(1 - |r(s)|^2)}{X(s)} ds}\alpha e^{2i\beta x +i\omega t}+\frac{u_a(x,t)}{\sqrt{t}}+O\!\left(\frac{\ln t}{t}\right),
	\\
u_x(x,t)&=2ie^{-\frac{1}{\pi i}\int_{k_0}^{\kappa_+} \frac{\ln(1 - |r(s)|^2)}{X(s)} ds}\alpha\beta e^{2i\beta x +i\omega t}+\frac{u_b(x,t)}{\sqrt{t}}+O\!\left(\frac{\ln t}{t}\right).
\end{aligned}
\end{align}
The subleading coefficients \(u_a\) and \(u_b\) are given explicitly by
\begin{align*}
u_a(x,t)
&=e^{2i\beta x +i\omega t}e^{-\frac{1}{\pi i}\int_{k_0}^{\kappa_+} \frac{\ln(1 - |r(s)|^2)}{X(s)} ds}
\Bigg(\frac{it^{-i\nu}{\beta^X_{k_0}}(\Delta(k_0)^2+1)^2}{2e^{2itg(k_0)}\Delta(k_0)^2\psi(k_0)^{1+2i\nu}D_b(k_0)^{-2}}
\\&
\qquad+\frac{it^{i\nu}\overline{\beta^X_{k_0}}(\Delta(k_0)^2-1)^2}{2e^{-2itg(k_0)}\Delta(k_0)^2\psi(k_0)^{1-2i\nu}D_b(k_0)^2}\Bigg)
\end{align*}
and
\begin{align*}
u_b(x,t)
&=e^{2i\beta x +i\omega t}e^{-\frac{1}{\pi i}\int_{k_0}^{\kappa_+} \frac{\ln(1 - |r(s)|^2)}{X(s)} ds}\Bigg(\frac{t^{-i\nu}\beta^X_{k_0}\left(\frac{-2\alpha^2}{k_0-E_1}+k_0(\Delta(k_0)^2+1)^2\right)}{e^{2itg(k_0)}\Delta(k_0)^2\psi(k_0)^{1+2i\nu}D_b(k_0)^{-2}}
\\&\qquad+\frac{t^{i\nu}\overline{\beta^X_{k_0}}\left(\frac{-2\alpha^2}{k_0-E_1}+k_0(\Delta(k_0)^2-1)^2\right)}{e^{-2itg(k_0)}\Delta(k_0)^2\psi(k_0)^{1-2i\nu}D_b(k_0)^2}\Bigg),
\end{align*}
where \(\Delta\) is defined by \eqref{defindelta} and \(g(k_0)\), \(\nu\), \(\beta_{k_0}^X\), \(\psi(k_0)\) and \(D_b(k_0)\) are functions of \(\zeta\in[0,c_0]\) defined by
\begin{align*}
g(k_0)&=(2k_0-2\beta+\zeta)X(k_0),\quad \nu=-\frac1{2\pi}\ln(1-|r(k_0)|^2),
\\
\beta_{k_0}^X &= \sqrt{\nu} e^{i\left(\frac{3\pi}{4} - \arg (-r(k_0)) + \arg \Gamma(i\nu )\right)},\quad \psi(k_0)=\frac{2\sqrt{2}\left({\frac{\alpha^2}{2} + \left(\frac{4\beta - \zeta}{8}\right)^2}\right)^{1/4}}{\sqrt{X(k_0)}},
\\
D_b(k_0)&=\lim_{z\to k_0,z>k_0}\left[(z-k_0)^{-i\nu} e^{\frac{X(z)}{2\pi i}\big\{\big(\int_{-\infty}^{E_1} + \int_{E_2}^{k_0}\big) \frac{\ln(1 - |r(s)|^2)}{X(s)(s - z)} ds
+ \int_{E_1}^{E_2} \frac{i\arg(r(s))}{X_+(s)(s - z)} ds\big\}}\right].
\end{align*}
\end{enumerate}
\end{theorem}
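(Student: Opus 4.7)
The plan is to prove Theorem \ref{maintheorem} by combining the nonlinear steepest descent method of Deift and Zhou \cite{DeiftZhou} with the dressing-method principle that a sufficiently regular solution of a $2\times 2$ RH problem of the form \eqref{RHm} automatically gives rise to a solution of \eqref{nls} via the reconstruction formula \eqref{ulim}. The central technical task is to reduce \eqref{RHm}, for $(x,t)\in S$ with $t$ large, to a small-norm problem solvable by Neumann series; the asymptotic expansions in part (d) will then be extracted by tracking the contributions of the parametrices that drive this reduction.

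For the deformation I would introduce the $g$-function $g(k)=(2k-2\beta+\zeta)X(k)$ from the statement, chosen so that $2i(kx+2k^2 t)-2itg(k)$ has controlled sign on each piece of $\Gamma$, and conjugate $m$ by $e^{itg(k)\sigma_3}$. I would then conjugate by a Szegő-type scalar RH factor $D(k)$ whose jump is $1-|r(k)|^2$ on $(-\infty,E_1)\cup(E_2,k_0)$ and $e^{-i\arg r(k)}$ on $[E_1,E_2]$, normalized so that $D(\infty)$ equals $D(0,\infty)$ from \eqref{D0infinitydefinition}. These two conjugations turn the jump on $(-\infty,k_0]$ into a product amenable to the standard lens factorization, while reducing the jump on $[E_1,E_2]$ to a pure model form expressible through the function $r^b$ of \eqref{rb}. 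I then open lenses along the segments $(-\infty,E_1)$, $(E_2,k_0)$, and $(k_0,\kappa_+)$, after which the jumps on the lens boundaries are exponentially small in $t$ outside any fixed neighbourhood of $\{E_1,E_2,k_0\}$. Opening the lenses requires analytic approximations of $r_1$, $h$, and $r$ off $\Gamma$, produced by the Deift--Zhou Taylor-expand-and-cutoff scheme; the novelty, flagged in the introduction, is that $r$ has a square-root singularity at $E_1$ and $E_2$ (encoded by \eqref{seriesexpansionr}), so one must split $r$ into an explicit singular building block plus a sufficiently smooth remainder and extend only the remainder in the usual way.

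After deformation, the RH problem is solved modulo a small error by a global parametrix $m^{(\infty)}$, built from $\Delta$ and the model data $r^b_\pm$ on $[E_1,E_2]$ and available in closed form, together with a local parametrix $m^{(k_0)}$ around the stationary point $k_0$ furnished by the standard parabolic cylinder model of \cite{DeiftZhou}. The factors $t^{\pm i\nu}$, $\beta^X_{k_0}$, $\psi(k_0)$, and $D_b(k_0)$ appearing in (d) come entirely from matching $m^{(k_0)}$ to $m^{(\infty)}$ on the boundary of the disk around $k_0$. Setting $E(x,t,k)=m(x,t,k)m^{(\infty)}(k)^{-1}$ outside the disk and $E=m\,(m^{(k_0)})^{-1}$ inside it, one obtains an error RH problem with jump $I+O(t^{-1/2}\ln t)$ uniformly for $\zeta\in[0,c_0]$, solvable by Neumann series for $t\geq T$. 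This yields part (a), with uniqueness following from the standard vanishing-lemma argument exploiting the Schwarz symmetry of $v$.

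Once $m$ is in hand, parts (b), (c), and (d) are obtained by expanding
$$
m(x,t,k) = I + \frac{m_1(x,t)}{k} + \frac{m_2(x,t)}{k^2} + o(k^{-2})
$$
nontangentially as $k\to\infty$ and reading off $u$ and $u_x$ from the entries of $m_1$ and $m_2$ via \eqref{ulim} and \eqref{uxlim}. That $u$ thus defined solves \eqref{nls} is the content of the dressing-method verification: differentiating the RH problem in $x$ and $t$, using that the jump depends on $(x,t)$ only through $e^{\pm 2i(kx+2k^2 t)}$, one shows that $m_x+ik[\sigma_3,m]$ and the analogous $t$-expression satisfy the same homogeneous RH problem as $m$, so matching the large-$k$ expansions reproduces the Lax pair for $u$ and hence \eqref{nls}. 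The explicit forms of $u_a$ and $u_b$ in \eqref{asymptoticexpansionformulas} are then produced by substituting the leading contribution of $E$ from the parabolic cylinder parametrix into these reconstruction formulas and collecting terms. I expect the main obstacle to be the analytic approximation of $r$ near the branch points and the accompanying error bounds: this is what determines whether the lens-jump contribution to the small-norm estimate is $O(t^{-1/2}\ln t)$ or worse, and it is exactly here that the specific cancellation relations \eqref{cancellation_assumption} in Assumption \ref{assumptions-spectral-functions-branch} must be exploited.
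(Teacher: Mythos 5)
Your plan reproduces the paper's strategy almost step for step: $g$-function conjugation with $g=(2k-2\beta+\zeta)X$, a Szegő-type scalar conjugation, Deift--Zhou lens opening with analytic approximations adapted to the square-root singularities of $r$ at $E_1,E_2$, a global parametrix built from $\Delta$, a parabolic-cylinder local parametrix at $k_0$, a Neumann-series solution of the resulting small-norm RH problem, reconstruction of $u$ from the large-$k$ expansion, and a dressing argument for \eqref{nls}. Two points where your sketch diverges from what the paper actually has to do, and where it would come apart if taken at face value, are worth flagging.

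First, the Szegő function $D(\zeta,\cdot)$ has its right endpoint at the moving point $k_0(\zeta)$, not at $\kappa_+$; hence $D(\zeta,\infty)$ is $\zeta$-dependent and does not equal $D(0,\infty)$. The fixed constant $D(0,\infty)^2$ in \eqref{ulim} is part of the reconstruction normalization, and it is the ratio $D(0,\infty)^2 D(\zeta,\infty)^{-2}$ that produces the factor $e^{-\frac{1}{\pi i}\int_{k_0}^{\kappa_+}\frac{\ln(1-|r|^2)}{X}ds}$ in the leading asymptotics; if you normalize $D$ so that $D(\infty)=D(0,\infty)$ for all $\zeta$ you lose both the correct jump on $(E_2,k_0)$ and this factor.

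Second, you invoke a vanishing-lemma/Schwarz-symmetry argument for uniqueness, but the paper explicitly notes that a vanishing lemma is not available here (the jump on $[E_1,E_2]$ is degenerate since $1-|r|^2=0$ there, and the contour has the half-line structure with jumps off $\R$). The paper instead obtains both existence and uniqueness by showing the chain of transformations $m\mapsto m^{(1)}\mapsto\cdots\mapsto\hat m$ is an equivalence of $\dot E^2$-RH problems (its Lemma on deformations of Smirnoff-class problems), and that in turn requires the nontrivial verification, carried out in Section 7.1, that the combined transformation matrix is bounded near $E_1,E_2$; this is where the cancellation relations \eqref{cancellation_assumption} and the precise $E_2$-asymptotics of $D$ and $r_{2,a}$ (giving $D^2 r_{2,a}e^{-2itg}+i=O(|k-E_2|^{1/2})$) are used. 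That boundedness check, rather than a vanishing lemma, is the missing ingredient in your outline; the cancellation relations you cite enter there and in the Szegő-function lemma, not primarily in the construction of the analytic approximation of $r$.
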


Before proving Theorem \ref{maintheorem} we will make a series of remarks as well as provide a class of spectral functions satisfying our assumptions.

\begin{remark}[Behaviour on the Boundary]
For \(x=0\) we have \(k_0(0)=\kappa_+\). Thus the exponential factor \(e^{-\frac{1}{\pi i}\int_{k_0}^{\kappa_+} \frac{\ln(1 - |r(s)|^2)}{X(s)} ds}\) 
appearing in the leading terms  of the asymptotic expansions of \(u\) and \(u_x\) is equal to \(1\).
Furthermore, Assumption  \ref{assumptions-spectral-functions} \textit{(e)} states that \(r(\kappa_+)=0\) which implies \(\nu=0\) and \(\beta^X_{k_0}=0\) for \(x=0\).
Hence the subleading terms \(u_a\) and \(u_b\) in the asymptotic expansions of \(u\) and \(u_x\) vanish for \(x=0\).
In conclusion, on the boundary \(x=0\), the expansions \eqref{asymptoticexpansionformulas} read as 
$u(0,t) = \alpha e^{i\omega t}+O\!\left(\frac{\ln t}{t}\right)$ and $u_x(0,t)=2i\alpha\beta e^{i\omega t}+O\!\left(\frac{\ln t}{t}\right)$.
Thus, since \(c=2i\alpha\beta\) (cf. \eqref{parameterfamily}), the boundary values of \(u\) indeed satisfy \eqref{nls-boundary-values}. 
\end{remark}

\begin{remark}
Instead of using formula \eqref{uxlim}, the asymptotics of \(u_x\) can alternatively be  computed by differentiating \eqref{ulim}, see Remark \ref{comparisonuxderivation}.
\end{remark}

\begin{remark}
The assumptions on \(r\) on the branch cut imply that we can always choose \(\arg\) as stated in part (b) of Theorem \ref{maintheorem}.
Furthermore, since 
$$
-\frac1{2\pi i} \int_{E_1}^{E_2} \frac{2\pi i }{X_+(s)} ds=i\pi,
$$
a different choice of the branch of \(\arg\) changes the sign of \(D(0,\infty)\) but does not affect the final result.
\end{remark}


\begin{remark}[Derivation of the Model Data]\label{derivationofmodeldata}
The goal of this paper is to construct solutions of \eqref{nls} with boundary values of the form \eqref{nls-boundary-values} and decay as \(x\to\infty\), using the Riemann--Hilbert approach.
To find a candidate for the spectral data let us consider a particular set of initial and boundary data:
Assume that our solution \(u\) coincides with the background solution \(u^b\) (cf. \eqref{backgroundsolution}) on the boundary and let us assume that \(u(x,0)=0\) for \(x> 0\).
The uniform transform method \cite{FokasUniformTransform,FokasUnifiedApproach} then leads to a Riemann--Hilbert problem with jumps along \(\Gamma\) and a jump matrix \(v^b\) given by
\begin{align}\label{jumpmatrixmodelproblem}
v^b(x,t,k)=
\begin{cases}
\begin{pmatrix}1 - |r_1^b(k)|^2 & \overline{r_1^b(k)} e^{-2i(kx+2k^2t) } \\
-r_1^b(k) e^{2i(kx+2k^2 t)} & 1 \end{pmatrix}, &k\in\bar{D}_1\cap\bar{D}_4,
\\
\begin{pmatrix}1 & 0 \\
-h^b(k) e^{2i(kx+2k^2 t)} & 1 \end{pmatrix},&k\in\bar{D}_1\cap\bar{D}_2,
\\
 \begin{pmatrix}1 - |r_+^b(k)|^2 & \overline{r_+^b(k)} e^{-2i(kx+2k^2 t)} \\
-r_+^b(k) e^{2i(kx+2k^2 t)} & 1 \end{pmatrix},&k\in\bar{D}_2\cap\bar{D}_3,
\\
\begin{pmatrix}1 & \overline{h^b(\bar{k})} e^{-2i(kx+2k^2 t)} \\
0 & 1 \end{pmatrix},&k\in\bar{D}_3\cap\bar{D}_4,
\end{cases}
\end{align}
with \(r_1^b=0\) and \(h^b=r^b\), where \(r^b\) is given by \eqref{rb} (see also \cite{tperiodicI}).
This jump matrix is similar to the one given in Theorem \ref{maintheorem}. 
Furthermore, the function \(r_+^b\) satisfies Assumption \ref{assumptions-spectral-functions-branch} as well as Assumption \ref{assumptions-spectral-functions} (c) and (d). 
However,  \(r_+^b(k)\) behaves like \(\frac1k\) as \(k\to\infty\). 
This is expected since the initial and boundary data are not compatible at \(0\).
In order to ensure that the solutions generated by our assumptions have the desired regularity, we thus modify the spectral data \(r^b_+\), \(r_1^b\) and \(h^b\) appearing in \(v^b\) such that \(r^b_+\) has enough decay at infinity (cf. \eqref{seriesexpansionrinfinity}) and such that it vanishes at \(\kappa_+\).
\end{remark}

\begin{remark}[Construction of Admissible Data] 
We will give an example of independent spectral data satisfying Assumptions \ref{assumptions-spectral-functions} and \ref{assumptions-spectral-functions-branch}.
Note that the RH problem with jump matrix \(v^b\) (cf. \eqref{jumpmatrixmodelproblem}) is equivalent (after a simple transformation) to the RH problem with jump matrix \(v^b\) but with  \(r_1^b=0\) and \(h^b=r^b\) replaced by \(r_1^b=\tau r^b\) and \(h^b=(1-\tau) r^b\) for fixed \(\tau\in[0,1]\).
Now let \(r_{\infty}\colon\R\to\C\) be a smooth function such that \(r_{\infty}\) has the same asymptotic behaviour as \(r^b\) at \(\infty\) up to and including order \(4\), \(r_{\infty}(\kappa_+)=r^b(\kappa_+)\), \(r_\infty\) vanishes in a neighbourhood of \([E_1,E_2]\) and such that \(|r-r_\infty|<1\) on \((-\infty,\kappa_+]\setminus[E_1,E_2]\).
Then the data \(h=\tau r^b\) and \(r_1=(1-\tau)r^b-r_\infty\) satisfy Assumptions \ref{assumptions-spectral-functions} and \ref{assumptions-spectral-functions-branch} for every \(\tau\in[0,1]\).
\end{remark}

The proof of Theorem \ref{maintheorem} is divided into two parts. 
In the first part, we show using the steepest descent analysis that the RH problem \eqref{RHm} has a solution and that the limit \eqref{ulim} exists. 
Moreover, we will derive the asymptotics of this solution from which the long time asymptotics of \(u\) and \(u_x\) follow. 
In the second part, we will then show that the function \(u\) defined by \eqref{ulim} is indeed a solution of \eqref{nls} by applying the dressing type arguments \cite{ZakharovShabat1,ZakharovShabat2}.

\section{Transformations of the Riemann--Hilbert problem}\label{sectiontransformationRHP}

Suppose the triple \((\alpha,\omega,c)\) belongs to the family \eqref{parameterfamily} and suppose that \(r_1\), \(h\) and \(r=r_1+h\) are functions satisfying Assumption \ref{assumptions-spectral-functions} and \ref{assumptions-spectral-functions-branch}. Suppose furthermore that \(c_0\in (0,4\beta-2\alpha)\) is given.

In order to show existence as well as determine the long time asymptotics of the solution $m$ of the RH problem \eqref{RHm}, we will transform the RH problem into a small norm RH problem.
Starting with $m$ we will define functions $m^{(j)}(x,t,k)$, $j = 1, 2,3,4$, such that each $m^{(j)}$ is analytic in \(\C\setminus\Gamma^{(j)}\) and satisfies the jump condition
\begin{align}\label{RHmj}
m^{(j)}_+(x,t,k) = m^{(j)}_-(x, t, k) v^{(j)}(x, t, k) \quad \text{for a.e.} \ k \in \Gamma^{(j)},
\end{align}
where the contours $\Gamma^{(j)}$ and jump matrices $v^{(j)}$ are specified below.

\subsection{First transformation}\label{firsttransformation-new}

Our first step is to introduce a \(g\)-function in order to normalize the oscillatory and exponentially large factors in the jump matrix.
Since our goal is to construct solutions approaching the background plane wave solution $u^b(x,t)$ defined in \eqref{backgroundsolution}, we choose (cf. equation (5.8) in \cite{BIK2009})
\begin{align}\label{gdef}
g(k) \equiv g(\zeta, k) = \Omega(k) + \zeta X(k) = (2k-2\beta+\zeta)X(k).
\end{align}
The differential
$$dg(k) = \frac{(4k+\zeta)(\beta+k) - 2\alpha^2}{X(k)}dk$$
can be written as
\begin{align}\label{gfunction-derivative}
dg(k) = 4\frac{(k-k_1)(k-k_0)}{X(k)}dk,
\end{align}
where the zeros $k_i \equiv k_i(\zeta) \in \R$, \(i=0,1\), are located at
$$k_0 = -\frac{4\beta + \zeta}{8} + \sqrt{\frac{\alpha^2}{2} + \Big(\frac{4\beta - \zeta}{8}\Big)^2},\quad k_1 = -\frac{4\beta + \zeta}{8} - \sqrt{\frac{\alpha^2}{2} + \Big(\frac{4\beta - \zeta}{8}\Big)^2}.$$
For $\zeta = 0$, we have
$$E_1 < k_1 < E_2 < \kappa_+ = k_0.$$
As $\zeta$ increases, $k_0$ moves to the left until it hits $E_2$ for $\zeta = 4\beta - 2\alpha$. This determines the right boundary of the plane wave sector. 

%
The signature table of $\im g$ for \(0\leq \zeta<4\beta-2\alpha\) 
 is shown in Figure \ref{signature_table_Img.pdf}. 
 The non-horizontal branch of the level set where \(\im g=0\) asymptotes to the vertical line \(\re k=-\zeta/4\).
%
%
%
\begin{figure}
\begin{center}
 \begin{overpic}[width=.7\textwidth]{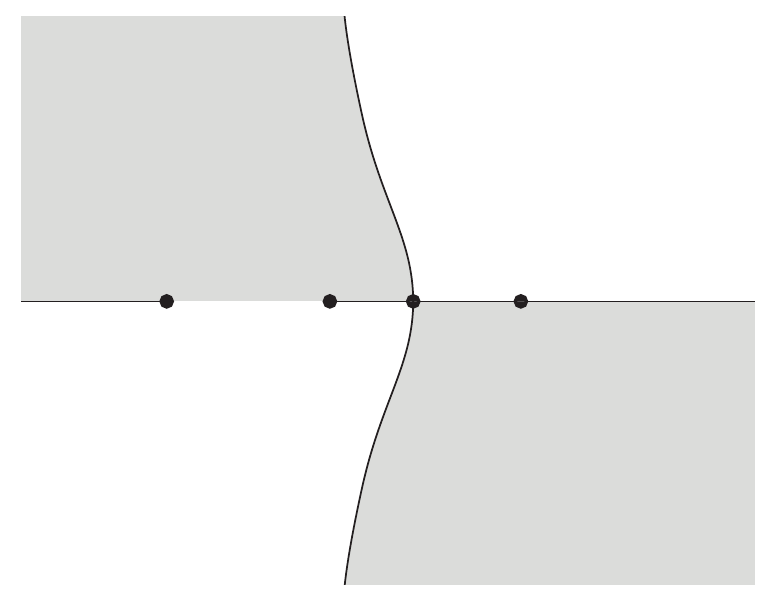}
      \put(20,34){\small $E_1$}
      \put(41,34){\small $E_2$}
      \put(66,34){\small $\kappa_+$}
      \put(54,34){\small $k_0$}
      \put(12,61){\small $\im g<0$}
      \put(73,14){\small $\im g<0$}
      \put(12,14){\small $\im g>0$}
      \put(73,61){\small $\im g>0$}
    \end{overpic}
     \begin{figuretext}\label{signature_table_Img.pdf}
     The signature table of \(\im g\) for \(0\leq \zeta<4\beta-2\alpha\).   The region where \(\im g<0\) is shaded. The solid line represents the level set where \(\im g=0\).
         \end{figuretext}
     \end{center}
\end{figure}

\begin{lemma}\label{glemma}
Let \(0\leq\zeta<4\beta-2\alpha\).
The function $g(k)$ defined in (\ref{gdef}) has the following properties:
\begin{enumerate}[$(a)$]
\item $g(k) - 2k^2- \zeta k$ is an analytic and bounded function of $k \in \C \setminus [E_1,E_2]$ with continuous boundary values on $(E_1, E_2)$.

\item $g(k)$ obeys the symmetry 
\begin{align}\label{gsymm}
  g(k) = \overline{g(\bar{k})}, \qquad k \in \C\setminus [E_1, E_2].
\end{align}

\item $g(k)$ satisfies the jump condition
\begin{align}\label{gjump}
  g_+(k) + g_-(k) = 0, \qquad k \in [E_1, E_2].
\end{align}

\item $g(k)$ satisfies
\begin{align*}
g(k) = 2k^2 + \zeta k + g_\infty(\zeta) + O(k^{-1}), \qquad k \to \infty,
\end{align*}
where
\begin{align}\label{defginfty}
g_\infty(\zeta) = \beta \zeta - \alpha^2 - 2\beta^2.
\end{align}
\end{enumerate}
\end{lemma}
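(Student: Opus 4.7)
The lemma is essentially a collection of four elementary properties of $g(k)=(2k-2\beta+\zeta)X(k)$, and the entire proof is driven by the behaviour of $X(k)=\sqrt{(k-E_1)(k-E_2)}$. My plan is to treat the four items in the order $(b)\to(c)\to(a)\to(d)$, because the symmetry and jump relation make the statement in $(a)$ essentially automatic, and because $(d)$ is a direct Laurent expansion at infinity.

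First, for $(b)$, I would invoke the construction of $X$: the branch cut lies on the real segment $[E_1,E_2]$ and the branch is fixed by $X(k)=k+\beta+O(k^{-1})$, which is real for real $k$ with $|k|$ large. Hence $X(k)$ and $\overline{X(\bar k)}$ are two analytic functions on $\C\setminus[E_1,E_2]$ that agree on a real neighbourhood of infinity, so by analytic continuation they coincide everywhere. Multiplying by the real-coefficient linear factor $2k-2\beta+\zeta$ gives \eqref{gsymm}. For $(c)$, I would use the standard observation that across the cut $[E_1,E_2]$ the square root $X$ flips sign, i.e.\ $X_+(s)=-X_-(s)$ for $s\in(E_1,E_2)$; since the prefactor $2k-2\beta+\zeta$ is entire, this immediately yields $g_+(s)+g_-(s)=(2s-2\beta+\zeta)(X_+(s)+X_-(s))=0$.

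For $(a)$, the function $X(k)$ is analytic on $\C\setminus[E_1,E_2]$ by definition and extends continuously to either side of $(E_1,E_2)$ with boundary values $\pm i\sqrt{(s-E_1)(E_2-s)}$. Therefore $g(k)$ itself is analytic on $\C\setminus[E_1,E_2]$ with continuous one-sided boundary values on $(E_1,E_2)$. To show the subtracted function $g(k)-2k^2-\zeta k$ is bounded, I would use the Laurent expansion of $X$ at $\infty$, which is handled simultaneously in $(d)$.

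For $(d)$, I would expand
\begin{equation*}
X(k)=\sqrt{(k+\beta)^2-\alpha^2}=k+\beta-\frac{\alpha^2}{2k}+O(k^{-2}),\qquad k\to\infty,
\end{equation*}
and multiply by $(2k-2\beta+\zeta)$ to obtain
\begin{equation*}
g(k)=2k^2+\zeta k+\bigl(\beta\zeta-2\beta^2-\alpha^2\bigr)+O(k^{-1}),
\end{equation*}
which identifies $g_\infty(\zeta)=\beta\zeta-\alpha^2-2\beta^2$ as in \eqref{defginfty}. Combined with the analyticity and continuous extension to the cut from step $(a)$, the $O(1)$ asymptotics at infinity imply that $g(k)-2k^2-\zeta k$ is uniformly bounded on $\C\setminus[E_1,E_2]$, completing $(a)$.

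I do not anticipate any genuine obstacle here; the only care needed is to make sure the expansion of $X$ is carried to the order $k^{-1}$ so that the constant term $g_\infty(\zeta)$ is correctly captured, and to note that the reflection and jump identities in $(b),(c)$ depend only on $X$ through its normalization $X(k)=k+\beta+O(k^{-1})$ and the location of the cut on the real axis.
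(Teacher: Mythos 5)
Your proof is correct. The paper itself does not provide a proof of Lemma~\ref{glemma}: it is stated without justification and treated as an elementary consequence of the explicit formula $g(k)=(2k-2\beta+\zeta)X(k)$. Your argument fills in exactly what the author leaves implicit, and the computations check out: the reflection $\overline{X(\bar k)}=X(k)$ by uniqueness of analytic continuation from the real axis near infinity, the sign flip $X_+=-X_-$ across $[E_1,E_2]$ giving $(c)$, continuity of the boundary values of $X$ on the open cut giving $(a)$ once boundedness at infinity is settled, and the expansion $X(k)=k+\beta-\tfrac{\alpha^2}{2k}+O(k^{-2})$ which, upon multiplication by $2k-2\beta+\zeta$, yields the constant $\beta\zeta-\alpha^2-2\beta^2$ as required in $(d)$.
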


Now define $m^{(1)}(x,t,k)$ by
$$m^{(1)}(x,t,k) = e^{-itg_\infty(\zeta)\sigma_3} m(x,t,k) e^{it( g(k) - 2k^2-\zeta k )\sigma_3}.$$
Then $m^{(1)}$ satisfies the jump condition (\ref{RHmj}) with $j = 1$, where $\Gamma^{(1)} = \Gamma$ and the jump matrix $v^{(1)}$ is given by
$$v^{(1)}(x,t,k) = e^{-it(g_-(k)-2k^2-\zeta k)\sigma_3} v(x,t,k) e^{it(g_+(k)-2k^2-\zeta k)\sigma_3}.$$
Using \eqref{gjump} as well as the assumption that \(|r|=1\) on \([E_1,E_2]\), we find
\begin{align*}
& v_1^{(1)} = \begin{pmatrix}1 - |r_1(k)|^2 & \overline{r_1(k)} e^{-2itg(k) } \\
-r_1(k) e^{2itg(k)} & 1 \end{pmatrix},
\qquad 
v_2^{(1)} = \begin{pmatrix}1 & 0 \\
-h(k) e^{2itg(k)} & 1 \end{pmatrix},
	\\ 
& v_3^{(1)} = \begin{pmatrix}1 - |r(k)|^2 & \overline{r(k)} e^{-2itg(k) } \\
-r(k) e^{2itg(k)} & 1 \end{pmatrix},
\qquad 
v_4^{(1)} = \begin{pmatrix}1 & \overline{h(\bar{k})} e^{-2itg(k)} \\
0 & 1 \end{pmatrix},
	\\
& v_5^{(1)} = \begin{pmatrix}0& \overline{r(k)} \\
-r(k) & e^{-2itg_+(k)} \end{pmatrix},
\end{align*}
where \(v^{(1)}_j\) for \(j\in\{1,2,3,4,5\}\) denotes the restriction of \(v^{(1)}\) to the subcontour of \(\Gamma^{(1)}=\Gamma\) labeled by \(j\) in Figure \ref{Gamma.pdf}.

\subsection{Second transformation}\label{secondtransformation-new}

\begin{figure}
\begin{center}
 \begin{overpic}[width=.7\textwidth]{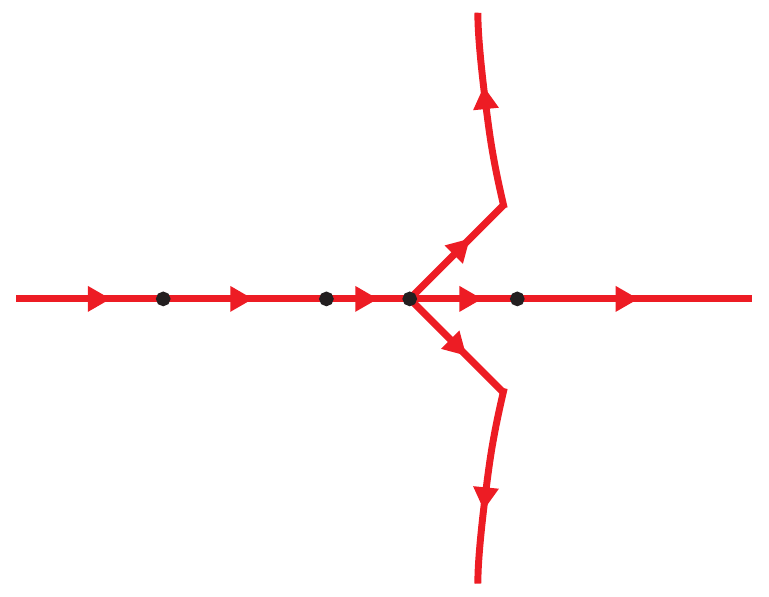}
      \put(101,37.5){\small $\Gamma^{(2)}$}
      \put(19,34){\small $E_1$}
      \put(40,34){\small $E_2$}
      \put(67,35){\small $\kappa_+$}
      \put(52,34){\small $k_0$}
      \put(58,47){\small $2$}
      \put(58,28){\small $4$}
      \put(81,34){\small $1$}
      \put(60,65){\small $2$}
      \put(60,10){\small $4$}
      \put(12,34){\small $3$}
      \put(47,34){\small $3$}
      \put(31,34){\small $5$}
      \put(61,34){\small $1$}
    \end{overpic}
     \begin{figuretext}\label{Gamma2-minimal.pdf}
        The contour  $\Gamma^{(2)}$ in the complex $k$-plane. 
         \end{figuretext}
     \end{center}
\end{figure}

The purpose of the second transformation is to deform the non-horizontal part of the contour \(\Gamma=\Gamma^{(1)}\) so that it passes through the critical point \(k_0\). The new contour \(\Gamma^{(2)}\) is shown in Figure \ref{Gamma2-minimal.pdf}.
Define $m^{(2)}(x,t,k)$ by
\begin{align*}
m^{(2)}=m^{(1)}H_2,
\end{align*}
where
\begin{align*}
H_2(\zeta,k)= \begin{cases}\begin{pmatrix}1&0\\h(k)e^{2itg(k)}&1\end{pmatrix},& k\in V_1,
	\\
	\begin{pmatrix}1&\overline{h(\bar{k})}e^{-2itg(k)}\\0&1\end{pmatrix},& k\in V_2,
	\\
	I,&\textrm{elsewhere},
	\end{cases}
\end{align*}
with the domains \(V_1\) and \(V_2\) given as in Figure \ref{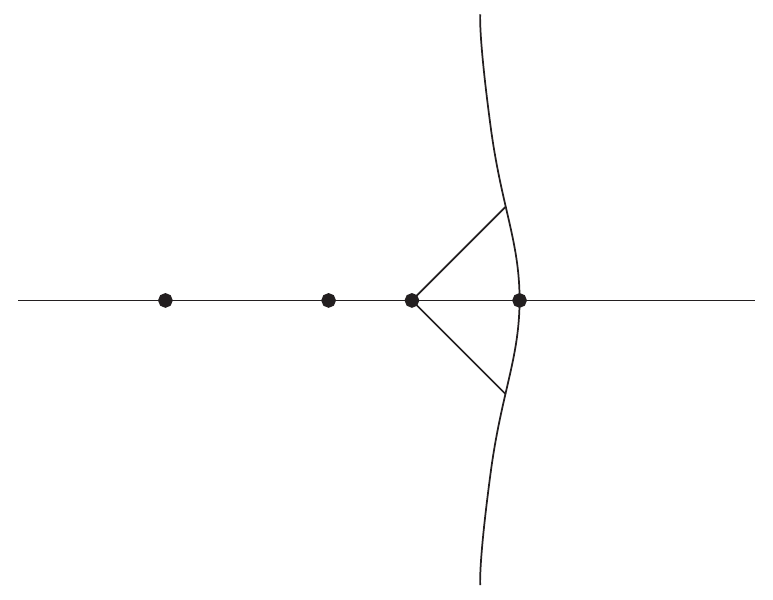}.
\begin{figure}
\begin{center}
 \begin{overpic}[width=.7\textwidth]{Ujs-hjumpdeformation.pdf}
      \put(102,37.5){\small $\re k$}
      \put(18,33){\small $E_1$}
      \put(40,33){\small $E_2$}
      \put(69,35){\small $\kappa_+$}
      \put(51,34){\small $k_0$}
      \put(61,42){\small $V_1$}
      \put(61,33){\small $V_2$}
    \end{overpic}
     \begin{figuretext}\label{Ujs-hjumpdeformation.pdf}
        The open subsets $V_1$ and $V_2$ of the complex $k$-plane. 
         \end{figuretext}
     \end{center}
\end{figure}
Then $m^{(2)}$ satisfies the jump condition (\ref{RHmj}) with $j = 2$, where the jump $v^{(2)}$ across \(\Gamma^{(2)}\) is given by
\begin{align*}
& v_1^{(2)} = \begin{pmatrix}1 - |r_1(k)|^2 & \overline{r_1(k)} e^{-2itg(k)} \\
-r_1(k) e^{2itg(k)} & 1 \end{pmatrix},
\qquad 
v_2^{(2)} = \begin{pmatrix}1 & 0 \\
-h(k) e^{2itg(k)} & 1 \end{pmatrix},
	\\ 
& v_3^{(2)} = \begin{pmatrix}1 - |r(k)|^2 & \overline{r(k)} e^{-2itg(k)} \\
-r(k) e^{2itg(k)} & 1 \end{pmatrix},
\qquad 
v_4^{(2)} = \begin{pmatrix}1 & \overline{h(\bar{k})} e^{-2itg(k)} \\
0 & 1 \end{pmatrix},
	\\
& v_5^{(2)} = \begin{pmatrix}0& \overline{r(k)} \\
-r(k) & e^{-2itg_+(k)} \end{pmatrix},
\end{align*}
and the subscripts refer to Figure \ref{Gamma2-minimal.pdf}.
To find \(v_{10}^{(2)}\) we have used that \(r=r_1+h\) on \((-\infty,\kappa_+]\).

\subsection{Third transformation}\label{thirdtransformation-new}

The jump matrix \(v_3^{(2)}\) has the wrong factorization. Hence we introduce $m^{(3)}(x,t,k)$ by 
$$m^{(3)} = D(\zeta,\infty)^{\sigma_3} m^{(2)} D(\zeta,k)^{-\sigma_3},$$
where
\begin{align}
D(\zeta,\infty)&= e^{-\frac{1}{2\pi i}\big\{\big(\int_{-\infty}^{E_1} + \int_{E_2}^{k_0}\big) \frac{\ln(1 - |r(s)|^2)}{X(s)} ds
+ \int_{E_1}^{E_2} \frac{i \arg(r(s))}{X_+(s)} ds\big\}},
\label{Dinfinitydefinition}
\\\label{Ddef}
D(\zeta, k) &= e^{\frac{X(k)}{2\pi i}\big\{\big(\int_{-\infty}^{E_1} + \int_{E_2}^{k_0}\big) \frac{\ln(1 - |r(s)|^2)}{X(s)(s - k)} ds
+ \int_{E_1}^{E_2} \frac{i\arg(r(s))}{X_+(s)(s - k)} ds\big\}},
\end{align}
for \(\zeta\in[0,c_0]\) and  \(k\in\C\setminus(-\infty,k_0]\).
Here the function \(\arg\) is chosen as in \eqref{D0infinitydefinition}. Note that this definition is consistent with \eqref{D0infinitydefinition} for \(\zeta=0\).
For \(\epsilon>0\) and \(z\in\C\) let \(D_\epsilon(z)=\{k\in\C\,|\, |z-k|<\epsilon\}\) denote the open disk of radius \(\epsilon\) around \(z\).

\begin{lemma}\label{Dlemma}
For each \(\zeta\in[0,c_0]\), the function $D(\zeta, k)$ defined in (\ref{Ddef}) has the following properties:
\begin{enumerate}[$(a)$]
\item $D(\zeta,k)$ is an analytic function of $k \in \C \setminus (-\infty, k_0]$ with continuous boundary values on $(-\infty, k_0) \setminus \{E_1, E_2\}$.

\item  $D(\zeta,\cdot)-D(\zeta,\infty)$, $D(\zeta,\cdot)^{-1}-D(\zeta,\infty)^{-1} \in \dot{E}^2(\C \setminus (-\infty, k_0])$. 

\item \(D(\zeta,k)^{\pm1}=D(\zeta,\infty)^{\pm1}+O(k^{-1})\) as \(k\to\infty\)
uniformly with respect to \(\zeta\in[0,c_0]\).
Furthermore, \(|D(\zeta,\infty)|=1\).

\item $D(\zeta,k)$ obeys the symmetry  
\begin{align}\label{Dsymm}
  D(\zeta,k)  \overline{D(\zeta,\bar{k})}=1, \qquad k \in \C \setminus (-\infty, k_0].
\end{align}
In particular, \(D_+(\zeta,k)\overline{D_-(,\zeta,k)}=1\) for \(k\in(-\infty,k_0)\setminus\{E_1,E_2\}\).

\item $D(\zeta,k)$ satisfies the jump conditions
\begin{align*}
&  D_+(\zeta,k) = D_-(\zeta,k)(1 - |r(k)|^2), \qquad k \in (-\infty, E_1) \cup (E_2, k_0),
	\\
&  D_+(\zeta,k)D_-(\zeta,k) = r(k), \qquad k \in (E_1, E_2).
\end{align*}

\item As $k$ approaches the branch points $E_1$ and $E_2$, \(D(\zeta,k)\) exhibits the asymptotics
\begin{align*}
\begin{cases}
D(\zeta,k)=(k-E_1)^{\frac14}e^{\ln(2q_{1,0}q_{1,1})/2+i\arg(i q_{1,0})/2}(1+O(|k-E_1|^{1/2})), &k\to E_1,
	\\
D(\zeta,k)= (E_2-k)^{\frac14}e^{\ln(-2q_{2,0}q_{2,1})/2+i\arg(i q_{2,0})/2}(1+O(|k-E_2|^{1/2})), &k\to E_2,
\end{cases}
\end{align*}
 uniformly for \(\im k\geq 0\) and \(\zeta\in[0,c_0]\).
Here \(z\mapsto z^{\frac14}\) denotes the principal branch of the \(4\)th-root, positive on \((0,\infty)\) with a branch cut along \((-\infty,0]\).

Due to \eqref{Dsymm} similar asymptotic formulas hold for \(\im k\leq 0\).

\item Let \(\epsilon>0\) be given such that \(D_\epsilon(k_0)\cap[E_1,E_2]=\emptyset\) for all \(\zeta\in [0,c_0]\).
Put \(\nu\equiv \nu(r(k_0))\!:=-\frac1{2\pi}\ln(1-|r(k_0)|^2)\) and define
\begin{align*}
D_b(\zeta,k)=(k-k_0)^{-i\nu}D(\zeta,k),\qquad k\in D_\epsilon(k_0)\setminus(-\infty,k_0],
\end{align*}
where the logarithm is defined using the principal branch, real on \((0,\infty)\), with a branch cut along \((-\infty,k_0]\).
Then \(D_b\) is a continuous function of \(k\in D_\epsilon(k_0)\setminus(-\infty,k_0]\) with continuous boundary values along \((-\infty,k_0)\), taking a definite limit as \(k\) approaches \(k_0\) nontangentially. 
Furthermore, \(D_b(\zeta,k)\) and \(D_b(\zeta,k)^{-1}\) are uniformly bounded in \(D_\epsilon(k_0)\) with respect to \(\zeta\in[0,c_0]\).
Additionally, the following estimate is valid:
\begin{align*}
&\left|\frac{D_b(\zeta,k)}{D_b(\zeta,k_0)}-1\right|\leq C(\epsilon)|k-k_0|(1+|\ln|k-k_0||)
\end{align*}
for \(k\in D_\epsilon(k_0)\)and  \(\arg(k_0-k)\in \{\pm\frac\pi4,\pm\frac{3\pi}4\}\), uniformly in \(\zeta\in [0,c_0]\).

\item For every \(\epsilon>0\), the functions \(D\) and \(D^{-1}\) are uniformly bounded on \(\C\setminus (D_\epsilon(E_1)\cup D_\epsilon(E_2)\cup D_\epsilon(k_0(\zeta)))\) with respect to \(\zeta\in[0,c_0]\).
\end{enumerate}
\end{lemma}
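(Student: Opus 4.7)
The function $D(\zeta,k)$ is the scalar Szegő-type solution of the Riemann--Hilbert problem described by the jumps in (e), built as an exponential of a Cauchy integral with weight $X(k)$. The plan is to derive (a), (d), (e) directly from the Plemelj--Sokhotski formulas and Schwarz reflection, handle (b), (c), (h) by standard Cauchy integral estimates at infinity and away from singular points, and reserve the real work for the local asymptotics (f) and (g).

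Writing $\log D(\zeta,k)=X(k)\Phi(\zeta,k)$ where $\Phi$ denotes the bracketed Cauchy integral, part (a) follows because $X$ is analytic off $[E_1,E_2]\subset(-\infty,k_0]$ and $\Phi$ is analytic off its integration contour, and the regularity of $r$ yields continuous boundary values away from $E_1$, $E_2$. For (d), note that on $(E_1,E_2)$ one has $X_+=-X_-=iY$ for some real $Y>0$, so $i\arg r/X_+=\arg r/Y$ is real, and $\ln(1-|r|^2)/X$ is real on the outer intervals; combined with $\overline{X(\bar k)}=X(k)$ this gives $\overline{\log D(\zeta,\bar k)}=-\log D(\zeta,k)$ and hence \eqref{Dsymm}. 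For (e), on $(-\infty,E_1)\cup(E_2,k_0)$ Plemelj--Sokhotski gives $(\log D)_+-(\log D)_-=X(k)\cdot\ln(1-|r(k)|^2)/X(k)=\ln(1-|r(k)|^2)$; on $(E_1,E_2)$, using $X_+=-X_-$, we get $(\log D)_++(\log D)_-=X_+(\Phi_+-\Phi_-)=i\arg r(k)$, and since $|r|=1$ there this reads $D_+D_-=r$. Parts (b), (c), and (h) then follow from standard estimates: expanding $X(k)=k+\beta+O(k^{-1})$ and the Cauchy kernel $1/(s-k)$ at infinity yields $D(\zeta,k)=D(\zeta,\infty)+O(k^{-1})$, uniformly in $\zeta\in[0,c_0]$ since $k_0(\zeta)$ varies over a compact subinterval of $(E_2,\kappa_+)$ and the integrands are uniformly controlled; the $\dot{E}^2$ membership reduces to the classical fact that a Cauchy integral of an $L^2$ density is in $\dot{E}^2$; $|D(\zeta,\infty)|=1$ is immediate from (d); and the uniform bound in (h) is clear away from the singularities.

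The main technical step is (f). Near $E_2$, substituting the expansion $r(s)=i(q_{2,0}+q_{2,1}(s-E_2)^{1/2}+q_{2,2}(s-E_2)+\cdots)$ from \eqref{seriesexpansionr} and using \eqref{cancellation_assumption}, namely $q_{2,0}^2=1$ and $2q_{2,0}q_{2,2}=q_{2,1}^2$, we obtain $\ln(1-|r(s)|^2)=\ln(-2q_{2,0}q_{2,1})+\tfrac12\ln(s-E_2)+O((s-E_2)^{1/2})$ for $s\downarrow E_2$ (the argument of the logarithm is positive because $q_{2,0}q_{2,1}<0$), and similarly $i\arg r(s)=i\arg(iq_{2,0})+O((E_2-s)^{1/2})$ for $s\uparrow E_2$. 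The singular contributions to $\Phi(\zeta,k)$ near $E_2$ are captured by the model integrals $\int_{E_2}^{k_0}ds/[X(s)(s-k)]$ and $\int_{E_2}^{k_0}\ln(s-E_2)\,ds/[X(s)(s-k)]$ together with their analogues on $(E_1,E_2)$; evaluating these, e.g.\ via the substitution $s=E_2+u^2$, and using $X(k)\sim\sqrt{E_2-E_1}\,(k-E_2)^{1/2}$ as $k\to E_2$, one finds that $X(k)\Phi(\zeta,k)=\tfrac14\log(k-E_2)+\tfrac12\ln(-2q_{2,0}q_{2,1})+\tfrac{i}{2}\arg(iq_{2,0})+O(|k-E_2|^{1/2})$; exponentiating delivers the stated formula. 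The analysis at $E_1$ is analogous, with the sign convention $q_{1,0}q_{1,1}>0$ from \eqref{cancellation_assumption} making $\ln(2q_{1,0}q_{1,1})$ well defined. The main obstacle is the careful bookkeeping showing that the logarithmic and algebraic singularities combine to produce the exponent $1/4$ with precisely the claimed prefactor.

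For (g), $\nu$ is designed exactly to absorb the logarithmic singularity at $s=k_0$ in the first integrand of $\Phi$: decomposing $\ln(1-|r(s)|^2)$ on $(E_2,k_0)$ as $\ln(1-|r(k_0)|^2)+[\ln(1-|r(s)|^2)-\ln(1-|r(k_0)|^2)]$, the constant piece contributes $\tfrac{X(k)}{2\pi i}\cdot\tfrac{\ln(1-|r(k_0)|^2)}{X(k_0)}\log(k_0-k)+\text{regular}=-i\nu\log(k_0-k)+\text{regular}$ to $\log D(\zeta,k)$, while the remaining piece defines a H\"older continuous function since the bracketed difference vanishes at $s=k_0$. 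Thus $D_b(\zeta,k)=(k-k_0)^{-i\nu}D(\zeta,k)$ extends continuously to $k_0$ along nontangential rays. The quantitative estimate on $D_b(\zeta,k)/D_b(\zeta,k_0)-1$ along $\arg(k_0-k)\in\{\pm\pi/4,\pm 3\pi/4\}$ then follows from the H\"older bound on the regular part together with the elementary inequality $|(k-k_0)^{-i\nu}-1|\leq C|k-k_0|(1+|\ln|k-k_0||)$ valid along these rays; uniformity in $\zeta\in[0,c_0]$ comes from continuous dependence of all constants on $\zeta$.
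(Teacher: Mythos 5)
The structure of your argument matches the paper's, and parts (a)--(e) and (h) are handled adequately via Plemelj--Sokhotski and standard Cauchy integral estimates. However, part (f) has a genuine gap. Your expansion $\ln(1-|r(s)|^2)=\ln(-2q_{2,0}q_{2,1})+\tfrac12\ln(s-E_2)+O((s-E_2)^{1/2})$ is correct (and, incidentally, needs only $q_{2,0}^2=1$, not $2q_{2,0}q_{2,2}=q_{2,1}^2$), but you have not tracked what the $O((s-E_2)^{1/2})$ remainder does to the Cauchy transform. After dividing by $X(s)\sim\sqrt{E_2-E_1}\sqrt{s-E_2}$, that remainder contributes a \emph{nonzero constant} density at $E_2$ (with coefficient $\propto q_{2,1}/q_{2,0}$); its Cauchy transform carries a $\ln|k-E_2|$ singularity, and multiplying by $X(k)\sim\sqrt{E_2-E_1}\sqrt{k-E_2}$ produces a term of order $\sqrt{k-E_2}\ln|k-E_2|$, which is \emph{not} $O(|k-E_2|^{1/2})$. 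An analogous constant density arises from $i\arg r(s)/X_+(s)$ on the $(E_1,E_2)$ side, with coefficient $\propto (q_{2,1}^2+2q_{2,0}q_{2,2})/(q_{2,0}q_{2,1})$. Your list of model integrals ($1/[X(s)(s-k)]$ and $\ln(s-E_2)/[X(s)(s-k)]$) omits the constant-density integral $\int ds/(s-k)$ responsible for exactly this term, so your stated conclusion $X(k)\Phi=\tfrac14\log(k-E_2)+\cdots+O(|k-E_2|^{1/2})$ does not follow from what you wrote. The saving fact is that the two $\ln|k-E_2|$ contributions cancel, and this cancellation is precisely what the relation $2q_{i,0}q_{i,2}=q_{i,1}^2$ in \eqref{cancellation_assumption} enforces. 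You invoke this relation only where it plays no role, and you attribute the significance of \eqref{cancellation_assumption} at $E_1$ solely to the well-definedness of $\ln(2q_{1,0}q_{1,1})$ --- that is the job of the sign condition $(-1)^iq_{i,0}q_{i,1}<0$. Identifying and cancelling the mixed $\sqrt{k-E_i}\ln|k-E_i|$ term is the entire technical content of (f) and cannot be left implicit.

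A smaller point on (g): the inequality $|(k-k_0)^{-i\nu}-1|\leq C|k-k_0|(1+|\ln|k-k_0||)$ is not the right tool, since by construction $D_b$ is $D$ with the $(k-k_0)^{i\nu}$-type singularity removed. The $|k-k_0|(1+|\ln|k-k_0||)$ modulus of continuity for $D_b(\zeta,k)/D_b(\zeta,k_0)-1$ instead comes from the behaviour near the endpoint $k_0$ of the Cauchy transform of the $C^1$ density $s\mapsto[\ln(1-|r(s)|^2)-\ln(1-|r(k_0)|^2)]/X(s)$; a mere H\"older bound would give only $|k-k_0|^\gamma$ for some $\gamma<1$, which is a weaker statement.
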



\begin{proof}
The proof is standard so we will only present a sketch.
The statements  \textit{(a)-(e)} follow from a detailed study of the involved Cauchy-type integrals. 
In particular, \textit{(d)} and \textit{(e)} are a consequence of the Sokhotski--Plemelj theorem.

For \textit{(f)} we will only consider the case of the branch point \(E_2\). The branch point \(E_1\) can be treated similarly. Choose points \(E_l\) and \(E_u\) such  that \(E_1<E_l<E_2<E_u<k_0\).
From the expansion \eqref{seriesexpansionr} together with the series expansions of the logarithm and the root \(X\), it follows that 
\begin{align*}
\frac{1}{2\pi i}\int_{E_2}^{E_u} \frac{\ln(1 - |r(s)|^2)}{X(s)} \frac{ds}{s-k}
	&=\frac{\ln(-2q_{2,0}q_{2,1})}{2\pi i\sqrt{E_2-E_1}}\int_{E_2}^{E_u}\frac1{\sqrt{s-E_2}}\frac{ds}{s-k}\nonumber
	\\
	&\quad+\frac1{4\pi i\sqrt{E_2-E_1}}\int_{E_2}^{E_u}\frac{\ln(s-E_2)}{\sqrt{s-E_2}}\frac{ds}{s-k}\nonumber
	\\
	&\quad+\frac{2q_{2,0}q_{2,2}+q_{2,1}^2}{4\pi iq_{2,0}q_{2,1}\sqrt{E_2-E_1}}\int_{E_2}^{E_u}\frac{ds}{s-k}\nonumber
	\\
	&\quad + \int_{E_2}^{E_u} f(s)\frac{ds}{s-k},
\end{align*}
where \(f\) is a H{\"o}lder continuous  function vanishing at \(E_2\).
A standard study of the integrals appearing on the right hand side above, as presented for example in \cite{Muskhelishvili}, shows that
\begin{align*}
\frac{1}{2\pi i}\int_{E_2}^{E_u} \frac{\ln(1 - |r(s)|^2)}{X(s)} \frac{ds}{s-k}
	&=\frac{\ln(-2q_{2,0}q_{2,1})}{2\sqrt{E_2-E_1}}\frac1{\sqrt{k-E_2}}\nonumber
	\\
	&\quad+\frac1{4\sqrt{E_2-E_1}}\frac{\ln(k-E_2)-i\pi}{\sqrt{k-E_2}}\nonumber
	\\
	&\quad-\frac{2q_{2,0}q_{2,2}+q_{2,1}^2}{4\pi iq_{2,0}q_{2,1}\sqrt{E_2-E_1}}\ln|k-E_2|\nonumber
	\\
	&\quad + R_1(k)
\end{align*}
in a neighbourhood of \(E_2\), where the function \(R_1(k)\) is continuous in \(\overline{\C_+}\)
(see Chapter 1, section 8.6  in \cite{Gakhov} for the power-logarithmic type integral, Chapter 1, \textsection 16, in \cite{Muskhelishvili} for the remainder term and Chapter 4, \textsection 29, in \cite{Muskhelishvili} for the remaining terms).

A similar calculation shows that
\begin{align*}
\frac1{2\pi i}\int_{E_l}^{E_2} \frac{\ln(r(s))}{X_+(s)} \frac{ds}{s-k}
&=\frac{\ln(iq_{2,0})}{2 \sqrt{E_2-E_1}}\frac{1}{\sqrt{k-E_2}}
\\&\quad+\frac{q_{2,1}}{2\pi i q_{2,0}\sqrt{E_2-E_1}}\ln|k-E_2|+R_2(k)
\end{align*}
for \(\im k\geq 0\), where the function \(R_2(k)\) is continuous for \(\im k\geq 0\).

Next we note that near \(E_2\) we have
\begin{align*}
X(k)=\sqrt{k-E_2}(\sqrt{E_2-E_1}+\phi(k)),\quad \im k\geq 0,
\end{align*}
where we take the principal branch of the root with a branch cut along \((-\infty,E_2]\) and \(\phi\) is an analytic function in a neighbourhood of \(E_2\) and satisfies \(|\phi(k)|=O(|k-E_2|)\).
Combining the above calculations, it follows that
\begin{align*}
&\frac{X(k)}{2\pi i}\int_{E_l}^{E_2}\frac{\ln(r(s))}{X_+(s)}\frac{ds}{s-k}+\frac{X(k)}{2\pi i}\int_{E_2}^{E_u}\frac{\ln(1-|r(s)|^2)}{X(s)}\frac{ds}{s-k}
\\
&\quad=\frac{\ln(-2q_{2,0}q_{2,1})}2\frac{\sqrt{k-E_2}}{\sqrt{k-E_2}}+\frac14(\ln(k-E_2)-i\pi)\frac{\sqrt{k-E_2}}{\sqrt{k-E_2}}+
\frac{\ln(iq_{2,0})}{2}\frac{\sqrt{k-E_2}}{\sqrt{E_2-k}}
\\&\qquad+\left(\frac{q_{2,1}}{q_{2,0}}-\frac{2q_{2,0}q_{2,2}+q_{2,1}^2}{2q_{2,0}q_{2,1}}\right)\frac{X(k)\ln|k-E_2|}{2\pi i\sqrt{E_2-E_1}}+R(k)
\end{align*}
for \(\im k\geq 0\) in a neighbourhood of \(E_2\),
where \(R(k)\) is continuous in \(\overline{\C_+}\) and satisfies \(|R(k)|=O(|k-E_2|^{1/2})\).
Assumption \eqref{cancellation_assumption} implies that the mixed logarithmic term vanishes. 
Since the remaining integrals in the exponent of \(D\) define analytic functions near \(E_2\), part \textit{(f)} follows. 

Part \textit{(g)} and \textit{(d)} can be shown using similar arguments.
\end{proof}

It follows that $m^{(3)}(x,t,k)$ satisfies the jump condition
 (\ref{RHmj}) with $j = 3$, where $\Gamma^{(3)} = \Gamma^{(2)}$. 
Let \(f^*(k)\!:=\overline{f(\bar k)}\) denote the Schwartz conjugate of a function \(f(k)\). 
Using the jump relations satisfied by \(D\) from Lemma \ref{Dlemma} we find that \(v^{(3)}=D_-^{\sigma_3}v^{(2)}D_+^{-\sigma_3}\) is given by
\begin{align*}
& v_1^{(3)} = \begin{pmatrix}1 - |r_1|^2 & D^2\overline{r_1} e^{-2itg} \\
-D^{-2}r_1 e^{2itg} & 1 \end{pmatrix},
\qquad 
v_2^{(3)}=  \begin{pmatrix}1 & 0 \\
-D^{-2}h e^{2itg} & 1 \end{pmatrix},
	\\
	& v_3^{(3)} = \begin{pmatrix}1 & D_+D_-\overline{r} e^{-2itg} \\
-D_+^{-1}D_-^{-1}r e^{2itg} & 1 - |r|^2 \end{pmatrix}=\begin{pmatrix}1 & D_+^2\frac{\overline{r}}{1-|r|^2} e^{-2itg} \\
-D_-^{-2}\frac{r}{1-|r|^2} e^{2itg} & 1 - |r|^2 \end{pmatrix},
	\\ 
	& 
v_4^{(3)}= \begin{pmatrix}1 & D^2h^* e^{-2itg} \\
0 & 1 \end{pmatrix},
\qquad v_5^{(3)} = \begin{pmatrix}0& 1 \\
-1 & D_+D_-^{-1}e^{-2itg_+} \end{pmatrix},
\end{align*}
with the subscripts refering to Figure \ref{Gamma2-minimal.pdf}.
Define
\begin{align*}
 r_2(k)=
	\begin{cases}
		\frac{\overline{r(k)}}{1-|r(k)|^2},&k\in(-\infty,\kappa_+]\setminus[E_1,E_2],
			\\
		-\frac{r(k)}{1+r(k)^2},&k\in(E_1,E_2).
	\end{cases}
\end{align*}
Then we can write the jumps across the real axis excluding the branch cut \([E_1,E_2]\) as
\begin{align*}
&v_1^{(3)}= \begin{pmatrix}1 & D^2\overline{r_1} e^{-2itg} \\
0 & 1 \end{pmatrix}\begin{pmatrix}1 & 0\\
-D^{-2}r_1 e^{2itg} & 1 \end{pmatrix},
	\\
& v_3^{(3)} =\begin{pmatrix}1 & 0 \\
-D_-^{-2}\overline{ r_2} e^{2itg} & 1  \end{pmatrix}\begin{pmatrix}1 & D_+^2 r_2 e^{-2itg} \\
0 & 1 \end{pmatrix}.
\end{align*}

\subsection{Fourth transformation}\label{fourthtransformation-new}

The goal of the fourth transformation is to deform the contour in such a way that the jump matrix contains the exponential factors  \(e^{2itg}\) and \(e^{-2itg}\) on the parts of the contour where \(\im g\) is positive and negative, respectively.
We first need to introduce analytic approximations of the functions \(r_1\), \(r_2\) and \(h\).  Following \cite{DeiftZhou}, we split each function into an analytic part and a small remainder. The remainder will remain on the original contour while the analytic part will be deformed.

Let \(U_i\), \(i=1,\dots,6\), be the domains displayed in Figure \ref{Ujs.pdf-new}.

\begin{figure}
\begin{center}
 \begin{overpic}[width=.7\textwidth]{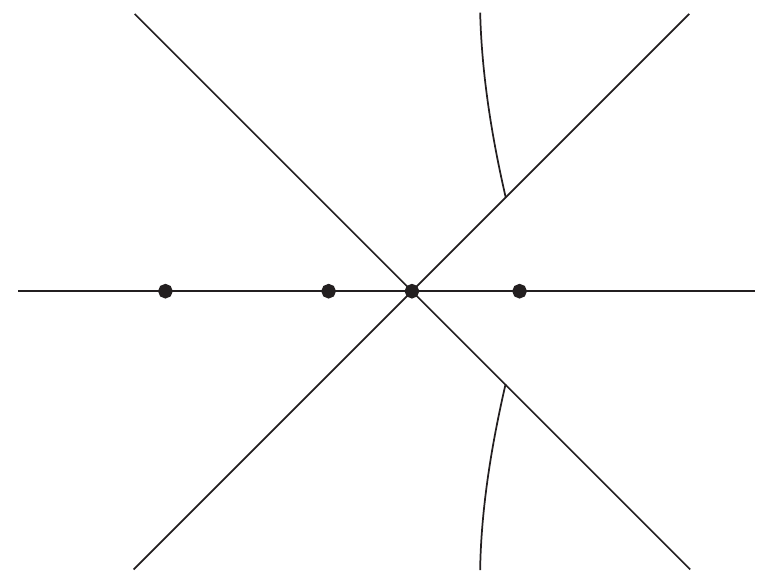}
      \put(102,36.5){\small $\re k$}
      \put(19.5,33){\small $E_1$}
      \put(40.5,33){\small $E_2$}
      \put(66,34){\small $\kappa_+$}
      \put(52,33){\small $k_0$}
      \put(82,51){\small $U_1$}
      \put(68,64){\small $U_6$}
      \put(18,51){\small $U_2$}
      \put(18,21){\small $U_3$}
      \put(68,9){\small $U_5$}
      \put(82,21){\small $U_4$}
    \end{overpic}
     \begin{figuretext}\label{Ujs.pdf-new}
        The open subsets $\{U_j\}$ of the complex $k$-plane. 
         \end{figuretext}
     \end{center}
\end{figure}

\begin{lemma}[Analytic approximation of \( r_2\)]\label{analyticapproximation r2}
There exists a decomposition
\begin{align*}
 r_2(k)=r_{2,a}(x,t,k)+r_{2,r}(x,t,k),\quad k<k_0,
\end{align*}
where the functions \(r_{2,a}\) and \(r_{2,r}\) have the following properties:
\begin{enumerate}[(a)]
\item For each \(\zeta\in [0,c_0]\) and each \(t>0\), the function \(r_{2,a}(x,t,k)\) is defined and continuous for \(k\in \bar U_2\setminus\{E_1,E_2\}\) and analytic for \(k\in U_2\).

\item For each \(\epsilon>0\), there exists a constant \(C(\epsilon)\) such that the function \(r_{2,a}\) satisfies
\begin{align}\label{ r2aestimatefirst}
&|r_{2,a}(x,t,k)- r_2(k_0)|\leq C(\epsilon)|k-k_0|e^{\frac{t}4|\im g(\zeta,k)|}, \qquad k\in D_\epsilon(k_0), 
\end{align}
as well as
\begin{align}
&|r_{2,a}(x,t,k)|\leq \frac{C(\epsilon)}{1+|k|}e^{\frac{t}4|\im g(\zeta,k)|},
\quad k\in\bar U_2\setminus (D_{\epsilon}(E_1)\cup D_{\epsilon}(E_2)),\label{ r2aestimatesecond}
\end{align}
uniformly in \(\zeta\in[0,c_0]\) and \(t>0\).

\item The \(L^1\), \(L^2\), and \(L^\infty\)-norms on \((-\infty,k_0)\setminus[E_1,E_2]\) of \(k\mapsto(1+|k|^2)r_{2,r}(x,t,k)\) are \(O(t^{-3/2})\) as \(t\to\infty\) uniformly with respect to \(\zeta\in[0,c_0]\) and the \(L^1\), \(L^2\), and \(L^\infty\)-norms on \((E_1,E_2)\) of \(|r_{2,r}(x,t,\cdot)|e^{-2t|\im g_+(\zeta,\cdot)|}\) are \(O(t^{-3/2})\) as \(t\to\infty\) uniformly with respect to \(\zeta\in[0,c_0]\).
\end{enumerate}
\end{lemma}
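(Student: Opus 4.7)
The plan is to follow the analytic approximation strategy of Deift--Zhou \cite{DeiftZhou}, modified near the branch points $E_1, E_2$ to absorb the square-root singularities of $r_2$ there. The first step is to extract an explicit singular part. Using the expansions of Assumption \ref{assumptions-spectral-functions-branch} together with the cancellation identities \eqref{cancellation_assumption}, I would expand $r_2$ near each $E_j$ in half-integer powers of $(k-E_j)$, starting at exponent $-1/2$. Although $r_2$ is given by different formulas on $(E_1,E_2)$ and on its complement, producing different coefficients on the two sides of each $E_j$, the cancellation identities guarantee that the full singular and half-integer content of $r_2$ near $E_1$ and $E_2$ can be simultaneously reproduced by a single function $S(k)$ built from $(k-E_j)^{1/2}$ and the branch of $X$, and thus analytic on $\C\setminus[E_1,E_2]$. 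Setting $\tilde r_2 := r_2 - S|_{\R}$ gives a function with the same smoothness as $r$ away from $E_1, E_2$, but without branch-point singularities.

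Once the singularities have been absorbed into $S$, I would apply the standard Deift--Zhou procedure to $\tilde r_2$. Taylor expand at $k_0$ to order $N=5$, writing $\tilde r_2(k) = P(k) + (k-k_0)^{N+1} q(k)$, where $P$ is the entire Taylor polynomial and $q$ inherits both smoothness and $(1+|k|)^{-1}$ decay at infinity from Assumption \ref{assumptions-spectral-functions}. A Fourier decomposition of $q$ relative to the phase $g$, with splitting frequency chosen proportional to $t$, produces $q = q_a + q_r$ with $q_a$ analytic in $U_2$ and satisfying $|q_a(k)| \leq C e^{t|\im g(k)|/4}$, while the real-line remainder $q_r$ has $L^1 \cap L^2 \cap L^\infty$-norm of order $t^{-3/2}$; the decay of $q_r$ is obtained by integration by parts, exploiting the $C^6$ hypothesis on the original data. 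The approximation is then
\begin{equation*}
r_{2,a}(x,t,k) := S(k) + P(k) + (k-k_0)^{N+1} q_a(x,t,k), \qquad r_{2,r} := r_2 - r_{2,a}|_\R.
\end{equation*}

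Verification of the bounds in (a)--(c) is then essentially bookkeeping. Analyticity and continuity up to $\bar U_2 \setminus \{E_1, E_2\}$ are immediate since $S$, $P$, and $q_a$ are analytic on $U_2$. For \eqref{ r2aestimatefirst} one writes $r_{2,a}(k) - r_2(k_0) = (S(k) - S(k_0)) + (P(k) - P(k_0)) + (k-k_0)^{N+1} q_a$, and uses the $C^1$ regularity of $S, P$ on a disk around $k_0$ disjoint from the cut together with the growth bound on $q_a$; for \eqref{ r2aestimatesecond} one combines the polynomial growth of $P$, the decay of $S$ at infinity, and the $(1+|k|)^{-1}$ decay of $q_a$. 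The $L^p$ estimate on $(-\infty, k_0)\setminus[E_1, E_2]$ is the $t^{-3/2}$ bound on $(k-k_0)^{N+1} q_r$; on $(E_1, E_2)$ the same splitting applied to the boundary value of $r_2$ from above yields the analogous decomposition, and the factor $e^{-2t|\im g_+|}$ simply absorbs the exponential accompanying $r_{2,r}$ in the jump matrix along the cut. The main obstacle is the first step: simultaneously matching the two-sided $(k-E_j)^{-1/2}$ singularities of $r_2$ by a single function $S$ analytic on $\C \setminus [E_1, E_2]$ is exactly what the identities \eqref{cancellation_assumption} are tailored to permit, and it is in verifying this matching -- and confirming that $\tilde r_2$ is smooth enough to feed into the standard Deift--Zhou machinery -- that the \emph{novel ideas} mentioned in the introduction enter.
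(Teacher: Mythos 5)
Your overall strategy is the right one and closely parallels the paper's: first strip off an explicit ``model'' part that reproduces the half-integer singular behaviour of $r_2$ at $E_1,E_2$, then apply a Deift--Zhou--type Fourier split to the remainder. The use of a function built from $1/X$ to match the $(k-E_j)^{l/2}$ ($l$ odd) terms on both sides of each branch point is exactly what the paper does via the term $p_1(\zeta,k)/X(k)$ inside $f_0$. However, there is a genuine gap in how you handle the behaviour as $k\to-\infty$.

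The contour $(-\infty,E_1)$ lies in $\partial U_2$ and $\im g\equiv 0$ there, so the bound \eqref{ r2aestimatesecond} demands genuine decay $|r_{2,a}(k)|\le C/(1+|k|)$ without any exponential help. Your $r_{2,a}=S+P+(k-k_0)^{N+1}q_a$ fails this for two separate reasons. First, the Taylor polynomial $P(k)$ grows like $|k|^{N}$ as $k\to-\infty$. Second, even if one regards $P+(k-k_0)^{N+1}q$ as a block (which does decay, being $\tilde r_2 - S$), replacing $q$ by its Fourier-split analytic part $q_a$ destroys this decay: the partial Fourier integral $q_a(k)=\int_{-\infty}^{t/4}\hat Q(\zeta,s)e^{is\phi(k)}\,ds$ is merely $O(1)$ as $|k|\to\infty$, so $(k-k_0)^{N+1}q_a$ grows polynomially. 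The paper avoids both problems at once: it matches the Taylor expansion at $k_0$ by a \emph{rational} function (sums of $(k+i)^{-j}$, not $(k-k_0)^j$), so $f_0$ decays like $1/(1+|k|)$ everywhere away from the branch points; and it multiplies the Fourier integral by the explicit decaying prefactor $(k-E_1)(k-E_2)(k-k_0)/[(k+i)^6\sqrt{k+i}]$, which forces $f_a$ to decay like $|k|^{-9/2}$. Without these two modifications, \eqref{ r2aestimatesecond} and the $L^1\cap L^2\cap L^\infty$ estimates on $(1+|k|^2)r_{2,r}$ in part (c) cannot be obtained.

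A second, smaller inaccuracy: the identities \eqref{cancellation_assumption} do not enter this lemma. What makes $r_2$ representable near $E_j$ by ``analytic $+$ analytic$/X$'' is the structure of the assumed expansions \eqref{seriesexpansionr} (the $i^l$ and $(-1)^l$ relations between coefficients on the two sides of each branch point together with $q_{i,0},q_{i,1}\neq 0$); the cancellation in \eqref{cancellation_assumption} is used later, in Lemma \ref{Dlemma}(f), to kill the mixed logarithmic term in the asymptotics of $D$. Finally, note that the change of variable $\phi=g(\zeta,k)$ is not merely a ``Fourier decomposition relative to the phase $g$'': it is precisely what converts the $s\le t/4$ truncation into the required exponential bound $e^{t|\im g|/4}$ on $U_2$, and one needs to verify that $F\in H^2$ in the $\phi$-variable despite the blow-up of $1/(\partial\phi/\partial k)$ near $E_1,E_2,k_0$, which is why the prefactor carries the factors $(k-E_1)(k-E_2)(k-k_0)$ as well.
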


\begin{proof}

We first show that there exists a function \(f_0(\zeta,k)\) which is analytic in \(U_2\), continuous on \(\bar{U}_2\setminus\{E_1,E_2\}\), and satisfies

\begin{enumerate}[i)]
\item
$\begin{aligned}[t]\label{asymptoticsf}
\frac{\partial^n}{\partial k^n}( r_2-f_0)=
	\begin{cases}
		O((k-k_0)^{6-n}),&k\to k_0,k\leq k_0,
			\\
		O(k^{-5-n}),&k\to-\infty,k\in\R,
			\\
		O((k-E_i)^{3-n}),&k\to E_i,k\in\R,i=1,2,
	\end{cases}
	\quad n=0,1,2,
\end{aligned}$
\\
with uniform error estimates with respect to \(\zeta\in [0,c_0]\);
\item
$\begin{aligned}[t]\label{estimatef0}
|f_0(x,t,k)|\leq \frac{C(\epsilon)}{1+|k|},
\quad k\in\bar U_2\setminus (D_{\epsilon}(E_1)\cup D_{\epsilon}(E_1)) ,\zeta\in[0,c_0],t>0.
\end{aligned}$
\end{enumerate}

Note that since \(r\in C^6((E_2,\kappa_+])\), it admits a Taylor series expansion at \(k_0\) of the form
\begin{align}\label{seriesexpansionr3}
r(k)=
	\sum_{j=0}^5w_j(\zeta)(k-k_0)^j+O((k-k_0)^6),\qquad k\to k_0,
\end{align}
where the coefficients \(w_j(\zeta)\!:={r^{(j)}(k_0)}/{j!}\in \C\) are uniformly bounded with respect to \(\zeta\in[0,c_0]\) and the error term is uniform with respect to \(\zeta\in[0,c_0]\).
The expansion \eqref{seriesexpansionr3} together with assumptions \eqref{seriesexpansionr} and \eqref{seriesexpansionrinfinity} show - after a long but straightforward calculation - that \( r_2\) admits the series expansions 
\begin{align*}
 r_2(k)=
	\begin{cases}
	\sum_{l=-1}^5{Q_{2,l}}(k-E_2)^{l/2}+O((k-E_2)^{3}),&k\downarrow E_2,
		\\ 
	\sum_{l=-1}^5i^l{Q_{2,l}}(E_2-k)^{l/2}+O((E_2-k)^{3}),&k\uparrow E_2,
		\\
	\sum_{l=-1}^5i^l{Q_{1,l}}(k-E_1)^{l/2}+O((k-E_1)^{3}),&k\downarrow E_1,
		\\ 
	\sum_{l=-1}^5(-1)^l{Q_{1,l}}(E_1-k)^{l/2}+O((E_1-k)^{3}),&k\uparrow E_1,
		\\
	O(k^{-5}),&k\to-\infty,
		\\ 
	\sum_{l=0}^5{W_l(\zeta)}(k-k_0)^l+O((k-k_0)^6),&k\to k_0,
	\end{cases}
\end{align*}
and these expansions can be differentiated termwise two times.
The coefficients \(Q_{i,l}\) are rational functions in \(q_{i,l}\) with a denominator of the form \(q_{i,0}^{l+1}q_{i,1}^{l+2}\). Since \(q_{i,0},q_{i,1}\neq 0\) by assumption, these coefficients are well defined and finite. 
Similarly, the coefficients \(W_l(\zeta)\) are rational functions in \(w_j\) with a denominator of the form \(1-|w_0(\zeta)|^2=1-|r(k_0)|^2\). Again, the assumption \(|r|<1\) on \((E_2,\kappa_+)\) implies that the \(W_l(\zeta)\) are well defined and finite.
Moreover, the coefficients \(Q_{i,l}\) and  \(W_l(\zeta)\) are uniformly bounded  with respect to \(\zeta\in[0,c_0]\) and the error terms in the above series expansions are uniform with respect to \(\zeta\in[0,c_0]\).
To find \(f_0\) we write
\begin{align*}
f_0(\zeta,k)=\frac{p_1(\zeta,k)}{X(k)}+p_2(\zeta,k)
\end{align*}
with
$$p_1(\zeta,k)=\sum_{j=4}^{17}\frac{a_j(\zeta)}{(k+i)^j}\quad\text{and}\quad p_2(\zeta,k)=\sum_{j=5}^{16}\frac{b_j(\zeta)}{(k+i)^j},$$ where the coefficients \(a_j(\zeta)\) and \(b_j(\zeta)\) are chosen such that 
\begin{align*}
\frac{p_1(\zeta,k)}{X_+(k)}=
	\begin{cases}
	\sum_{l=-1,\text{$l$ odd}}^5 Q_{2,l}(k-E_2)^{l/2}+O((k-E_2)^{3}),&k\downarrow E_2,
		\\ 
	\sum_{l=-1,\text{$l$ odd}}^5i^l Q_{2,l}(E_2-k)^{l/2}+O((E_2-k)^{3}),&k\uparrow E_2,
		\\
	\sum_{l=-1,\text{$l$ odd}}^5i^l{Q_{1,l}}(k-E_1)^{l/2}+O((k-E_1)^{3}),&k\downarrow E_1,
		\\ 
	\sum_{l=-1,\text{$l$ odd}}^5(-1)^l{Q_{1,l}}(E_1-k)^{l/2}+O((E_1-k)^{3}),&k\uparrow E_1,
		\\
	O(k^{-5}),&k\to-\infty,
		\\ 
	\frac12\sum_{l=0}^5W_l(\zeta)(k-k_0)^l+O((k-k_0)^6),&k\to k_0,
	\end{cases}
\end{align*}
and
\begin{align*}
p_2(\zeta,k)=
	\begin{cases}
	\sum_{l=-1,\text{$l$ even}}^5 Q_{2,l}(k-E_2)^{l/2}+O((k-E_2)^{3}),&k\downarrow E_2,
		\\ 
	\sum_{l=-1,\text{$l$ even}}^5i^l Q_{2,l}(E_2-k)^{l/2}+O((E_2-k)^{3}),&k\uparrow E_2,
		\\
	\sum_{l=-1,\text{$l$ even}}^5i^l{Q_{1,l}}(k-E_1)^{l/2}+O((k-E_1)^{3}),&k\downarrow E_1,
		\\ 
	\sum_{l=-1,\text{$l$ even}}^5(-1)^l{Q_{1,l}}(E_1-k)^{l/2}+O((E_1-k)^{3}),&k\uparrow E_1,
		\\
	O(k^{-5}),&k\to-\infty,
		\\ 
	\frac12\sum_{l=0}^5W_l(\zeta)(k-k_0)^l+O((k-k_0)^6),&k\to k_0,
	\end{cases}
\end{align*}
with uniform error terms with respect to \(\zeta\in [0,c_0]\) and such that we may differentiate the above expansions twice (see the proof of Lemma 4.5 in \cite{NonLinSteepDescent-Lenells} for details).
The coefficients \(a_j\) and \(b_j\) are polynomial in the coefficients of the above series expansions.
Thus \(a_j\) and \(b_j\) are  uniformly bounded with respect to \(\zeta\in [0,c_0]\).
This proves \eqref{asymptoticsf} and \eqref{estimatef0}.

\begin{figure}
\begin{center}
 \begin{overpic}[width=.7\textwidth]{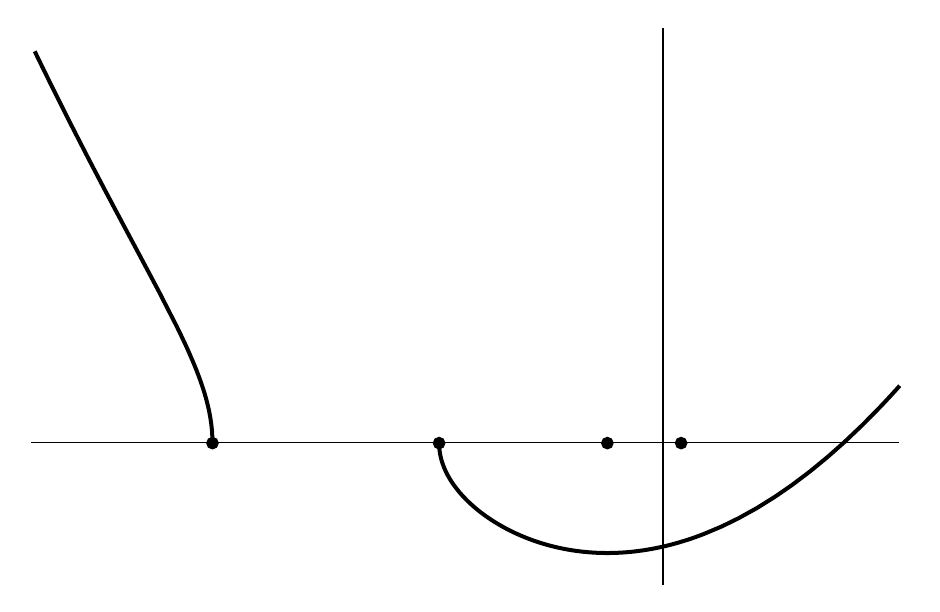}
      \put(20,13.5){\small $E_1$}
      \put(41.5,13.5){\small $E_2$}
      \put(63.5,13.5){\small $k_0$}
       \put(71.5,13.5){\small $\kappa_+$}
       \put(97,16.5){\small $k$}
       \put(72,60){\small $g(\zeta,k)$}
    \end{overpic}
     \begin{figuretext}\label{plot_g_along_R.pdf}
        The graph of \(g\) along \((-\infty,E_1]\) and \([E_1,\infty)\) for a particular choice of \(\alpha\), \(\beta\) and \(\zeta\).
         \end{figuretext}
     \end{center}
\end{figure}

Now define \(f:=r_2-f_0\) on \((-\infty,k_0]\).
For each \(\zeta\in[0,c_0]\) the map \(k\mapsto \phi=g(\zeta,k)\) is a decreasing bijection from \((-\infty,E_1)\to (0,\infty)\) and from \((E_2,k_0)\to (g(\zeta,k_0),0)\), see Figure \ref{plot_g_along_R.pdf}.
Hence we may define a function \(F(\zeta,\phi)\) by
\begin{align*}
F(\zeta,\phi)=
	\begin{cases}
		\frac{(k+i)^6\sqrt{k+i}}{(k-E_1)(k-E_2)(k-k_0)}f(\zeta,k),&\phi\in  (g(\zeta,k_0),0)\cup (0,\infty),
			\\
		0,&\phi\in(-\infty,g(\zeta,k_0))\cup\{0\},
	\end{cases}
\end{align*}
where \(\sqrt{k+i}\) denotes the principal branch of the root with a branch cut along the negative imaginary axis. 
By the chain rule we have
\begin{align}\label{derivativesF}
\frac{\partial^nF}{\partial\phi^n}(\zeta,\phi(\zeta,k))=
		\left(\frac{1}{\partial\phi/\partial k}\frac{\partial}{\partial k}\right)^n\left(\frac{(k+i)^6\sqrt{k+i}}{(k-E_1)(k-E_2)(k-k_0)}f(\zeta,k)\right)
\end{align}
for \(\phi(\zeta,k)\in (g(\zeta,k_0),0)\cup (0,\infty)\).
In view of \eqref{gfunction-derivative} it follows that
\begin{align*}
\frac{\partial^n}{\partial k^n}\frac1{{\partial\phi}/{\partial k}}(\zeta,k)=
	\begin{cases}
		O((k-k_0)^{-1-n}),&k\to k_0, k<k_0,
			\\		
		O(k^{-1-n}),&k\to-\infty,k\in\R,
			\\
		O((k-E_i)^{1/2-n}),&k\to E_i,k\in\R,i=1,2,
	\end{cases}
	\quad n=0,1,
\end{align*}
where the error terms are uniform with respect to  \(\zeta\in[0,c_0]\).
%
%
Together with the asymptotics of \(f\) given by \eqref{asymptoticsf}, formula \eqref{derivativesF} yields
\begin{align*}
\frac{\partial^nF}{\partial\phi^n}(\zeta,\phi(\zeta,k))=
	\begin{cases}
		O((k-k_0)^{5-2n}),&k\to k_0,k\leq k_0,
			\\
		O(k^{-3/2-2n}),&k\to-\infty,k\in\R,
			\\
		O((k-E_i)^{2-n/2}),&k\to E_i,k\in\R,i=1,2,
	\end{cases}
	\quad n=0,1,2.
\end{align*}
This implies \(F(\zeta,\cdot)\in C^2(\R)\).
From the definition \eqref{gdef}  of \(g\) we see that
there exists a constant \(C>0\) independent of \(\zeta\in [0,c_0]\) such that
\begin{align*}
|\phi|^{1/2}\leq C|k|,\quad\text{i.e.}\quad |k|^{-1}\leq C|\phi|^{-1/2}
\end{align*}
for large \(k<0\). Thus we find
\begin{align*}
\frac{\partial^nF}{\partial^n\phi}(\zeta,\phi)=
	\begin{cases}
		O(\phi^{-3/4-n}),& \phi\to+\infty,\phi\in\R,
			\\
		0,&\phi<g(\zeta,k_0),
	\end{cases}
	\quad n=0,1,2,
\end{align*}
with uniform error bounds with respect to \(\zeta\in[0,c_0]\). Since  \(F(\zeta,\cdot)\in C^2(\R)\) it follows that \(F(\zeta,\cdot)\in H^2(\R)\) with
\begin{align*}
\quad \sup_{\zeta\in [0,c_0]}\big\|\frac{\partial^nF}{\partial\phi^n}(\zeta,\cdot)\big\|_{L^2(\R)}<\infty,\quad n=0,1,2.
\end{align*}
Let
\begin{align*}
\hat F(\zeta, s)=\frac{1}{2\pi}\int_\R F(\zeta,\phi)e^{-i\phi s}d\phi,\quad s\in\R,
\end{align*}
denote the Fourier transform of \(F\) (as a function in \(L^2(\R)\)). 
By Plancherel  we have
\begin{align*}
\sup_{\zeta\in[0,c_0]}\|s^n\hat F(\zeta,s)\|_{L^2(ds)}=\frac1{2\pi}\sup_{\zeta\in[0,c_0]}\big\|\frac{\partial^n}{\partial\phi^n}F(\zeta,\phi)\big\|_{L^2(d\phi)}<\infty,\quad n=0,1,2.
\end{align*}
Together with H\"{o}lder's inequality it follows in particular that the \(L^1\)-norm of \(\hat F(\zeta,\cdot)\) is uniformly bounded with respect to \(\zeta\in[0,c_0]\).
The Fourier inversion theorem together with the fact that \(\hat F(\zeta,\cdot)\in L^1(\R)\) yields
\begin{align*}
F(\zeta,\phi)=\int_\R\hat F(\zeta,s)e^{i\phi s}ds,\quad \phi\in\R.
\end{align*}
By the definition of \(F\) we find
\begin{align}\label{representationformulaf}
\int_\R \hat F(\zeta,s)e^{isg(\zeta,k)}ds&=
	\begin{cases}
		\frac{(k+i)^6\sqrt{k+i}}{(k-E_1)(k-E_2)(k-k_0)}f(\zeta,k),&k\in(-\infty,E_1)\cup (E_2,k_0),
			\\
		0,&k\in\{E_1,E_2,k_0\}.
	\end{cases}
\end{align}
Define
\begin{align*}
f_a(\zeta,k)&=\frac{(k-E_1)(k-E_2)(k-k_0)}{(k+i)^6\sqrt{k+i}}\int_{-\infty}^{\frac t4} \hat F(\zeta,s)e^{isg(\zeta,k)}ds,\quad k\in  U_2,
	\\
f_r(\zeta,k)&=f(\zeta,k)-(f_a)_+(\zeta,k),\quad k\in(-\infty,k_0],
\end{align*}
where \((f_a)_+\) denotes the boundary values of \(f_a\) from the upper half-plane.
Since \(|e^{isg(\zeta,k)}|=e^{-s\im g(\zeta,k)}\) and \(\im g(\zeta,k)<0\) for \(k\in U_2\), the integral exists and \(f_a(\zeta,\cdot)\) is analytic in \(U_2\). Furthermore, since \(\im g_+(\zeta,k)\leq 0\) for \(k\in[E_1,E_2]\), the function \(f_a\) can be continuously extended to \(\bar{U}_2\). By \eqref{representationformulaf} we have
\begin{align*}
f_r(\zeta,k)
	&=f(\zeta,k)-f_a(\zeta,k)
	=\frac{(k-E_1)(k-E_2)(k-k_0)}{(k+i)^6\sqrt{k+i}}\int_{\frac t4}^{\infty} \hat F(\zeta,s)e^{isg(\zeta,k)}ds
\end{align*}
for \(k\in(-\infty,E_1)\cup (E_2,k_0)\)
and
\begin{align}\label{frterms}
f_r(\zeta,k)=f(\zeta,k)-\frac{(k-E_1)(k-E_2)(k-k_0)}{(k+i)^6\sqrt{k+i}}\int_{-\infty}^{\frac t4} \hat F(\zeta,s)e^{isg_+(\zeta,k)}ds
\end{align}
for \(k\in[E_1,E_2]\).
We estimate
\begin{align}
|f_a(\zeta,k)|&\leq \frac{|k-E_1||k-E_2||k-k_0|}{(1+|k|^2)^{13/4}}\|\hat F(\zeta,\cdot)\|_{L^1(\R)}\sup_{s\leq t/4}|e^{isg(\zeta,k)}|\nonumber
		\\
		&\leq C\frac{|k-k_0|}{1+|k|^{9/2}}e^{\frac{t}{4}|\im g(\zeta,k)|},\quad k\in \bar{U}_2,\label{faestimate}
		\\
|f_r(\zeta,k)|
		&\leq \frac{|k-E_1||k-E_2||k-k_0|}{(1+|k|^2)^{13/4}}\|s^2\hat F(\zeta,s)\|_{L^2(ds)}\sqrt{\int_{\frac t4}^\infty s^{-4}ds}\nonumber
		\\
		&\leq C\frac{t^{-3/2}}{1+|k|^{7/2}},\quad k\in(-\infty,E_1)\cup(E_2,k_0).\label{frestimateoffcut}
\end{align}
In order to estimate \(f_r=f-(f_a)_+\) on \((E_1,E_2)\) we consider the two terms on the right hand side of \eqref{frterms} separately. For the first term,  the asymptotics of \(f(\zeta,\cdot)\) near \(E_i\) (in combination with the uniform boundedness of \(f\) on a closed subinterval of \((E_1,E_2)\)) yields
\begin{align*}
|f(\zeta,k)|\leq C |k-E_1|^3|k-E_2|^3, \quad k\in(E_1,E_2),\, i=1,2.
\end{align*}
The second term in \eqref{frterms} can be estimated similarly as in \eqref{faestimate}, so that
\begin{align*}
|(f_a)_+(\zeta,k)| 
	&\leq C|k-E_1||k-E_2|e^{\frac t4|\im g_+(\zeta,k)|}, \quad  k\in (E_1,E_2).
\end{align*}
Now define
\begin{align*}
r_{2,a}(x,t,k)&=f_0(\zeta,k)+f_a(\zeta,k),\quad \zeta\in[0,c_0], k\in\bar U_2,
	\\
r_{2,r}(x,t,k)&=f_r(\zeta,k),\quad k\in(-\infty,k_0).
\end{align*}
From estimate \eqref{frestimateoffcut} it follows that  the \(L^\infty\), \(L^1\) and \(L^2\)-norms of \((1+|k|^2)r_{2,r}\) on \((-\infty,k_0)\setminus[E_1,E_2]\) are \(O(t^{-3/2})\).
For the estimates on \([E_1,E_2]\) we note that by the above computations we have
\begin{align*}
|r_{2,r}(x,t,k)|e^{-2t|\im g_+(\zeta,k)|}&\leq C |k-E_1|^3|k-E_2|^3e^{-2t|\im g_+(\zeta,k)|}\\&\quad+C|k-E_1||k-E_2|e^{-\frac{7t}4|\im g_+(\zeta,k)|}
\end{align*}
for \(k\in(E_1,E_2)\).
Since \(|\im g_+(\zeta,k)|\sim|k-E_i|^{1/2}\) as \(k\to E_i\), \(i=1,2\), 
 easy estimates show that the \(L^\infty\), \(L^1\) and \(L^2\)-norms of  \(|r_{2,r}(x,t,\cdot)|e^{-2t|\im g_+(\zeta,\cdot)|}\)  on \((E_1,E_2)\) are \(O(t^{-3/2})\).
To show  \eqref{ r2aestimatefirst} we note that
\begin{align*}
|r_{2,a}(x,t,k)- r_2(k_0)|
	&\leq |f_a(\zeta,k)|+|f_0(\zeta,k)-f_0(\zeta,k_0)|.
\end{align*}
The first term can be estimated using \eqref{faestimate}. In view of the series expansion of \(f_0\) near \(k_0\) it follows that the second term can be estimated by a uniform constant times \(|k-k_0|\).
Estimate \eqref{ r2aestimatesecond} follows from \eqref{faestimate} together with the estimate \eqref{estimatef0} for \(f_0\).
\end{proof}

The proofs of the following two lemmas are similar to (but easier than) the proof of Lemma \ref{analyticapproximation r2} and will be omitted.

\begin{lemma}[Analytic approximation of h]\label{analyticapproximationh}
There exists a decomposition
$$h(k)=h_a(t,k)+h_r(t,k),\quad t>0, k\in\bar{D}_1\cap \bar{D}_2,$$
where the functions \(h_a\) and \(h_r\) have the following properties:
\begin{enumerate}[(a)]
\item For each \(t>0\), the function \(h_{a}(t,k)\) is defined and continuous for \(k\in \bar{D}_1\) and analytic for \(k\in {D}_1\).

\item The function \(h_a\) satisfies
\begin{align}\label{haestimates}
\begin{cases}
|h_a(t,k)-h(\kappa_+)|\leq C|k-\kappa_+|e^{\frac{t}4|\im g(\zeta,k)|},
\\
|h_{a}(t,k)|\leq \frac{C}{1+|k|}e^{\frac{t}4|\im g(\zeta,k)|},
\end{cases}
 k\in \bar{D}_1,\zeta\in[0,c_0],t>0,
\end{align}

\item The \(L^1\), \(L^2\), and \(L^\infty\)-norms of \(k\mapsto (1+|k|^2)h_{r}(t,k)\) on \(\bar{D}_1\cap \bar{D}_2\)  are \(O(t^{-3/2})\) as \(t\to\infty\).
\end{enumerate}
\end{lemma}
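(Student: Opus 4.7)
\textbf{Proof plan for Lemma \ref{analyticapproximationh}.} The strategy will mirror the proof of Lemma \ref{analyticapproximation r2} and is in fact simpler, since the relevant contour $\bar{D}_1\cap \bar{D}_2$ stays away from the branch points $E_1,E_2$. Thus no square-root factor $1/X(k)$ is needed and the approximating rational function only has to match the behaviour of $h$ at infinity and its Taylor expansion at $\kappa_+$. The key idea, following \cite{DeiftZhou}, is to use the fact that along the non-horizontal component of $\Gamma$ through $\kappa_+$ the function $g(\zeta,\cdot)$ is real and monotone, so that $\phi = g(\zeta,k)$ yields a change of variable in which $h-f_0$ can be realized as a Fourier transform of an $H^2$ function, and then truncation in the Fourier variable produces the analytic/remainder split.

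First, I would construct a rational function $f_0(\zeta,k)$ of the form $f_0(\zeta,k) = \sum_{j=1}^{4}\tilde a_j/(k+i)^j + \sum_{j=5}^{N} \tilde b_j(\zeta)/(k+i)^j$ that is analytic in $\bar{D}_1$, whose first four coefficients $\tilde a_j$ are fixed by \eqref{seriesexpansionh} so that $h-f_0 = O(k^{-5-n})$ along $\bar{D}_1\cap\bar{D}_2$ as $k\to\infty$ (for $n=0,1,2$), and whose remaining coefficients $\tilde b_j(\zeta)$ are chosen (using the fact that $h\in C^6(\bar D_1\cap\bar D_2)$) so that the Taylor polynomial of $f_0$ at $\kappa_+$ matches that of $h$ to order $5$. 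Since the coefficients arise polynomially from fixed data and from the Taylor coefficients of $h$ at $\kappa_+$, they are uniformly bounded in $\zeta\in[0,c_0]$.

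Next set $f = h - f_0$ on $\bar{D}_1\cap \bar{D}_2$, parametrize that contour by $\phi = g(\zeta,k)$, and define
\begin{equation*}
F(\zeta,\phi)=
\begin{cases}
\dfrac{(k+i)^{6}\sqrt{k+i}}{k-\kappa_{+}}\,f(\zeta,k), & \phi\in(0,\infty),\\[1mm]
0, & \phi\le 0,
\end{cases}
\end{equation*}
with $\sqrt{k+i}$ denoting the principal branch. Using $dg/dk = 4(k-k_1)(k-k_0)/X(k)$ from \eqref{gfunction-derivative} (which at $\zeta=0$ places the zero of $(k-k_0)$ precisely at $\kappa_+$) together with the matching orders chosen for $f_0$, the chain-rule computation carried out in the proof of Lemma \ref{analyticapproximation r2} shows that $F(\zeta,\cdot)\in H^2(\R)$ with $\sup_{\zeta\in[0,c_0]}\|\partial_\phi^n F(\zeta,\cdot)\|_{L^2(\R)}<\infty$ for $n=0,1,2$. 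Plancherel then gives a uniform bound on $\|s^n\hat F(\zeta,s)\|_{L^2(ds)}$ and, by Hölder, on $\|\hat F(\zeta,\cdot)\|_{L^1(\R)}$.

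I would then define
\begin{equation*}
h_a(t,k) = f_0(\zeta,k) + \frac{k-\kappa_+}{(k+i)^{6}\sqrt{k+i}}\int_{-\infty}^{t/4}\hat F(\zeta,s)\,e^{isg(\zeta,k)}\,ds,\qquad k\in\bar{D}_1,
\end{equation*}
and $h_r = h - h_a$ on $\bar{D}_1\cap\bar{D}_2$. Since $\im g(\zeta,k)\le 0$ on $\bar{D}_1$ (signature table in Figure \ref{signature_table_Img.pdf}), the integral converges and defines an analytic function in $D_1$ with continuous extension to $\bar{D}_1$. The bound $|h_a(t,k)|\le\tfrac{C}{1+|k|}e^{(t/4)|\im g|}$ and the vanishing statement near $\kappa_+$ in \eqref{haestimates} follow from the $L^1$ bound on $\hat F$ combined with the factor $(k-\kappa_+)/((k+i)^6\sqrt{k+i})$ and the series expansion of $f_0$ at $\kappa_+$. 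For the remainder, Fourier inversion yields
\begin{equation*}
h_r(t,k) = \frac{k-\kappa_+}{(k+i)^{6}\sqrt{k+i}}\int_{t/4}^{\infty}\hat F(\zeta,s)\,e^{isg(\zeta,k)}\,ds,
\end{equation*}
and the Cauchy--Schwarz estimate $\int_{t/4}^\infty|\hat F|\,ds \le \|s^2\hat F\|_{L^2}(\int_{t/4}^\infty s^{-4}ds)^{1/2} = O(t^{-3/2})$ gives $|(1+|k|^2)h_r(t,k)|\le C t^{-3/2}/(1+|k|^{5/2})$ uniformly in $\zeta$, which suffices for the $L^1$, $L^2$ and $L^\infty$ norms on $\bar{D}_1\cap \bar{D}_2$.

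The main technical point, as in Lemma \ref{analyticapproximation r2}, is verifying that the prefactor $(k-\kappa_+)/((k+i)^6\sqrt{k+i})$ in the definition of $F$ removes exactly the zero of $dg/dk$ at $\kappa_+$ and supplies enough decay at infinity so that $F$, together with two derivatives in $\phi$, lies in $L^2(\R)$; this is where the matching orders in the construction of $f_0$ and the smoothness assumption $h\in C^6(\bar D_1\cap\bar D_2)$ from Assumption \ref{assumptions-spectral-functions}(a) combined with \eqref{seriesexpansionh} are used. All remaining steps are routine adaptations of those in the proof of Lemma \ref{analyticapproximation r2}.
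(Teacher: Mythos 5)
There are three interlinked errors in your proposal, all stemming from a misreading of the signature table and of the geometry of the contour $\bar D_1\cap\bar D_2$; they invalidate the Fourier-analytic core of the construction and must be fixed.

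\emph{First, the sign of $\im g$ on $\bar D_1$ is wrong.} By definition $D_1=\{\im k>0,\ \im\Omega(k)>0\}$, and on the upper half-plane $\im g=\im\Omega+\zeta\im X\ge\im\Omega\ge 0$ since $\im X\ge 0$ for $\im k\ge 0$. In the signature table of Figure \ref{signature_table_Img.pdf}, $D_1$ is the upper-right region, where $\im g>0$, not $\le 0$. Consequently your integral $\int_{-\infty}^{t/4}\hat F(\zeta,s)e^{isg(\zeta,k)}\,ds$ \emph{diverges} on $D_1$ (the integrand grows like $e^{|s|\im g}$ as $s\to-\infty$). The analytic piece must instead be built from $\int_{-t/4}^{\infty}$, which converges for $\im\phi>0$ and yields the sup-bound $e^{\frac{t}{4}\im\phi}$ rather than $e^{\frac{t}{4}|\im\phi|}$ with the opposite sign.

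\emph{Second, and more seriously, $g(\zeta,\cdot)$ is not real on $\bar D_1\cap\bar D_2$ for $\zeta>0$,} so $\phi=g(\zeta,k)$ cannot serve as a real Fourier variable there. On $\bar D_1\cap\bar D_2$ one has $\im\Omega=0$ by definition, so $\im g=\zeta\im X$, which is strictly positive for $\zeta>0$ and $\im k>0$. The correct parametrization is $\phi=\Omega(k)$, which \emph{is} real, strictly monotone along $\bar D_1\cap\bar D_2$ (ranging over $(-\infty,\Omega(\kappa_+)]$), and, crucially, $\zeta$-independent. This also explains why the lemma writes $h_a(t,k)$, $h_r(t,k)$ with no $x$- or $\zeta$-dependence, in contrast to $r_{2,a}(x,t,k)$ in Lemma~\ref{analyticapproximation r2}; your $f_0$ and $F$ should likewise carry no $\zeta$-argument.

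\emph{Third, your statement that $dg/dk$ vanishes at $\kappa_+$ is only true for $\zeta=0$.} For $\zeta>0$ the stationary point $k_0(\zeta)$ of $g$ lies strictly to the left of $\kappa_+$. What vanishes at $\kappa_+$, uniformly in $\zeta$, is $d\Omega/dk=4(k-\kappa_+)(k-\kappa_-)/X(k)$, and this is the degeneracy the prefactor $(k-\kappa_+)/((k+i)^6\sqrt{k+i})$ is designed to compensate. With $\phi=\Omega$ and $\int_{-t/4}^{\infty}$, one obtains $|h_a(t,k)|\le\frac{C}{1+|k|}e^{\frac{t}{4}\im\Omega(k)}$ on $\bar D_1$, and since $0\le\im\Omega\le\im g$ there the stated (weaker) bound $\frac{C}{1+|k|}e^{\frac{t}{4}|\im g(\zeta,k)|}$ follows; likewise $|e^{2itg}|\le 1$ on $\bar D_1\cap\bar D_2$, which is why no exponential weight appears in the remainder estimate (c). Once these corrections are made, the rest of your outline (construction of a rational $f_0$ matching \eqref{seriesexpansionh} at infinity and the Taylor jet at $\kappa_+$, the $H^2$ bound on $F$, Plancherel, Cauchy--Schwarz for the tail) goes through as you describe and follows the pattern of Lemma~\ref{analyticapproximation r2}.
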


\begin{lemma}[Analytic approximation of \(r_1\)]\label{analyticapproximationr1}
There exists a decomposition
\begin{align*}
r_1(k)=r_{1,a}(x,t,k)+r_{1,r}(x,t,k),\quad k\geq k_0,
\end{align*}
where the functions \(r_{a}\) and \(r_{r}\) have the following properties:
\begin{enumerate}[(a)]
\item For each \(\zeta\in [0,c_0]\) and each \(t>0\), the function \(r_{1,a}(x,t,k)\) is defined and continuous for \(k\in \bar U_1\) and analytic for \(k\in U_1\).

\item The function \(r_{1,a}\) satisfies
\begin{align}\label{r1aestimates}
\begin{cases}
|r_{1,a}(x,t,k)-r_1(k_0)|\leq C|k-k_0|e^{\frac{t}4|\im g(\zeta,k)|},
	\\
|r_{1,a}(x,t,k)|\leq \frac{C}{1+|k|}e^{\frac{t}4|\im g(\zeta,k)|},
\end{cases}
 k\in\bar U_1,\zeta\in[0,c_0],t>0,
\end{align}

\item The \(L^1\), \(L^2\), and \(L^\infty\)-norms on \((k_0,\infty)\) of \(k\mapsto (1+|k|^2)r_{1,r}(x,t,k)\) are \(O(t^{-3/2})\) as \(t\to\infty\) uniformly with respect to \(\zeta\in[0,c_0]\).
\end{enumerate}
\end{lemma}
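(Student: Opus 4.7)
The plan is to follow the proof of Lemma \ref{analyticapproximation r2} with considerable simplification, since on $[k_0,\infty)\subset(E_2,\infty)$ the function $r_1$ is $C^6$ and, for $\zeta\in[0,c_0]$, the critical point $k_0(\zeta)$ stays uniformly bounded away from the branch points $E_1,E_2$. In particular, neither the $p_1/X$-term nor the delicate cancellation of logarithmic terms via \eqref{cancellation_assumption} is needed here.

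First I would construct a rational approximant $f_0(\zeta,k)=\sum_{j=1}^{N} a_j(\zeta)/(k+i)^j$ (with $N$ large enough, e.g.\ $N=10$) which is analytic on $\bar U_1$, whose asymptotic expansion at infinity matches that of $r_1$ up to $O(k^{-4})$ (via Assumption \ref{assumptions-spectral-functions}(f)) and whose Taylor expansion at $k_0$ matches $r_1$ up to order $5$ (using $r_1\in C^6((E_2,\infty))$). These $4+6=10$ matching conditions form a uniquely solvable linear system whose coefficients $a_j(\zeta)$ depend continuously, hence boundedly, on $\zeta\in[0,c_0]$. This gives $|f_0|\leq C/(1+|k|)$ uniformly, and setting $f:=r_1-f_0$,
\[
\frac{\partial^n f}{\partial k^n}=\begin{cases} O((k-k_0)^{6-n}), & k\downarrow k_0,\\ O(k^{-5-n}), & k\to\infty,\end{cases}\qquad n=0,1,2,
\]
uniformly in $\zeta$.

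Next, since $dg/dk=4(k-k_0)(k-k_1)/X(k)>0$ on $(k_0,\infty)$, the map $k\mapsto\phi:=g(\zeta,k)$ is a $C^\infty$-diffeomorphism of $(k_0,\infty)$ onto $(g(\zeta,k_0),\infty)$. Define
\[
F(\zeta,\phi)=\begin{cases}\dfrac{(k+i)^4\sqrt{k+i}}{k-k_0}\,f(\zeta,k), & \phi>g(\zeta,k_0),\\[1mm] 0,& \phi\leq g(\zeta,k_0),\end{cases}
\]
where $k=k(\zeta,\phi)$ is the inverse map and $\sqrt{\cdot}$ denotes the principal branch with cut along the negative imaginary axis. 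The chain rule, combined with $|\partial k/\partial\phi|=O((k-k_0)^{-1})$ as $k\downarrow k_0$ and $O(k^{-1})$ as $k\to\infty$ (from $(dg/dk)^{-1}=X(k)/(4(k-k_0)(k-k_1))$) together with the prescribed vanishing of $f$ at these endpoints, yields $F(\zeta,\cdot)\in H^2(\R)$ with norms uniformly bounded in $\zeta$. By Plancherel, $\|(1+s^2)\hat F(\zeta,\cdot)\|_{L^2}$ is uniformly bounded and in particular $\hat F(\zeta,\cdot)\in L^1(\R)$ uniformly.

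Finally, because $\im g>0$ in $U_1$ (cf.\ Figure \ref{signature_table_Img.pdf}), one has $|e^{isg(\zeta,k)}|\leq e^{(t/4)|\im g(\zeta,k)|}$ for $k\in\bar U_1$ and $s\geq -t/4$. Define
\[
f_a(\zeta,k)=\frac{k-k_0}{(k+i)^4\sqrt{k+i}}\int_{-t/4}^{\infty}\hat F(\zeta,s)\,e^{isg(\zeta,k)}\,ds,\qquad k\in\bar U_1,
\]
which is continuous on $\bar U_1$ and analytic in $U_1$, and set $r_{1,a}:=f_0+f_a$, $r_{1,r}:=r_1-r_{1,a}$ on $[k_0,\infty)$. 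The estimates \eqref{r1aestimates} follow from the pointwise bound $|f_a|\leq C|k-k_0|(1+|k|)^{-9/2}e^{(t/4)|\im g|}\|\hat F\|_{L^1}$ combined with the bound on $f_0$. The remainder
\[
r_{1,r}(x,t,k)=\frac{k-k_0}{(k+i)^4\sqrt{k+i}}\int_{-\infty}^{-t/4}\hat F(\zeta,s)\,e^{isg(\zeta,k)}\,ds,\qquad k>k_0,
\]
is estimated by applying Cauchy--Schwarz to $s^2\hat F\in L^2$ together with $\bigl(\int_{-\infty}^{-t/4}s^{-4}\,ds\bigr)^{1/2}=O(t^{-3/2})$, yielding $(1+|k|^2)|r_{1,r}|\leq C t^{-3/2}/(1+|k|^{3/2})$ uniformly in $\zeta$, from which the $L^1$, $L^2$ and $L^\infty$ statements in (c) follow. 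The only minor point requiring care, not really an obstacle, is matching the integration range $[-t/4,\infty)$ to the sign of $\im g$ in $U_1$; everything else is strictly easier than for Lemma \ref{analyticapproximation r2}.
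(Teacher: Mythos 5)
The paper does not write out a proof of this lemma; it simply states that the proofs of Lemmas \ref{analyticapproximationh} and \ref{analyticapproximationr1} "are similar to (but easier than) the proof of Lemma \ref{analyticapproximation r2} and will be omitted." Your proposal is precisely that adaptation carried out explicitly, and it is correct: you use the same two-step scheme (rational approximant $f_0$ at $k_0$ and at $\infty$, then a Fourier split of $f=r_1-f_0$ along the phase $\phi=g(\zeta,k)$), you correctly drop the $p_1/X$ term and the cancellation \eqref{cancellation_assumption} since $[k_0,\infty)$ stays uniformly away from the branch points for $\zeta\in[0,c_0]$, you correctly flip the truncation to $\int_{-t/4}^{\infty}$ to match $\im g>0$ in $U_1$, and your prefactor $(k+i)^4\sqrt{k+i}/(k-k_0)$ and the resulting $H^2$/Plancherel/Cauchy–Schwarz bookkeeping all check out.
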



Now let \(\Gamma^{(4)}\) be the contour shown in Figure \ref{Gamma2.pdf-new}.
\begin{figure}
\begin{center}
 \begin{overpic}[width=.7\textwidth]{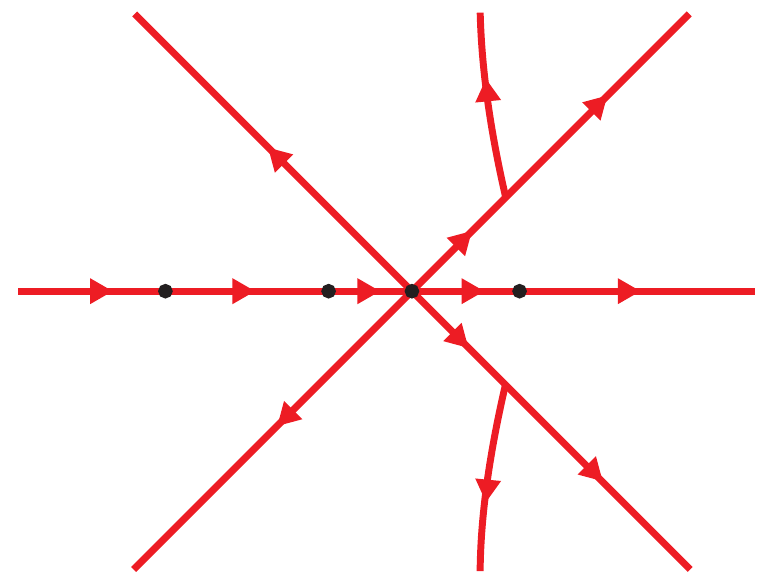}
      \put(102,36.5){\small $\Gamma^{(4)}$}
      \put(19.5,33){\small $E_1$}
      \put(40.5,33){\small $E_2$}
      \put(67,34){\small $\kappa_+$}
      \put(52,33){\small $k_0$}
      \put(57,46){\small $1$}
      \put(32,53){\small $2$}
      \put(33,22){\small $3$}
      \put(57,28){\small $4$}
      \put(79,14){\small $5$}
      \put(79,60){\small $6$}
      \put(81,33.5){\small $7$}
      \put(59,63){\small $8$}
      \put(59,12){\small $9$}
      \put(12,33.5){\small $10$}
      \put(45,33.5){\small $10$}
      \put(30,33.5){\small $11$}
      \put(60.5,33.5){\small $7$}
    \end{overpic}
     \begin{figuretext}\label{Gamma2.pdf-new}
        The contour  $\Gamma^{(4)}$ in the complex $k$-plane. 
         \end{figuretext}
     \end{center}
\end{figure}
Define $m^{(4)}(x,t,k)$ by
\begin{align*}
m^{(4)} &= m^{(3)}D(\zeta,k)^{\sigma_3}H_4D(\zeta,k)^{-\sigma_3}
\end{align*}
where
\begin{align*}
H_4(\zeta,k)=
\begin{cases}
\begin{pmatrix} 1 & 0 \\ r_{1,a}(k) e^{2i t g(k)} & 1 \end{pmatrix}, & k \in U_1, 
	\\
\begin{pmatrix} 1 & - r_{2,a}(k) e^{-2it g(k)} \\ 0 & 1 \end{pmatrix}, & k \in U_2, 
	\\
\begin{pmatrix} 1 & 0 \\ -r_{2,a}^*(k)  e^{2it g(k)} & 1 \end{pmatrix}, & k \in U_3, 
	\\
\begin{pmatrix} 1 & r_{1,a}^*(k)e^{-2i t g(k)} \\ 0  & 1 \end{pmatrix}, & k \in U_4,
	\\
\begin{pmatrix} 1 & -h_a^*(k)e^{-2i t g(k)} \\ 0  & 1 \end{pmatrix}, & k \in U_5,
	\\
\begin{pmatrix} 1 & 0 \\ -h_a(k) e^{2i t g(k)} & 1 \end{pmatrix}, & k \in U_6, 
	\\
	I, & \text{elsewhere}.
\end{cases}
\end{align*}
 Then the new jump matrix \(v^{(4)}=(D^{\sigma_3}H_4D^{-\sigma_3})_-^{-1}v^{(3)}(D^{\sigma_3}H_4D^{-\sigma_3})_+\) is given by 
 \begin{align*}
& v_{1}^{(4)} =\begin{pmatrix} 1 & 0 \\ -D^{-2}(r_{1,a}+h) e^{2i t g} & 1 \end{pmatrix},
	\qquad
 v_{2}^{(4)} =\begin{pmatrix} 1 & - D^{2}r_{2,a} e^{-2it g} \\ 0 & 1 \end{pmatrix},
	\\
& v_3^{(4)} =\begin{pmatrix} 1 & 0 \\ D^{-2}r_{2,a}^*  e^{2it g} & 1 \end{pmatrix},
 	\qquad
  v_4^{(4)} =\begin{pmatrix} 1 & D^2(r_{1,a}^*+h^*)e^{-2i t g} \\ 0  & 1 \end{pmatrix},	
  	\\
&   v_5^{(4)} =\begin{pmatrix} 1 & D^2(r_{1,a}^*+h_a^*)e^{-2i t g} \\ 0  & 1 \end{pmatrix},	
	\qquad
 v_6^{(4)} =\begin{pmatrix} 1 & 0 \\ -D^{-2}(r_{1,a}+h_a) e^{2i t g} & 1 \end{pmatrix},
	\\
& v_7^{(4)} 
	=\begin{pmatrix} 1-|r_{1,r}|^2 & D^2r_{1,r}^*e^{-2itg} \\ -D^{-2}r_{1,r} e^{2i t g} & 1 \end{pmatrix}, 	
	\qquad
v_8^{(4)} =\begin{pmatrix} 1 & 0 \\ -D^{-2}h_r e^{2i t g} & 1 \end{pmatrix},
	\\
&v_9^{(4)} =\begin{pmatrix} 1 & D^2h_r^*e^{-2i t g} \\ 0  & 1 \end{pmatrix},
	\qquad
v_{10}^{(4)} 
=\begin{pmatrix}1&D_+^2r_{2,r}e^{-2itg}\\-D_-^{-2}r_{2,r}^*e^{2itg}&1-|r_{2,r}|^2(1-|r|^2)^2\end{pmatrix},
	\\
& v_{11}^{(4)} =\begin{pmatrix}0&1\\-1&D_+^2r_{2,a}e^{-2itg_+}+D_-^{-2}r_{2,a}^*e^{2itg_-}+D_+D_-^{-1}e^{-2itg_+}\end{pmatrix},
\end{align*}
where the subscripts refer to Figure \ref{Gamma2.pdf-new} and we have used Lemma \ref{Dlemma} (e) to compute the jumps along \((-\infty,\kappa_+)\).
Using \(g_+=-g_-\), \(D_+D_-=r\) as well as the assumption \(|r|^2=rr^*=1\) on \((E_1,E_2)\), we find that the \(22\)-entry of \(v_{11}^{(4)}\) can be written as
\begin{align*}
D_+D_-^{-1}(rr_{2,a}+r^*r_{2,a}^*+1)e^{-2itg_+}
	=D_+D_-^{-1}(r r_2+r^*r_2^*+1-2\re(rr_{2,r}))e^{-2itg_+}.
\end{align*}
Using again that \(|r|=1\) on \((E_1,E_2)\), we compute
\begin{align*}
r r_2+r^*r_2^*+1=-\frac{r^2}{1+r^2}-\frac{\bar r^2}{1+\bar r^2}+1=-\frac{|r|^4+r^2+\bar r^2+|r|^4}{1+r^2+\bar r^2+| r|^4}+1=0.
\end{align*}
Thus we can rewrite \(v_{11}^{(4)}\) as
\begin{align*}
v_{11}^{(4)} =\begin{pmatrix}0&1\\-1&-2D_+D_-^{-1}\re(rr_{2,r})e^{-2itg_+}\end{pmatrix}.
\end{align*}

\section{Parametrices}

In this section we will construct parametrices away from and near the critical point. These parametrices will be refered to as the global parametrix and the parametrix near \(k_0\), respectively.

\subsection{Global Parametrix}

The signature table of \(\im g\) (see Figure \ref{signature_table_Img.pdf}) together with the decay of \(r_{1,r}\), \(r_{2,r}\) and \(h_r\) implies that away from the critical point \(k_0\),
the jump matrix $v^{(4)}$  approaches the jump matrix 
\begin{align}\label{vmod}
v^{(\infty)} = \begin{pmatrix} 0 & 1 \\ -1 & 0 \end{pmatrix}, \qquad k \in [E_1, E_2],
\end{align}
as $t \to \infty$ (with the jumps on the other parts of \(\Gamma^{(4)}\) being equal to identity).
Hence we expect the leading order asymptotics of \(m^{(4)}\) to be determined by the solution $m^{(\infty)}$ of the RH problem 
\begin{align*}
\begin{cases}
m^{(\infty)}(x, t, \cdot) \in I + \dot{E}^2(\C \setminus [E_1, E_2]),\\
m^{(\infty)}_+(x,t,k) = m^{(\infty)}_-(x, t, k) v^{(\infty)}(x, t, k) \quad \text{for a.e.} \ k \in [E_1, E_2],
\end{cases}
\end{align*}
where $v^{(\infty)}$ is given by \eqref{vmod} and the contour \([E_1,E_2]\) is oriented to the right.
The unique solution of this RH problem is given by 
\begin{align*}
m^{(\infty)}= \frac{1}{2}\begin{pmatrix} \Delta+ \Delta^{-1} & -i(\Delta - \Delta^{-1}) \\
i(\Delta - \Delta^{-1}) &  \Delta+ \Delta^{-1} \end{pmatrix},
\end{align*}
where the function \(\Delta\) is defined in \eqref{defindelta}.
We note that \(m^{(\infty)}\) satisfies the asymptotic formula
\begin{align*}
m^{(\infty)}(k)=I+\begin{pmatrix}0&\frac{i\alpha}{2k}\\-\frac{i\alpha}{2k}&0\end{pmatrix}+\begin{pmatrix}\frac{\alpha^2}{8k^2}&-\frac{i\alpha\beta}{2k^2}\\\frac{i\alpha\beta}{2k^2}&\frac{\alpha^2}{8k^2}\end{pmatrix}+O(k^{-3}),\quad k\to\infty.
\end{align*}

It follows that \(v^{(4)}-v^{(\infty)}\) converges to zero as \(t\to \infty\) everywhere except at the critical point \(k_0\). Thus we have to do a local analysis near \(k_0\).

\subsection{Model Problem on the Cross}

The study of the local parametrix near \(k_0\) leads to a RH problem on a cross which can be explicitly solved in terms of parabolic cylinder functions \cite{Its1981}.
The exact result needed in our case can be found in Appendix B of \cite{NonLinSteepDescent-Lenells} and is stated below for the reader's convenience.

Let \(X=X_1\cup X_2\cup X_3\cup X_4\subseteq \C\) be the cross defined by
\begin{align*}
X_1&=\big\{se^{\frac{i\pi}4}\,\big|\,0\leq s<\infty\big\}, &X_2&=\big\{se^{\frac{3i\pi}4}\,\big|\,0\leq s<\infty\big\},
\\
X_3&=\big\{se^{-\frac{3i\pi}4}\,\big|\,0\leq s<\infty\big\},&X_4&=\big\{se^{-\frac{i\pi}4}\,\big|\,0\leq s<\infty\big\},
\end{align*}
oriented away from the origin as shown in Figure \ref{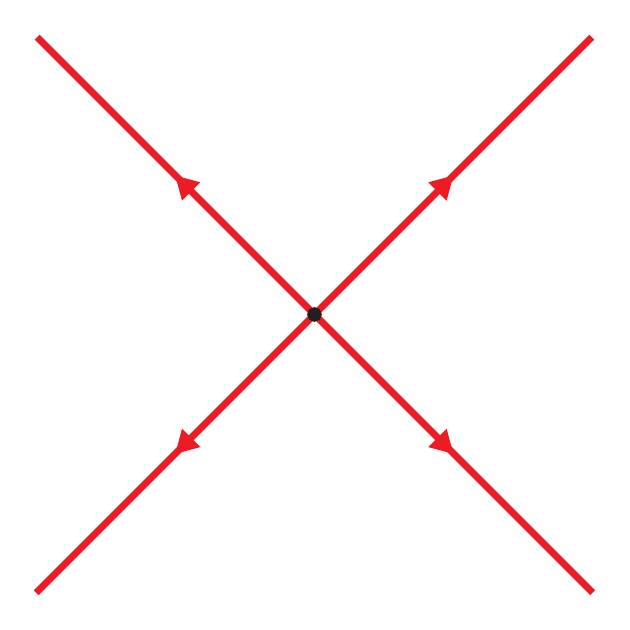} and let \(\mathbb{D}=\{z\in \C\,|\,|z|<1\}\) denote the open unit disk in \(\C\).

\begin{figure}
\begin{center}
 \begin{overpic}[width=.4\textwidth]{crossexactsolution.pdf}
      \put(19,67){\small $X_2$}
      \put(19,31){\small $X_3$}
      \put(73,67){\small $X_1$}
      \put(73,31){\small $X_4$}
      \put(48.5,42){\small $0$}
    \end{overpic}
     \begin{figuretext}\label{crossexactsolution.pdf}
        The contour  $X=X_1\cup X_2\cup X_3\cup X_4$ in the complex plane.
         \end{figuretext}
     \end{center}
\end{figure}

\begin{lemma}[Exact solution on the cross]\label{lemmaexactsolution}
Define the function $\nu:\mathbb{D} \to [0,\infty)$ by 
$\nu(q) = -\frac{1}{2\pi} \ln(1 - |q|^2)$ and define the jump matrix $v^X(q, z)$ for $z \in X$ by
\begin{align}\label{vXdef} 
v^X(q, z) = \begin{cases}
\begin{pmatrix} 1 & 0	\\
  -q z^{-2i\nu(q)} e^{\frac{iz^2}{2}}	& 1 \end{pmatrix}, &   z \in X_1, 
  	\\
\begin{pmatrix} 1 & -\frac{\bar{q}}{1 - |q|^2} z^{2i\nu(q)}e^{-\frac{iz^2}{2}}	\\
0 & 1  \end{pmatrix}, &  z \in X_2, 
	\\
\begin{pmatrix} 1 &0 \\
 \frac{q}{1 - |q|^2}z^{-2i\nu(q)} e^{\frac{iz^2}{2}}	& 1 \end{pmatrix}, &  z \in X_3,
	\\
 \begin{pmatrix} 1	& \bar{q} z^{2i\nu(q)}e^{-\frac{iz^2}{2}}	\\
0	& 1 \end{pmatrix}, &  z \in X_4.
\end{cases}
\end{align}
Then for each $q \in \mathbb{D}$ the RH problem 
\begin{align*}
\begin{cases} m^X(q, \cdot) \in I + \dot{E}^2(\C \setminus X), 
	\\
m_+^X(q, z) =  m_-^X(q, z) v^X(q, z) \quad \text{for a.e.} \ z \in X, 
\end{cases} 
\end{align*}
has a unique solution $m^X(q, z)$. This solution satisfies
\begin{align*}
  m^X(q, z) = I - \frac{1}{z}\begin{pmatrix} 0 & \beta^X(q) \\ \overline{\beta^X(q)} & 0 \end{pmatrix} + O\biggl(\frac{q}{z^2}\biggr), \qquad z \to \infty,  \ q \in \mathbb{D}, 
\end{align*}  
where the error term is uniform with respect to $\arg z \in [0, 2\pi]$ and $q$ in compact subsets of $\mathbb{D}$. The function $\beta^X(q)$ is defined by
\begin{align}\label{betaXdef}
\beta^X(q) = \sqrt{\nu(q)} e^{i\left(\frac{3\pi}{4} - \arg (-q) + \arg \Gamma(i\nu(q)\right)}, \qquad q \in \mathbb{D}.
\end{align}
Moreover, for each compact subset $K$ of $\mathbb{D}$, 
\begin{align}\label{mXbound}
\sup_{q \in K} \sup_{z \in \C \setminus X} |m^X(q, z)| < \infty
\end{align}
and
\begin{align}\label{mXqbound}
\sup_{q \in K} \sup_{z \in \C \setminus X} \frac{|m^X(q, z)- I|}{|q|} < \infty.
\end{align}
\end{lemma}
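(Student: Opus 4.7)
The plan is to implement the classical Its reduction to the parabolic cylinder equation. First I would absorb the $z$-dependent factors in $v^X$ by setting
\[
\Psi(q,z) = m^X(q,z)\, z^{-i\nu(q)\sigma_3}\, e^{\frac{iz^2}{4}\sigma_3},
\]
where the branch of $z^{-i\nu}$ is fixed by a cut disjoint from $X$. A direct verification using \eqref{vXdef} shows that the conjugated jumps across each ray $X_j$ become \emph{constant} triangular matrices with off-diagonal entries $\pm q$ or $\pm\bar q/(1-|q|^2)$. Uniqueness of $m^X$ is then standard: since $\det v^X\equiv 1$, two solutions $m_1,m_2$ yield $m_1 m_2^{-1}$, which is analytic across $X$, belongs to $I + \dot E^2(\C)$, and hence equals $I$ by a vector Liouville argument.

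For existence, the key observation is that $\partial_z \Psi \cdot \Psi^{-1}$ has no jumps (because $\Psi$ jumps by constants) and is entire, while its growth at $\infty$ is controlled by the normalization $m^X = I + O(1/z)$. This forces
\[
\partial_z \Psi = \left(-\tfrac{iz}{2}\sigma_3 + B(q)\right)\Psi,\qquad B(q) = \begin{pmatrix} 0 & b_{12}(q) \\ b_{21}(q) & 0 \end{pmatrix},
\]
a Lax-type ODE whose entries satisfy the parabolic cylinder equation $u''(\zeta) + (\tfrac12 - \tfrac{\zeta^2}{4} - a)u = 0$ with $a = \pm i\nu$ after a rotation $\zeta = e^{\pm i\pi/4} z$. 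One then builds $m^X$ explicitly in each sector of $\C\setminus X$ as a matrix whose columns are linear combinations of parabolic cylinder functions $D_{\pm i\nu}(e^{\pm i\pi/4} z)$ and $D_{\pm i\nu}(e^{\pm 3i\pi/4} z)$, selected so that the asymptotic expansion $D_a(\zeta)\sim \zeta^a e^{-\zeta^2/4}$ produces the correct dominant behavior in each Stokes sector and matches the conjugated jumps along $X$.

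Extracting the $1/z$ coefficient from the large-$z$ asymptotics of the $D_a$'s, combined with the reflection relation $\Gamma(i\nu)\Gamma(-i\nu) = \pi/(\nu\sinh\pi\nu)$ and $1 - |q|^2 = e^{-2\pi\nu}$, identifies $\beta^X(q)$ with the stated formula \eqref{betaXdef}. The uniform bounds are then read off this explicit representation: joint continuity of $D_a$ in $(a,z)$ and compactness of $K \subset \mathbb{D}$ give \eqref{mXbound}, while $m^X(0,z)\equiv I$ (by uniqueness, since $v^X|_{q=0}=I$) together with smooth dependence of the $D_a$-representation on $q$ (with $\nu \to 0$ and the Gamma-factor analytic in $\nu$) yields the Lipschitz estimate \eqref{mXqbound}.

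The main obstacle I expect is the bookkeeping of branch cuts and Stokes sectors: determining which parabolic cylinder function and which rotation $e^{\pm i\pi/4}z$ corresponds to each of the four sectors of $\C\setminus X$, and tracking the resulting constants through the connection formulas, is what produces the precise phase $\tfrac{3\pi}{4} - \arg(-q) + \arg\Gamma(i\nu)$ in $\beta^X(q)$; the signs and factors of $2\pi$ here are historically delicate and must be verified by explicit matching in at least one sector before invoking the symmetry $m^X(q,z) = \overline{m^X(q,\bar z)}$ (which yields the $\overline{\beta^X(q)}$ entry) to fix the others.
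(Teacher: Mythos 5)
The paper does not carry out this computation itself: the proof of Lemma~\ref{lemmaexactsolution} is a one-line citation to Theorem~B.1 of \cite{NonLinSteepDescent-Lenells} (with $q$ replaced by $-q$). What you have sketched is precisely the classical Its reduction that underlies that cited result, so in spirit you are reconstructing the reference rather than taking a different route. The overall architecture — conjugate to constant jumps, observe $\partial_z\Psi\cdot\Psi^{-1}$ is entire and linear, reduce to the parabolic cylinder equation, and read $\beta^X$ off the Stokes connection constants together with the reflection formula for $\Gamma$ — is the right one, and your remark that the phase $\tfrac{3\pi}{4}-\arg(-q)+\arg\Gamma(i\nu)$ is extracted by matching in one sector and then propagated by the symmetry $m^X(q,z)=\overline{m^X(q,\bar z)}$ is exactly the delicate bookkeeping involved.

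There is, however, a concrete error in the very first step. With your substitution $\Psi = m^X z^{-i\nu\sigma_3}e^{\frac{iz^2}{4}\sigma_3}$, the jump $\Psi_-^{-1}\Psi_+ = A^{-1}v^X A$ with $A = z^{-i\nu\sigma_3}e^{\frac{iz^2}{4}\sigma_3} = \diag(a,a^{-1})$, $a = z^{-i\nu}e^{iz^2/4}$, has $(2,1)$-entry $a^{-2}v^X_{21} = -q\,z^{-4i\nu}e^{iz^2}$ on $X_1$, which is \emph{not} constant; the factors reinforce rather than cancel. The conjugation that actually removes the oscillations is the one with both exponent signs flipped,
\begin{equation*}
\Psi(q,z) = m^X(q,z)\, z^{\,i\nu(q)\sigma_3}\, e^{-\frac{iz^2}{4}\sigma_3},
\end{equation*}
for which $a^{-2}v^X_{21} = -q$ on $X_1$, $a^{2}v^X_{12} = -\bar q/(1-|q|^2)$ on $X_2$, etc. This is easy to fix, but as written the claimed ``direct verification'' would fail. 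Two smaller points you should tighten if this were to be made rigorous: the uniqueness argument via $m_1 m_2^{-1}\in I+\dot E^2(\C)$ needs an a priori boundedness statement (products of $\dot E^2$ functions need not be in $\dot E^2$; one usually invokes the determinant and local boundedness near $0$ and $\infty$ first), and the entirety of $\partial_z\Psi\cdot\Psi^{-1}$ at $z=0$ requires controlling the branch-point behaviour of $z^{i\nu}$ there, which is why the standard presentation places the branch cut along one of the rays and tracks the cyclic product of the (now constant) jumps around the origin.
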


\begin{proof} See Theorem B.1 in \cite{NonLinSteepDescent-Lenells} with \(q\) replaced by \(-q\). 
\end{proof}

\subsection{Parametrix near $k_0$}

Fix \(0<\epsilon<(\inf_{\zeta\in[0,c_0]} k_0(\zeta)-E_2)/2\). 
We seek a \(2\times2\) matrix valued function \(m^{k_0}\) with jumps along \(\Gamma^{(4)}\cap D_\epsilon(k_0)\) such that the corresponding jump matrix is close to \(v^{(4)}\) and such that \(m^{k_0}\) is close to \(m^{(\infty)}\)  on \(\partial D_\epsilon(k_0)\) for large \(t\). 
In order to find \(m^{k_0}\), we relate \(m^{(4)}\) to the solution \(m^X\) of Lemma \ref{lemmaexactsolution} by making a local change of variables for \(k\) near \(k_0\).

We introduce a new variable \(z\equiv z(\zeta,k)\) such that
\begin{align*}
\frac{iz^2}2=2it(g(k)-g(k_0)).
\end{align*}
Hence we choose
\begin{align*}
z=\sqrt{t}(k-k_0)\psi(\zeta,k)
\end{align*}
where 
\begin{align}\label{defpsi}
\psi(\zeta,k)=2\sqrt{\frac{g(k)-g(k_0)}{(k-k_0)^2}}.
\end{align}
Note that \(\partial_kg(k_0)=0\) and \(\partial_k^2g(k_0)=4(k_0-k_1)/{X(k_0)}> 0\) (cf. \eqref{gfunction-derivative}), so that \(\psi^2\) is analytic and non-zero in a neighbourhood of \(k_0\). Since \(\psi^2\) is continuous with respect to \((\zeta,k)\), we may (by making \(\epsilon\) smaller) assume that \(\psi^2\) is non-zero in \(D_\epsilon(k_0)\).
We fix the branch of the square root in \eqref{defpsi}  by requiring that \(\re \psi(\zeta,k)>0\) for \(k \in D_\epsilon(k_0)\).

By making \(\epsilon\) smaller if necessary, we may assume that that for each \(\zeta\in[0,c_0]\), the map \(k\mapsto z(\zeta,k)\) is a biholomorphism from \(D_\epsilon(k_0)\) onto some neighbourhood of the origin.
This follows from the fact that \(\sqrt{\partial_k^2g(k_0)}\)  can be uniformly bounded from below and above for \(\zeta\in [0,c_0]\).

Let \(\mathcal{X}\) be the cross defined by parts \(1\), \(2\), \(3\), \(4\), \(5\) and \(6\) of the contour \(\Gamma^{(4)}\) and let \(\mathcal{X}^\epsilon=\mathcal{X}\cap D_\epsilon(k_0)\).
By deforming the contour slightly we may assume that \(\mathcal{X}^\epsilon\) is mapped into \(X\) under the map \(k\mapsto z(\zeta,k)\). 
Due to the symmetry \(z(\zeta,k)=\overline{z(\zeta,\overline{k})}\) we may in addition assume that the deformed contour is invariant (up to orientation) under the involution  \(k\mapsto \bar k\).

Using Lemma \ref{Dlemma} (g) we write the function \(D\) for \(k\in D_\epsilon(k_0)\) as 
\begin{align*}
D(\zeta,k)
&=(k-k_0)^{{i}\nu}D_b(\zeta,k)
=z^{i\nu}D_0(\zeta,t)D_1(\zeta,k)
\end{align*} 
where the functions \(D_0(\zeta,t)\) and \(D_1(\zeta,k)\) are defined by
\begin{align*}
D_0(\zeta,t)&=t^{\frac{-i\nu}2}\psi(\zeta,k_0)^{-i\nu}D_b(\zeta,k_0),\quad t>0,
\end{align*}
and
\begin{align*}
D_1(\zeta,k)&=e^{-i\nu\ln\left(\frac{\psi(\zeta,k)}{\psi(\zeta,k_0)}\right)}\frac{D_b(\zeta,k)}{D_b(\zeta,k_0)},\quad k\in D_\epsilon(k_0)\setminus (-\infty,k_0].
\end{align*}
Define \(\tilde m(x,t,z)\) by
\begin{align*}
\tilde m(x,t,z(\zeta,k))=m^{(4)}(x,t,k)e^{-itg(\zeta,k_0)\sigma_3}D_0(\zeta,t)^{\sigma_3},\quad k\in D_\epsilon(k_0)\setminus \Gamma^{(4)}.
\end{align*}
Then \(\tilde m\) is a sectionally analytic function which satisfies \(\tilde m_+=\tilde m_-\tilde v\) for \(k\in \Gamma^{(4)}\cap D_\epsilon(k_0)\), with the jump matrix
\begin{align*}
\tilde v=D_0(\zeta,t)^{-\sigma_3}e^{itg(\zeta,k_0)\sigma_3}v^{(4)}e^{-itg(\zeta,k_0)\sigma_3}D_0(\zeta,t)^{\sigma_3}
\end{align*}
 given by
\begin{align*}
\tilde v(x,t,z)
&=\begin{cases}
 \begin{pmatrix} 1 & 0 \\ -z^{-2i\nu}D_1^{-2}(r_{1,a}+h) e^{\frac{iz^2}2} & 1 \end{pmatrix},&k\in\Gamma_1^{(4)}\cap D_\epsilon(k_0),
	\\
\begin{pmatrix} 1 & - z^{2i\nu}D_1^2r_{2,a} e^{-\frac{iz^2}2} \\ 0 & 1 \end{pmatrix},&k\in\Gamma_2^{(4)}\cap D_\epsilon(k_0),
	\\
 \begin{pmatrix} 1 & 0 \\ z^{-2i\nu}D_1^{-2}r_{2,a}^*  e^{\frac{iz^2}2} & 1 \end{pmatrix},&k\in\Gamma_3^{(4)}\cap D_\epsilon(k_0),
	\\
 \begin{pmatrix} 1 & z^{2i\nu}D_1^2(r_{1,a}^*+h^*)e^{-\frac{iz^2}2} \\ 0  & 1 \end{pmatrix},&k\in\Gamma_4^{(4)}\cap D_\epsilon(k_0),
 	\\
 \begin{pmatrix} 1 & z^{2i\nu}D_1^2(r_{1,a}^*+h_a^*)e^{-\frac{iz^2}2} \\ 0  & 1 \end{pmatrix},&k\in\Gamma_5^{(4)}\cap D_\epsilon(k_0),
	\\
 \begin{pmatrix} 1 & 0 \\ -z^{-2i\nu}D_1^{-2}(r_{1,a}+h_a) e^{\frac{iz^2}2} & 1 \end{pmatrix},&k\in\Gamma_6^{(4)}\cap D_\epsilon(k_0),
	\\
 \begin{pmatrix} 1-|r_{1,r}|^2 & z^{2i\nu}D_1^2r_{1,r}^*e^{-\frac{iz^2}2} \\ -z^{-2i\nu}D_1^{-2}r_{1,r} e^{\frac{iz^2}2} & 1 \end{pmatrix},&k\in\Gamma_{7}^{(4)}\cap D_\epsilon(k_0),
 	\\
\begin{pmatrix} 1 & 0 \\ -z^{-2i\nu}D_1^{-2}h_{r} e^{\frac{iz^2}2} & 1 \end{pmatrix},&k\in\Gamma_8^{(4)}\cap D_\epsilon(k_0),
 	 \\
\begin{pmatrix} 1 & z^{2i\nu}D_1^2h_{r}^*e^{-\frac{iz^2}2} \\ 0  & 1 \end{pmatrix},&k\in\Gamma_9^{(4)}\cap D_\epsilon(k_0),
	\\
\begin{pmatrix}1&(z^{2i\nu})_+D_{1+}^2r_{2,r}e^{-\frac{iz^2}2}\\-(z^{-2i\nu})_-D_{1-}^{-2}r_{2,r}^*e^{\frac{iz^2}2}&1-|r_{2,r}|^2(1-|r|^2)^2\end{pmatrix},&k\in\Gamma_{10}^{(4)}\cap D_\epsilon(k_0).
\end{cases}
\end{align*}
Note that the parts \(\Gamma_{i}^{(4)}\cap D_\epsilon(k_0)\) for \(i=5,6,8,9\), are non-empty only for \(k_0\) close to \(\kappa_+\).
Define
\begin{align*}
q\equiv q(\zeta)=r(k_0),\quad \zeta\in[0,c_0].
\end{align*}
For fixed \(z\in X\) we have \(k(\zeta,z)\to k_0\) as \(t\to \infty\),
which implies
\begin{align*}
D_1(\zeta,k)\to 1, \quad r_{1,a}(k)+h(k)\to r(k_0)= q,\quad r_{2,a}(k)\to \frac{\overline{r(k_0)}}{1-|r(k_0)|^2}=\frac{\bar q}{1-|q|^2},
\end{align*}
as \(t\to \infty\). 
Hence we expect \(\tilde v\) to tend to the jump matrix \(v^X\) defined in \eqref{vXdef} as \(t\) becomes large. By definition of \(\tilde m\) this means that the jumps of \(m^{(4)}\) for \(k\) near \(k_0\) approach the jumps of the function \(m^XD_0(\zeta,t)^{-\sigma_3}e^{itg(\zeta,k_0)\sigma_3}\) as \(t\to \infty\). 
A suitable approximation of \(m^{(4)}\) in the neighbourhood \(D_\epsilon(k_0)\) of \(k_0\) is thus given by a \(2\times 2\)-matrix valued function \(m^{k_0}\) of the form
\begin{align}\label{approximationnearcross}
m^{k_0}(x,t,k)=Y(\zeta,t,k)m^X(q(\zeta),z(\zeta,k))D_0(\zeta,t)^{-\sigma_3}e^{itg(\zeta,k_0)\sigma_3}
\end{align}
where \(Y(\zeta,t,k)\) is a matrix valued function which is analytic for \(k\in D_\epsilon(k_0)\). 
Since we want \(m^{k_0}\) to be close to \(m^{(\infty)}\) on \(\partial D_\epsilon(k_0)\) for large \(t\), we choose 
\begin{align}\label{defY}
Y(\zeta,t,k)=m^{(\infty)}(k)e^{-itg(\zeta,k_0)\sigma_3}D_0(\zeta,t)^{\sigma_3}.
\end{align}

\begin{lemma}\label{localmodellemma}
For each \(\zeta\in [0,c_0]\) and \(t>0\), the function \(m^{k_0}(x,t,k)\) defined in \eqref{approximationnearcross} with \(Y\) given by \eqref{defY} is an analytic function of \(k\in D_\epsilon(k_0)\setminus \mathcal{X}^\epsilon\). Moreover,
\begin{align*}
|(m^{(\infty)})^{-1}m^{k_0}(x,t,k)-I|\leq C|q|\leq C,\quad \zeta\in[0,c_0],t>2,k\in \overline{D_\epsilon(k_0)}\setminus \mathcal{X}^\epsilon.
\end{align*}
Across \(\mathcal{X}^\epsilon\), \(m^{k_0}\) obeys the jump condition \(m_+^{k_0}=m_-^{k_0}v^{k_0}\), where the jump matrix \(v^{k_0}\) satisfies
\begin{align}\label{Lpestimatesjufmp}
\begin{cases}
\|v^{(4)}-v^{k_0}\|_{L^1(\mathcal{X}^\epsilon)}\leq Ct^{-1}\ln t,
\\
\|v^{(4)}-v^{k_0}\|_{L^2(\mathcal{X}^\epsilon)}\leq Ct^{-3/4}\ln t,
\\
\|v^{(4)}-v^{k_0}\|_{L^\infty(\mathcal{X}^\epsilon)}\leq Ct^{-1/2}\ln t,
\\
\|v^{(4)}-v^{k_0}\|_{(L^1\cap L^2\cap L^\infty)(( \Gamma^{(4)}_{7}\cup \Gamma^{(4)}_{8}\cup \Gamma^{(4)}_{9}\cup\Gamma^{(4)}_{10})\cap D_\epsilon(k_0))}\leq Ct^{-3/2},
\end{cases}
\end{align}
uniformly for \(\zeta\in[0,c_0]\) and \(t>2\).
Furthermore, as \(t\to\infty\),
\begin{align}\label{asymptoticsmk0-1-Ipart1}
\|m^{(\infty)}(m^{k_0}(x,t,\cdot))^{-1}-I\|_{L^\infty(\partial D_\epsilon(k_0))}=O\left(\frac{q}{\sqrt{t}}\right),
\end{align}
and, for \(j=0,1\), 
\begin{align}\label{asymptoticsmk0-1-Ipart2mmod}
\frac{1}{2\pi i}&\int_{\partial D_\epsilon(k_0)}k^j\left(m^{(\infty)}(k)(m^{k_0}(x,t,k))^{-1}-I\right){dk}\nonumber
	\\&=-k_0^j\frac{Y(\zeta,t,k_0)m_1^X(\zeta)Y(\zeta,t,k_0)^{-1}}{\sqrt{t}\psi(\zeta,k_0)}+O\left(\frac{q}{t}\right),
\end{align}
uniformly with respect to \(\zeta\in[0,c_0]\), where \(m_1^X(\zeta)\) is defined by
\begin{align}\label{definitionm1X}
m_1^X(\zeta)=-\begin{pmatrix} 0 & \beta^X(q) \\ \overline{\beta^X(q)} & 0 \end{pmatrix}.
\end{align}
\end{lemma}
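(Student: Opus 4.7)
The plan is to verify the four assertions of the lemma — analyticity, the $L^\infty$-bound on $(m^{(\infty)})^{-1}m^{k_0}-I$, the $L^p$-bounds on $v^{(4)}-v^{k_0}$, and the boundary asymptotics on $\partial D_\epsilon(k_0)$ — in turn, combining algebraic manipulation of the definitions \eqref{approximationnearcross}, \eqref{defY} with the decay estimates from Lemmas \ref{Dlemma}, \ref{analyticapproximation r2}, \ref{analyticapproximationh}, \ref{analyticapproximationr1} and \ref{lemmaexactsolution}. Analyticity is immediate: with $\epsilon$ chosen so that $\overline{D_\epsilon(k_0)}\cap[E_1,E_2]=\emptyset$, $m^{(\infty)}$ and $\psi$ are analytic in $D_\epsilon(k_0)$, and $z(\zeta,\cdot)$ is a biholomorphism sending $\mathcal X^\epsilon$ into $X$. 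A direct computation yields
\[
(m^{(\infty)})^{-1}m^{k_0}-I \;=\; e^{-itg(k_0)\sigma_3}D_0^{\sigma_3}\,(m^X-I)\,D_0^{-\sigma_3}e^{itg(k_0)\sigma_3},
\]
and \eqref{mXqbound} together with the uniform bound $|D_0^{\pm 1}|=|D_b(\zeta,k_0)^{\pm 1}|\leq C$ from Lemma \ref{Dlemma}(g) (the factors $t^{\mp i\nu/2}$ and $\psi(\zeta,k_0)^{\mp i\nu}$ being unimodular) delivers the stated estimate.

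For the $L^p$-bounds on the jump, I compute the model jump
\[
v^{k_0} \;=\; e^{-itg(k_0)\sigma_3}D_0^{\sigma_3}\,v^X\,D_0^{-\sigma_3}e^{itg(k_0)\sigma_3},
\]
use the factorization $D(\zeta,k)=z^{i\nu}D_0(\zeta,t)D_1(\zeta,k)$ together with the identity $\tfrac{iz^2}{2}+2itg(k_0)=2itg(k)$, and recast both $v^{(4)}$ and $v^{k_0}$ in a common form on each leg of $\mathcal X^\epsilon$. On comparison, $v^{(4)}-v^{k_0}$ reduces to terms of the shape $D_1^{-2}(r_{1,a}+h)-q$, $D_1^{2}r_{2,a}-\tfrac{\bar q}{1-|q|^2}$, $D_1^{-2}(r_{1,a}+h_a)-q$, and their Schwartz conjugates, each multiplied by $e^{\pm 2itg}$ and a unimodular factor. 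Applying Lemma \ref{Dlemma}(g), which gives $|D_1^{\pm 2}-1|\leq C|k-k_0|(1+|\ln|k-k_0||)$, together with the Hölder-type bounds $|r_{j,a}-r_j(k_0)|\leq C|k-k_0|e^{t|\im g|/4}$ and $|h_a-h(\kappa_+)|\leq C|k-k_0|e^{t|\im g|/4}$ from Lemmas \ref{analyticapproximation r2}, \ref{analyticapproximationh}, \ref{analyticapproximationr1}, the difference is controlled by $C|k-k_0|(1+|\ln|k-k_0||)e^{-ct|k-k_0|^2}$ on the cross (the residual $e^{t|\im g|/4}$ being absorbed into the Gaussian damping of $|e^{\pm 2itg}|$). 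The change of variable $u=\sqrt{t}(k-k_0)$ reduces the various $L^p$-norms to finite Gaussian integrals weighted by $|u|^a(1+|\ln(|u|/\sqrt t)|)$; the splitting $|\ln(|u|/\sqrt t)|\leq |\ln|u||+\tfrac12\ln t$ is what produces the $\ln t$ factor in the first three bounds of \eqref{Lpestimatesjufmp}. On the residual legs $\Gamma^{(4)}_{7},\ldots,\Gamma^{(4)}_{10}$ one has $v^{k_0}=I$, and the $O(t^{-3/2})$-bounds follow at once from the $O(t^{-3/2})$-decay of $r_{1,r}$, $h_r$ and $r_{2,r}$.

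For the boundary asymptotics I use the clean identity
\[
m^{(\infty)}(k)(m^{k_0}(x,t,k))^{-1}-I \;=\; Y(\zeta,t,k)\,\bigl[(m^X(q,z))^{-1}-I\bigr]\,Y(\zeta,t,k)^{-1},
\]
which follows by inserting $Y=m^{(\infty)}e^{-itg(k_0)\sigma_3}D_0^{\sigma_3}$ into \eqref{approximationnearcross}. On $\partial D_\epsilon(k_0)$ one has $|z|\geq c\sqrt t$ (since $|\psi|$ is bounded below) together with $|Y^{\pm 1}|\leq C$ (from boundedness of $m^{(\infty)}$ on $\partial D_\epsilon(k_0)$ and of $|D_0^{\pm 1}|$), so the expansion $(m^X)^{-1}=I-z^{-1}m_1^X+O(q/z^2)$ derived from Lemma \ref{lemmaexactsolution} immediately yields \eqref{asymptoticsmk0-1-Ipart1}. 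Inserting the same expansion into the contour integral and substituting $z=\sqrt t(k-k_0)\psi(\zeta,k)$ produces
\[
\frac{1}{2\pi i}\int_{\partial D_\epsilon(k_0)}\!k^j\bigl(m^{(\infty)}(m^{k_0})^{-1}-I\bigr)\,dk \;=\; -\frac{1}{2\pi i\sqrt t}\int_{\partial D_\epsilon(k_0)}\!\frac{k^j\,Y(\zeta,t,k)\,m_1^X(\zeta)\,Y(\zeta,t,k)^{-1}}{(k-k_0)\,\psi(\zeta,k)}\,dk + O\!\left(\tfrac{q}{t}\right),
\]
and since $k^jY(\zeta,t,k)m_1^X(\zeta)Y(\zeta,t,k)^{-1}/\psi(\zeta,k)$ is analytic in $D_\epsilon(k_0)$, Cauchy's residue theorem evaluates the main term to $-k_0^jY(\zeta,t,k_0)m_1^X(\zeta)Y(\zeta,t,k_0)^{-1}/(\sqrt t\,\psi(\zeta,k_0))$, confirming \eqref{asymptoticsmk0-1-Ipart2mmod}.

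The main obstacle is the second step: ten legs of contour must be compared individually, and tracking the logarithm from Lemma \ref{Dlemma}(g) — which is what generates the $\ln t$ dependence in \eqref{Lpestimatesjufmp} — separately from the polynomial contributions requires some care. The remaining pieces are either a purely algebraic rearrangement (analyticity, the $Y(m^X)^{-1}Y^{-1}$ identity) or a direct application of the residue theorem.
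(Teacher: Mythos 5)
Your proposal is correct and follows exactly the route the paper delegates by citation to Lemma~6.3 of \cite{DNLSArrudaLenells}: conjugate by $Y=m^{(\infty)}e^{-itg(k_0)\sigma_3}D_0^{\sigma_3}$, use the asymptotic expansion and bounds for $m^X$ from Lemma~\ref{lemmaexactsolution}, compare $\tilde v$ with $v^X$ on each leg of the cross using the H\"older bounds for $r_{1,a},r_{2,a},h_a$ and the estimate for $D_1$ from Lemma~\ref{Dlemma}(g), and evaluate the boundary integral by residues after inserting $(m^X)^{-1}=I-z^{-1}m_1^X+O(q/z^2)$ together with $z=\sqrt{t}(k-k_0)\psi(\zeta,k)$ and $|z|\geq c\sqrt{t}$ on $\partial D_\epsilon(k_0)$. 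The algebraic identities $(m^{(\infty)})^{-1}m^{k_0}-I=e^{-itg(k_0)\sigma_3}D_0^{\sigma_3}(m^X-I)D_0^{-\sigma_3}e^{itg(k_0)\sigma_3}$ and $m^{(\infty)}(m^{k_0})^{-1}-I=Y[(m^X)^{-1}-I]Y^{-1}$ are correct, as is the change of variables $u=\sqrt{t}(k-k_0)$ producing the $\ln t$ factors from the splitting $|\ln(|u|/\sqrt t)|\leq|\ln|u||+\tfrac12\ln t$. This is the same approach the paper intends.

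One imprecision worth noting: Lemma~\ref{analyticapproximationh} bounds $|h_a(t,k)-h(\kappa_+)|$ by $C|k-\kappa_+|e^{t|\im g|/4}$, not by $C|k-k_0|e^{t|\im g|/4}$ as you wrote. On legs $5,6$ (which meet $D_\epsilon(k_0)$ only when $k_0$ is close to $\kappa_+$), comparing $r_{1,a}+h_a$ with $q=r(k_0)=r_1(k_0)+h(k_0)$ therefore also produces the extra term $h(\kappa_+)-h(k_0)=O(|\kappa_+-k_0|)$, which must be absorbed using the geometry of the contour (those legs are short and confined near $\kappa_+$, and $|\kappa_+-k_0|$ is controlled on the $\zeta$-range where they intersect $D_\epsilon(k_0)$). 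The paper glosses over this as well by citing the reference, so it is not a gap in your argument relative to the paper's intent, but the bookkeeping on those two legs is not quite as symmetric as your write-up suggests.
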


\begin{proof}
See for instance Lemma 6.3 in \cite{DNLSArrudaLenells}.
 The necessary estimates for the analytic approximations of \( r_2\), \(r_1\) and \(h\), as well as for the functions \(D\), \(D_0\) and \(D_1\), follow from Lemma \ref{analyticapproximation r2}, \ref{analyticapproximationr1} and \ref{analyticapproximationh}, and Lemma \ref{Dlemma} together with the uniform boundedness of \(\nu\), respectively.
\end{proof}

\section{Asymptotic Analysis}\label{asymptoticanalysis}

Define the approximate solution \(m^{app}\) by
\begin{align*}
m^{app}=\begin{cases}m^{k_0},& k\in D_\epsilon(k_0),
	\\
	m^{(\infty)},& \textrm{else}.\end{cases}
\end{align*}
The function \(\hat m(x,t,k)\) defined by
\begin{align}\label{definitionmhat}
\hat m=m^{(4)}(m^{app})^{-1}
\end{align}
satisfies the jump relation
\begin{align*}
\hat m_+(x,t,k) = \hat m_-(x, t, k) \hat v(x, t, k) \quad \text{for a.e.} \ k \in \hat \Gamma,
\end{align*}
where \(\hat \Gamma=\Gamma^{(4)}\cup\partial D_\epsilon(k_0)\), see Figure \ref{Gammahat.pdf-new}, and the jump matrix \(\hat v\) is given by
\begin{align*}
\hat v=\hat m_-^{-1}\hat m_+=\begin{cases}
	m_-^{(\infty)}v^{(4)}(m_+^{(\infty)})^{-1},& k\in \hat\Gamma\setminus\overline{D_{\epsilon}(k_0)},
		\\
	m^{(\infty)}(m^{k_0})^{-1},&k \in\partial D_\epsilon(k_0),
		\\
	m_-^{k_0}v^{(4)}(m_+^{k_0})^{-1},& k\in \hat\Gamma\cap{D_{\epsilon}(k_0)}.
	\end{cases}
\end{align*}
The next lemma shows that \(\hat v-I\) is small for  large \(t\).
\begin{figure}
\begin{center}
 \begin{overpic}[width=.7\textwidth]{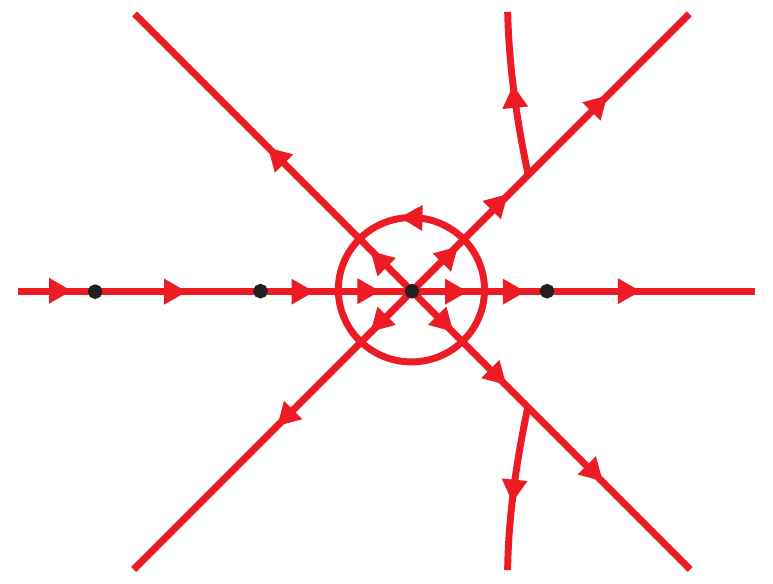}
      \put(100,36.5){\small $\hat\Gamma$}
      \put(10.5,33){\small $E_1$}
      \put(32,33){\small $E_2$}
      \put(70,34){\small $\kappa_+$}
      \put(51.9,33){\small $k_0$}
    \end{overpic}
     \begin{figuretext}\label{Gammahat.pdf-new}
        The contour  $\hat\Gamma$ in the complex $k$-plane. 
         \end{figuretext}
     \end{center}
\end{figure}

\begin{lemma}\label{omegahatestimateslemma}
Let \(\hat w=\hat v-I\). Then the following estimates hold uniformly for \(\zeta\in[0,c_0]\) and \(t\geq2\): 
\begin{subequations}\label{omegahatestimates}
\begin{align}
&\|(1+|k|^2)\hat w\|_{(L^1\cap L^2\cap L^\infty)(\mathcal{X}\setminus\overline{D_\epsilon(k_0)})}\leq Ce^{-ct},\label{omegahatestimatecrossoutsidedisk}
	\\
&\|(1+|k|^2)\hat w\|_{(L^1\cap L^2\cap L^\infty)(\Gamma^{(4)}\setminus(\mathcal{X}\cup\partial{D_\epsilon(k_0)}))}\leq Ct^{-3/2},\label{omegahatestimatereallineoutsidedisk}
	\\
&\|(1+|k|^2)\hat w\|_{(L^1\cap L^2\cap L^\infty)(\partial D_\epsilon(k_0))}\leq Cqt^{-1/2},\label{omegahatestimatereallineboundarydisk}
	\\
&\|(1+|k|^2)\hat w\|_{L^1(\mathcal{X}^\epsilon)}\leq Ct^{-1}\ln t,\label{omegahatestimaterecross1}
	\\
&\|(1+|k|^2)\hat w\|_{L^2(\mathcal{X}^\epsilon)}\leq Ct^{-3/4}\ln t,\label{omegahatestimaterecross2}
	\\
&\|(1+|k|^2)\hat w \|_{L^\infty(\mathcal{X}^\epsilon)}\leq Ct^{-1/2}\ln t.\label{omegahatestimaterecross3}
\end{align}
\end{subequations}
\end{lemma}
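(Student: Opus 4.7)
The plan is to estimate $\hat w = \hat v - I$ separately on each of the four types of subcontours of $\hat\Gamma$ dictated by the piecewise definition of $\hat v$. A common principle is that on any of these pieces one may write $\hat w = A_-(v^{(4)} - v^{\mathrm{app}})A_+^{-1}$ plus (on $\partial D_\epsilon(k_0)$) a term coming from the mismatch of the two parametrices, where $A_\pm \in \{m^{(\infty)}_\pm, m^{k_0}_\pm\}$ and $v^{\mathrm{app}}\in\{I, v^{(\infty)}, v^{k_0}\}$ is the appropriate approximating jump. Since $m^{(\infty)}$ is uniformly bounded away from small disks around $E_1,E_2$, and $m^{k_0}$ is uniformly bounded on $\overline{D_\epsilon(k_0)}$ by Lemma \ref{localmodellemma}, the task will essentially reduce to estimating $v^{(4)}-v^{\mathrm{app}}$ in the indicated norms, weighted by $(1+|k|^2)$.

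For \eqref{omegahatestimatecrossoutsidedisk} (the deformed cross outside the disk), the entries of $v^{(4)}-I$ are products of the analytic approximations $r_{1,a}, r_{2,a}, h_a$ (or their Schwartz conjugates) with the exponentials $e^{\pm 2it g}$. Combining the pointwise estimates \eqref{r1aestimates}, \eqref{ r2aestimatesecond}, \eqref{haestimates} of the form $|r_{i,a}|,|h_a| \le Ce^{\tfrac{t}{4}|\im g|}/(1+|k|)$ with the decay factor $|e^{\pm 2itg}| = e^{-2t|\im g|}$ produced by the exponentials along $\mathcal X$, one finds that every non-diagonal entry of $v^{(4)}-I$ is $O(e^{-\tfrac{7t}{4}|\im g(k)|}/(1+|k|))$. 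Since $|\im g|$ is bounded below by a positive constant on $\mathcal X \setminus \overline{D_\epsilon(k_0)}$ uniformly in $\zeta\in[0,c_0]$ and grows quadratically at infinity along the cross, the weighted $L^1,L^2,L^\infty$ norms are all $O(e^{-ct})$, and $m^{(\infty)}$ is uniformly bounded here.

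For \eqref{omegahatestimatereallineoutsidedisk} (the remaining real-axis and cut pieces) only the remainder functions $r_{1,r}, h_r, r_{2,r}$ appear in $v^{(4)}-v^{(\infty)}$. Off the cut the exponential factors $e^{\pm 2itg}$ have modulus $1$ because $g$ is real there, and the $L^1\cap L^2\cap L^\infty$-estimates of $(1+|k|^2)r_{1,r}$, $(1+|k|^2)h_r$, $(1+|k|^2)r_{2,r}$ from Lemmas \ref{analyticapproximation r2}(c), \ref{analyticapproximationh}(c), \ref{analyticapproximationr1}(c) translate directly into the required $O(t^{-3/2})$ bound. On the cut $(E_1,E_2)$ the only nontrivial contribution sits in the $22$-entry of $v_{11}^{(4)}$ and has the shape $-2D_+D_-^{-1}\mathrm{Re}(r r_{2,r})e^{-2itg_+}$, which is controlled by the estimate of Lemma \ref{analyticapproximation r2}(c) for $|r_{2,r}|e^{-2t|\im g_+|}$. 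The delicate point here—and the main obstacle—is that $m^{(\infty)}$ has $|k-E_i|^{-1/4}$ singularities at the branch points, so the conjugation $m^{(\infty)}_-(\cdot)(m^{(\infty)}_+)^{-1}$ can a priori worsen the estimate by $|k-E_i|^{-1/2}$. The point is that in the proof of Lemma \ref{analyticapproximation r2} the remainder $r_{2,r}=f_r$ inherits from $f=r_2-f_0$ the vanishing estimate $|r_{2,r}(x,t,k)|\le C|k-E_1|^3|k-E_2|^3+C|k-E_1||k-E_2|e^{\tfrac{t}{4}|\im g_+|}$; this more than cancels the branch-point singularities of $m^{(\infty)}$, so the weighted estimates survive on $[E_1,E_2]$.

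The remaining two pieces follow from Lemma \ref{localmodellemma}. Estimate \eqref{omegahatestimatereallineboundarydisk} is an immediate consequence of \eqref{asymptoticsmk0-1-Ipart1} and the compactness of $\partial D_\epsilon(k_0)$, which makes the weight $(1+|k|^2)$ bounded. For \eqref{omegahatestimaterecross1}--\eqref{omegahatestimaterecross3} we use $v^{k_0} = (m^{k_0}_-)^{-1}m^{k_0}_+$ to write $\hat w = m^{k_0}_-\bigl(v^{(4)} - v^{k_0}\bigr)(m^{k_0}_+)^{-1}$ on $\mathcal X^\epsilon$; the uniform boundedness of $m^{k_0}$ on $\overline{D_\epsilon(k_0)}\setminus\mathcal X^\epsilon$ from Lemma \ref{localmodellemma}, together with the estimates \eqref{Lpestimatesjufmp} and the boundedness of $(1+|k|^2)$ on the compact set $\mathcal X^\epsilon$, yield all three $L^p$-bounds in \eqref{omegahatestimaterecross1}--\eqref{omegahatestimaterecross3}. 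All bounds are uniform in $\zeta\in[0,c_0]$ and $t\ge 2$ because the constituent estimates are.
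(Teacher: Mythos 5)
Your decomposition of $\hat\Gamma$, the identification of $\hat w$ on each piece, and the estimates for \eqref{omegahatestimatecrossoutsidedisk}, \eqref{omegahatestimatereallineboundarydisk}, and \eqref{omegahatestimaterecross1}--\eqref{omegahatestimaterecross3} all match the paper's approach and are correct. The problem is in your treatment of the branch cut piece of \eqref{omegahatestimatereallineoutsidedisk}.

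On $(E_1,E_2)$ you propose to absorb the $|k-E_i|^{-1/2}$ singularity coming from $\Delta_-^{\pm 2}$ in the conjugation $m^{(\infty)}_-(\cdot)(m^{(\infty)}_+)^{-1}$ by using the vanishing of $r_{2,r}$ at the branch points, namely $|r_{2,r}|\leq C|k-E_1|^3|k-E_2|^3 + C|k-E_1||k-E_2|e^{\frac{t}{4}|\im g_+|}$. This does preserve boundedness in $k$, but it does not preserve the rate in $t$ for the $L^\infty$ norm: after multiplying the dominant term $|k-E_1|e^{-\frac{7t}{4}|\im g_+|}$ (with $|\im g_+|\sim c|k-E_1|^{1/2}$) by $|k-E_1|^{-1/2}$, the supremum over $k$ degrades from $O(t^{-2})$ to $O(t^{-1})$, which is \emph{worse} than the stated $O(t^{-3/2})$. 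So your argument does not prove \eqref{omegahatestimatereallineoutsidedisk} as stated. The mechanism the paper actually relies on is different: the jump entry carries the extra factor $D_+D_-^{-1}$, and by Lemma \ref{Dlemma}\,(e),(f) one has $D_+D_-^{-1}=D_+^2/r\sim c\,|k-E_i|^{1/2}$ at each branch point, which \emph{exactly} cancels the $|k-E_i|^{-1/2}$ from $\Delta_-^{\pm 2}$. After writing out $m^{(\infty)}_-(v^{(4)}-v^{(\infty)})(m^{(\infty)}_+)^{-1}$ explicitly (using $\Delta_+=i\Delta_-$), each entry is $\re(rr_{2,r})e^{-2itg_+}$ times a combination of $D_+D_-^{-1}\Delta_-^{\pm 2}$ and $D_+D_-^{-1}$, all of which are uniformly bounded on $[E_1,E_2]$ by Lemma \ref{Dlemma}\,(f). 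The full $L^1\cap L^2\cap L^\infty = O(t^{-3/2})$ bound then follows directly from Lemma \ref{analyticapproximation r2}\,(c), with no loss. You mention the factor $D_+D_-^{-1}$ when writing down $v^{(4)}_{11}-v^{(\infty)}$ but never use its vanishing; that is the missing ingredient.

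One further small inaccuracy, harmless for the estimate: on the deformed cross $\mathcal X\setminus\overline{D_\epsilon(k_0)}$ the paper uses only $|\im g(k)|\geq c|k-k_0|$ (linear growth) rather than quadratic growth, which suffices because of the extra $1/(1+|k|)$ decay in the analytic-approximation bounds.
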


\begin{proof}
Note that
\begin{align*}
\hat w=\begin{cases}
	m^{(\infty)}(v^{(4)}-I)(m^{(\infty)})^{-1},& k\in \hat\Gamma\setminus(\overline{D_{\epsilon}(k_0)}\cup[E_1,E_2]),
		\\
	m_-^{(\infty)}(v^{(4)}-v^{(\infty)})(m_+^{(\infty)})^{-1},& k\in [E_1,E_2],
		\\
	m^{(\infty)}(m^{k_0})^{-1}-I,&k \in\partial D_\epsilon(k_0),
		\\
	m_-^{k_0}(v^{(4)}-v^{k_0})(m_+^{k_0})^{-1},& k\in \hat\Gamma\cap{D_{\epsilon}(k_0)}.
	\end{cases}
\end{align*}
Thus the estimates \eqref{omegahatestimaterecross1}-\eqref{omegahatestimaterecross3} are a consequence of \eqref{Lpestimatesjufmp} and \eqref{asymptoticsmk0-1-Ipart1} immediately gives \eqref{omegahatestimatereallineboundarydisk}.

For the estimate on part \(2\) of the contour lying outside the disk we note that the only nonzero entry of \(v^{(4)}-I\) is given by
\(
-D^2(k)r_{2,a}(k)e^{-2itg(k)}.
\) 
Since \(D\) is uniformly bounded with respect to \(k\) and \(\zeta\in[0,c_0]\) (cf. Lemma \ref{Dlemma} (h)), the estimate \eqref{ r2aestimatesecond} implies that the above term can be bounded by a uniform constant times \(e^{-\frac74|\im g(k)|}\). 
However, on \(\mathcal{X}\setminus\overline{D_\epsilon(k_0)}\) we have \(|\im g(k)|\geq c |k-k_0|\) for some uniform constant \(c>0\). 
This yields the estimate \eqref{omegahatestimatecrossoutsidedisk} on part \(2\) of the cross \(\mathcal{X}\setminus\overline{D_\epsilon(k_0)}\).
The other parts of the cross can be treated similarly.

Next we derive the estimate \eqref{omegahatestimatereallineoutsidedisk} on part \(11\) of the contour.
Using \(\Delta_+=i\Delta_-\) we compute
\begin{align*}
\hat w
	&=m_-^{(\infty)}\begin{pmatrix}0&0\\0&-2D_+D_-^{-1}\re(rr_{2,r})e^{-2itg_+}\end{pmatrix}(m_+^{(\infty)})^{-1}
	\\
	&=-\frac{\re(rr_{2,r})e^{-2itg_+}}2\begin{pmatrix}-iD_+D_-^{-1}(\Delta_-^2-\Delta_-^{-2})&-D_+D_-^{-1}(\Delta_-^2+\Delta_-^{-2}-2)\\D_+D_-^{-1}(\Delta_-^2+\Delta_-^{-2}+2)&iD_+D_-^{-1}(\Delta_-^2-\Delta_-^{-2})\end{pmatrix}.
\end{align*}
Lemma \ref{Dlemma} f) implies that the components of the last matrix are uniformly bounded on \([E_1,E_2]\) with respect to \(\zeta\in[0,c_0]\). 
In view of Lemma \ref{analyticapproximation r2} (c), the estimate \eqref{omegahatestimatereallineoutsidedisk} now follows for part \(11\) of the contour. The remaining parts can be treated in the same way since the jump matrix contains the small remainders \(r_{1,r}\), \(r_{2,r}\) or \(h_r\) so that the desired estimates are a consequence of Lemma \ref{analyticapproximationr1} (c), \ref{analyticapproximation r2} (c) and Lemma \ref{analyticapproximationh} (c).
\end{proof}

The estimates in Lemma \ref{omegahatestimateslemma} imply that
\begin{align}\label{omegahatcollectedestimates}
\begin{cases}
\|\hat w\|_{(L^1\cap L^2)(\hat \Gamma)}\leq Ct^{-1/2},
	\\
\|\hat w\|_{L^\infty(\hat \Gamma)}\leq Ct^{-1/2}\ln t,
\end{cases}
\qquad \zeta\in[0,c_0],\quad t\geq 2.
\end{align}
For \(f\in L^2(\hat \Gamma)\) the Cauchy transform  \( {\mathcal{C}} f\) is defined by
\begin{align}\label{Cauchytransform}
({\mathcal{C}} f)(\lambda)=\frac{1}{2\pi i}\int_{\hat \Gamma}\frac{f(z)}{z-\lambda} dz,\quad \lambda\in \C\setminus \hat \Gamma.
\end{align}
Let \(\mathcal{C}_+f\) and \(\mathcal{C}_-f\) denote the nontangential boundary values of \(\mathcal{C}f\) from the left and right sides of \(\hat \Gamma\), respectively. Then \(\mathcal{C}_+\) and \(\mathcal{C}_-\) lie in \(\mathcal{B}(L^2(\hat\Gamma))\) and satisfy \(\mathcal{C}_+-\mathcal{C}_-=I\), where \(\mathcal{B}(L^2(\hat\Gamma))\) denotes the Banach space of bounded linear operators on \(L^2(\hat\Gamma)\). 

The estimates \eqref{omegahatcollectedestimates} imply
\begin{align}\label{normComegaoperator}
\|\mathcal{C}_{\hat w}\|_{\mathcal{B}(L^2(\hat \Gamma))}\leq C\|\hat w\|_{L^\infty(\hat \Gamma)}\leq Ct^{-1/2}\ln t,\quad\zeta\in[0,c_0],\ t\geq 2,
\end{align}
with the operator \(\mathcal{C}_{\hat w}\colon L^2(\hat \Gamma)+L^\infty(\hat \Gamma)\to L^2(\Hat \Gamma)\) being defined by \(\mathcal{C}_{\hat w}f=\mathcal{C}_-(f\hat w)\).
It follows that there exists a time \(T>0\)  such that \(\|C_{\hat w}\|_{\mathcal{B}(L^2(\hat\Gamma))}\leq \frac12\) and \(I-\mathcal{C}_{\hat w(\zeta,t,\cdot)}\in \mathcal{B}(L^2(\hat \Gamma))\) is invertible  for all \(t\geq T\). 
We define \(\hat \mu(x,t,k)\in I+L^2(\hat \Gamma)\) for \(t \geq T\) by
\begin{align*}
\hat\mu=I+(I-\mathcal{C}_{\hat w})^{-1}\mathcal{C}_{\hat w}I.
\end{align*}
Using the Neumann series, we find that
\begin{align*}
\|\hat\mu-I\|_{L^2(\hat \Gamma)}\leq \frac{C\|\hat w\|_{L^2(\hat\Gamma)}}{1-\|\mathcal{C}_{\hat w}\|_{\mathcal{B}(L^2(\hat\Gamma))}},\quad t\geq T.
\end{align*}
Together with \eqref{omegahatcollectedestimates} and \eqref{normComegaoperator} it  follows that
\begin{align}\label{muhat-Iestimate}
\|\hat\mu(x,t,\cdot)-I\|_{L^2(\hat\Gamma)}\leq Ct^{-1/2},\quad t\geq T,\zeta\in [0,c_0].
\end{align}
The standard theory of \(L^2\)-RH problems now implies that there exists a unique solution \(\hat m\in I+\dot{E}^2( \C\setminus\hat\Gamma)\) of the RH problem 
\begin{align}\label{RHmhat}
\begin{cases}
\hat m(x, t, \cdot) \in I + \dot{E}^2(\C \setminus \hat\Gamma),\\
\hat m_+(x,t,k) = \hat m_-(x, t, k) \hat v(x, t, k) \quad \text{for a.e.} \ k \in \hat \Gamma,
\end{cases}
\end{align}
 for all \(t\geq T\). 
This solution is given by
\begin{align}\label{solutionmhat}
\hat m(x,t,k)=I+\mathcal{C}(\hat\mu\hat w)=I+\frac{1}{2\pi i}\int_{\hat \Gamma}\hat \mu(x,t,z)\hat w(x,t,z)\frac{dz}{z-k}.
\end{align}
For more details we refer for instance to \cite{CarlesonContoursLenells}.

\subsection{Asymptotics of \(\hat m\)} 

Let \(W\) be a nontangential sector at \(\infty\) with respect to \(\hat\Gamma\).
By \eqref{solutionmhat} we may write
\begin{align*}
\hat m(x,t,k)=I-\frac{1}{2\pi i}\int_{\hat\Gamma}(\hat\mu\hat w)(x,t,z)\left(\frac1k+\frac{z}{k^2}+\frac{z^2}{k^3}+\frac{z^3}{k^3(k-z)}\right)dz.
\end{align*}
Note that the quotient \(z/(z-k)\) can be bounded uniformly for  \(z\in\hat\Gamma\) and \(k\in W\) large enough. Furthermore, the \(L^2\)-norm of \(\hat\mu(x,t,\cdot)-I\) is bounded according to \eqref{muhat-Iestimate} and the \(L^1\) and \(L^2\)-norms of \(z^2 \hat w(x,t,z)\) are bounded due to Lemma \ref{omegahatestimateslemma}.
Thus we find
\begin{align*}
\hat m(x,t,k)=I+\frac{\hat m_1(x,t)}k+\frac{\hat m_2(x,t)}{k^2}+O(k^{-3}),\qquad k\in W,
\end{align*}
where the error term is uniform with respect to \(k\in W\) and the coefficients \(\hat m_j\) are given by
\begin{align}\label{mhatasymptotics}
\hat m_j(x,t)&:=-\frac{1}{2\pi i}\int_{\hat \Gamma}\hat\mu(x,t,k)\hat w(x,t,k) k^{j-1}dk,\quad j=1,2.
\end{align}

Next we will compute the asymptotics of \(\hat m_1\) and \(\hat m_2\) as \(t\to\infty\).
By \eqref{omegahatestimates} and \eqref{muhat-Iestimate}, we have
\begin{align*}
\int_{\Gamma'}&\hat\mu(x,t,k)\hat w(x,t,k)k^{j-1}dk
\\
	&=\int_{\Gamma'}\hat w(x,t,k)k^{j-1}dk+\int_{\Gamma'}(\hat\mu(x,t,k)-I)\hat w(x,t,k)k^{j-1}dk
	\\
	&=O(\|k^{j-1}\hat w\|_{L^1(\Gamma')})+O(\|\hat\mu-I\|_{L^2(\Gamma')}\|k^{j-1}\hat w\|_{L^2(\Gamma')})
	\\
	&=O(t^{-3/2}),\quad t\to\infty,\quad j=1,2,
\end{align*}
uniformly in \(\zeta\in[0,c_0]\), where \(\Gamma'\!:=\Gamma^{(4)}\setminus(\mathcal{X}^\epsilon\cup \partial D_\epsilon(k_0))\). 
It follows that the contribution to the integrals in \eqref{mhatasymptotics}  from \(\Gamma'\) are \(O(t^{-3/2})\).
Similarly, the estimates \eqref{omegahatestimates} and \eqref{muhat-Iestimate} show that the contribution from \(\mathcal{X}^\epsilon\) to the right-hand side of \eqref{mhatasymptotics} is \(O(t^{-1}\ln t)\) as \(t\to\infty\) for \(j=1,2\).
By \eqref{asymptoticsmk0-1-Ipart2mmod}, \eqref{omegahatestimatereallineboundarydisk}, and \eqref{muhat-Iestimate},  the contribution from \(\partial D_\epsilon(k_0)\) to the right-hand side of \eqref{mhatasymptotics} is given by
\begin{align*}
-\frac{1}{2\pi i}&\int_{\partial D_\epsilon(k_0)}\hat w(x,t,k)k^{j-1}dk-\frac{1}{2\pi i}\int_{\partial D_\epsilon(k_0)}(\hat\mu(x,t,k)-I)\hat w(x,t,k)k^{j-1}dk
	\\
&=-\frac{1}{2\pi i}\int_{\partial D_\epsilon(k_0)}(m^{(\infty)}(k)(m^{k_0}(k))^{-1}-I)k^{j-1}dk
	\\&\hspace{10em}+O(\|\hat\mu-I\|_{L^2(\partial D_\epsilon(k_0))}\|k^{j-1}\hat w\|_{L^2(\partial D_\epsilon(k_0))})
	\\ 
&=k_0^{j-1}\frac{Y(\zeta,t,k_0)m_1^X(\zeta)Y(\zeta,t,k_0)^{-1}}{\sqrt{t}\psi(\zeta,k_0)}+O\left(\frac{q}t\right),\quad t\to\infty,j=1,2.
\end{align*}
Collecting the above contributions, it follows that
\begin{align}\label{mhatasymptoticspart2}
\hat m_{j}(x,t)
=k_0^{j-1}\frac{Y(\zeta,t,k_0)m_1^X(\zeta)Y(\zeta,t,k_0)^{-1}}{\sqrt{t}\psi(\zeta,k_0)}+O\left(\frac{\ln t}{t}\right),\quad j=1,2,
\end{align}
as \(t\to\infty\), uniformly with respect to \(\zeta\in[0,c_0]\). 

\section{Proof of the Main Result}

Using the results from the Sections \ref{sectiontransformationRHP}-\ref{asymptoticanalysis}, we are now ready to prove Theorem \ref{maintheorem}.

\subsection{The Solution of the Associated RH-problem}\label{rigoroustransformations}

After having solved the small norm RH problem \eqref{RHmhat}, our next step is to show that we can revert the transformations in Section \ref{sectiontransformationRHP} to obtain a solution of our original RH problem \eqref{RHm}.
However, since some of the transformation matrices are singular at the branch points, it is not clear that the corresponding RH problems are equivalent in the setting of Smirnoff classes.
This technical detail can be overcome by looking at the combined transformation matrix, which as shown below is bounded near the branch points.

Taking into account the transformations of Section \ref{sectiontransformationRHP}  as well as \eqref{definitionmhat}, we find that for \(\zeta\in[0,c_0]\) and \(t\geq T\) a solution \(m\) of the RH problem \eqref{RHm} can be formally constructed as
\begin{align}\label{fulltransformation}
m
	&=e^{itg_\infty\sigma_3}D(\zeta,\infty)^{-\sigma_3}\hat m m^{app}D(\zeta,k)^{\sigma_3}H_4^{-1}H_2^{-1}e^{-i(t g(k) - kx - 2k^2 t)\sigma_3},
\end{align}
where \(\hat m\in I+\dot{E}^2( \C\setminus\hat\Gamma)\) given by \eqref{solutionmhat} is the solution of the RH problem \eqref{RHmhat}.
Thus in order to show the existence of the RH problem \eqref{RHm}, it suffices to show that the right hand side of \eqref{fulltransformation} solves the original RH problem \eqref{RHm} in the setting of Smirnoff classes.
The necessary theory is standard, see for instance \cite{CarlesonContoursLenells}, in particular Theorem 5.12 therein.
Since the result needed in our case differs slightly from the one given in the literature we state it below for the reader's convenience.

\begin{figure}
\begin{center}
 \begin{overpic}[width=.4\textwidth]{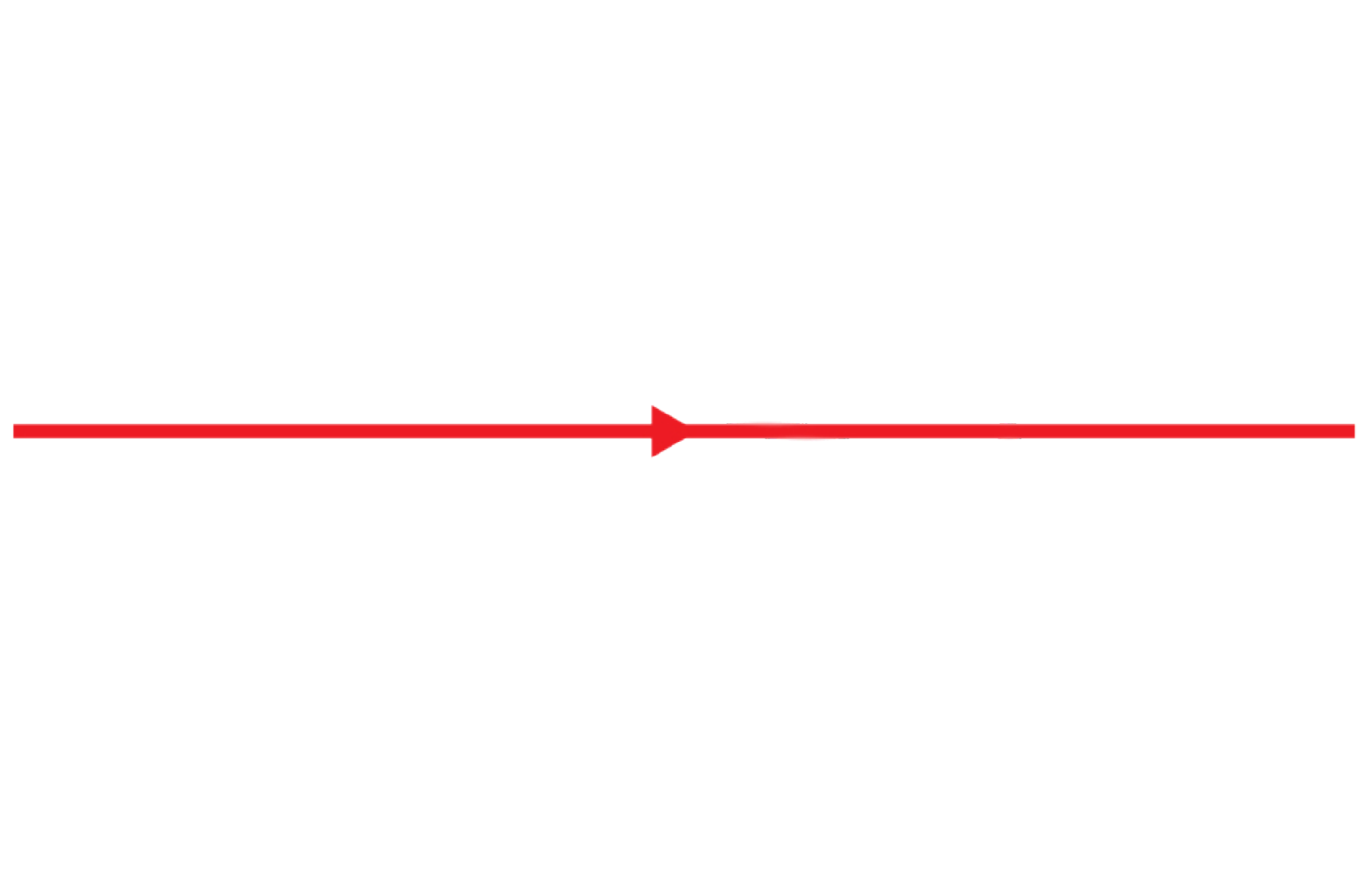}
      \put(102.5,29){\small $\Sigma$}
    \end{overpic} \qquad\qquad
     \begin{overpic}[width=.4\textwidth]{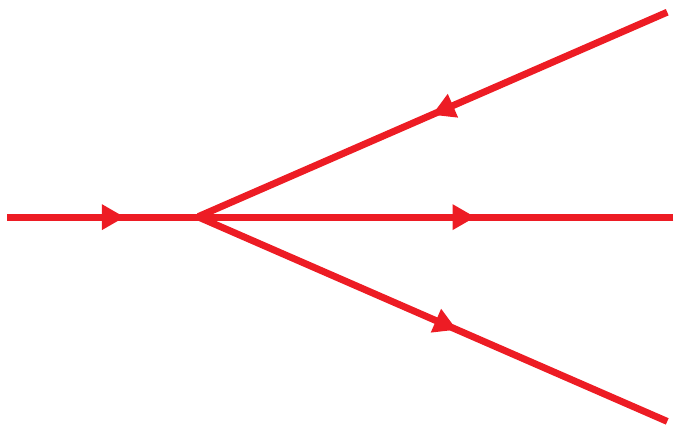}
      \put(102,29){\small $\hat{\Sigma}$}
      \put(49,44){\small $\gamma$}
    \end{overpic}
     \begin{figuretext}\label{Sigmahatdeform2.pdf}
        The contour  $\Sigma$ (left) and the contour $\hat{\Sigma} = \Sigma \cup \gamma$ (right). 
         \end{figuretext}
     \end{center}
\end{figure}

\begin{lemma}\label{deformationlemma2}
Let $\Sigma$ be a piecewise smooth contour and let $\hat{\Sigma} = \Sigma \cup \gamma$ denote $\Sigma$ with a lens through \(\infty\) added as in Figure \ref{Sigmahatdeform2.pdf}.
Consider the $2 \times 2$-matrix RH problem
\begin{align}\label{RHmdeform}
\begin{cases}
N \in I + \dot{E}^2(\C \setminus \Sigma),\\
N_+(k) = N_-(k) V_N(k) \quad \text{for a.e.} \ k \in \Sigma,
\end{cases}
\end{align}
where $V_N: \Sigma \to GL(2,\C)$ is a jump matrix defined on $\Sigma$ and let $u$ be a $2 \times 2$-matrix valued function such that $u - u(\infty) \in (\dot{E}^2 \cap E^\infty) (\C \setminus \hat{\Sigma})$ and $u^{-1} - u(\infty)^{-1} \in (\dot{E}^2 \cap E^\infty) (\C \setminus \hat{\Sigma})$. 
Then $m$ satisfies the RH problem (\ref{RHmdeform}) if and only if the function $\hat{m}$ defined by
\begin{align}\label{deformlemmadefincasinfty}
\hat{N}(k) = u(\infty)^{-1} N(k) u(k), \qquad k \in \C \setminus \hat{\Sigma},\end{align}
satisfies the RH problem 
\begin{align}\label{RHmhatdeform}
\begin{cases}
\hat{N} \in I + \dot{E}^2(\C \setminus \hat{\Sigma}),\\
\hat{N}_+(k) = \hat{N}_-(k) \hat{v}_N(k) \quad \text{for a.e.} \ k \in \hat{\Sigma},
\end{cases}
\end{align}
with
\begin{align}\label{hatvdeform}
\hat{v}_N = \begin{cases}
u_-^{-1} v_N u_+, & k \in \Sigma, \\
u_-^{-1} u_+, & k \in \gamma.
\end{cases}
\end{align}
\end{lemma}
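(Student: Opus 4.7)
The plan is a direct verification that the transformation $\hat N = u(\infty)^{-1} N u$ sets up a one-to-one correspondence between solutions of the two RH problems. The proof splits naturally into a forward direction (from \eqref{RHmdeform} to \eqref{RHmhatdeform}) and a backward direction (from \eqref{RHmhatdeform} to \eqref{RHmdeform}). Throughout I would exploit two standard facts about Smirnoff classes: multiplication by a bounded analytic function preserves $\dot E^2$-membership (one multiplies the exhausting curves' integrands), and if a larger contour $\hat\Sigma\supset\Sigma$ splits each component of $\C\setminus\Sigma$ into components of $\C\setminus\hat\Sigma$, then $\dot E^2(\C\setminus\Sigma)\subseteq\dot E^2(\C\setminus\hat\Sigma)$ upon restriction of the curves.

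For the forward direction, I would write
$$\hat N - I = u(\infty)^{-1}(N-I)\,u + u(\infty)^{-1}\bigl(u-u(\infty)\bigr).$$
The second term is in $\dot E^2(\C\setminus\hat\Sigma)$ by hypothesis on $u$. For the first, $N-I\in \dot E^2(\C\setminus\Sigma)$ restricts to $\dot E^2(\C\setminus\hat\Sigma)$, and then multiplication by the bounded analytic function $u\in E^\infty(\C\setminus\hat\Sigma)$ keeps us in $\dot E^2(\C\setminus\hat\Sigma)$. For the jump, on $k\in\Sigma$ one computes
$$\hat N_-^{-1}\hat N_+ = u_-^{-1}N_-^{-1}\,u(\infty)\,u(\infty)^{-1}N_+\,u_+ = u_-^{-1}v_N u_+,$$
while on $k\in\gamma\setminus\Sigma$ the function $N$ has no jump, so $\hat N_-^{-1}\hat N_+ = u_-^{-1}u_+$, matching \eqref{hatvdeform}.

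For the backward direction, given $\hat N$ solving \eqref{RHmhatdeform}, I define $N := u(\infty)\,\hat N\, u^{-1}$ on $\C\setminus\hat\Sigma$ and verify first that $N$ extends analytically across $\gamma$: for $k\in\gamma\setminus\Sigma$,
$$N_-^{-1}N_+ = u_-\hat N_-^{-1}\hat N_+\,u_+^{-1} = u_-(u_-^{-1}u_+)u_+^{-1} = I,$$
so $N$ is a well-defined sectionally analytic function on $\C\setminus\Sigma$. The jump across $\Sigma$ is computed analogously, giving $N_-^{-1}N_+ = v_N$. For the Smirnoff condition one writes
$$N - I = u(\infty)(\hat N - I)u^{-1} + u(\infty)\bigl(u^{-1}-u(\infty)^{-1}\bigr),$$
where now the hypothesis on $u^{-1}$ (rather than $u$) is what is used.

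The main (minor) obstacle is the last point: a priori each summand above lies only in $\dot E^2(\C\setminus\hat\Sigma)$, and one needs to upgrade the membership to $\dot E^2(\C\setminus\Sigma)$ using the analytic continuation of $N$ across $\gamma$. This is handled by gluing the exhausting sequences on either side of $\gamma$ into a single family of smooth curves inside a component of $\C\setminus\Sigma$; analyticity across $\gamma$ together with the uniform $L^2$-bounds on the pieces yields the requisite $L^2$-bound on the glued curves. Everything else is a straightforward multiplicative manipulation of boundary values, and uniqueness of the solution of each RH problem ensures that the two passages are mutually inverse.
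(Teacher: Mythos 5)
Your proof is correct and follows essentially the same route as the paper's: you verify the forward direction by splitting $\hat N-I$ into pieces whose $\dot E^2$-membership follows from the hypotheses (you use a two-term decomposition $u(\infty)^{-1}(N-I)u+u(\infty)^{-1}(u-u(\infty))$ where the paper expands into three terms, but both reduce to the same $\dot E^2$/$E^\infty$ multiplicativity facts), and the jump computation is identical. For the converse the paper just says it "follows similarly," whereas you spell out the reverse transformation and correctly flag the one genuinely technical point — gluing exhausting curves across $\gamma$ to upgrade $\dot E^2(\C\setminus\hat\Sigma)$ membership to $\dot E^2(\C\setminus\Sigma)$ — which the paper's brevity leaves implicit.
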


\begin{proof}
Suppose \(N\in I+\dot{E}^2(\C\setminus\Sigma)\) satisfies the RH problem \eqref{RHmdeform} and define \(\hat N(k)\) by \eqref{deformlemmadefincasinfty} for \(k\in\C\setminus\hat\Sigma\). We write
\begin{align*}
\hat N-I&=u(\infty)^{-1}(N-I)(u-u(\infty))+u(\infty)^{-1}(u-u(\infty))\\&\quad+u(\infty)^{-1}(N-I)u(\infty).
\end{align*}
Since \(N-I\in \dot{E}^2(\hat\C\setminus\hat \Gamma)\) and \(u-u(\infty)\in (\dot{E}^2 \cap E^\infty)(\hat\C \setminus \hat{\Gamma})\), it follows that that \(\hat N\in I+\dot{E}^2(\hat\C\setminus\hat\Gamma)\).
For the jump matrix we compute
$$ \hat v=\hat m_-^{-1}\hat m_+=u_-^{-1} m_-^{-1} u(\infty)u(\infty)^{-1} m_+ u_+=u_-^{-1} v u_+$$
along \(\hat \Gamma\) (where we have put \(v=1\) on \(\gamma\)). 
Thus \(\hat N\) satisfies the RH problem determined by \((\hat \Sigma,\hat{v}_N)\).
The converse statement follows similarly.
\end{proof}

According to Lemma \ref{deformationlemma2}, in order to verify that the right hand side of identity \eqref{fulltransformation} is indeed a solution of our original RH problem  \eqref{RHm}, we have to show that the matrix 
\begin{align}\label{transformationmatrix}
m^{app}D(\zeta,k)^{\sigma_3}H_4^{-1}H_2^{-1}e^{-i(t g(k) - kx - 2k^2 t)\sigma_3}-D(\zeta,\infty)^{\sigma_3}e^{-itg_\infty\sigma_3}
\end{align} lies in \((\dot{E}^2 \cap E^\infty) (\C \setminus \hat{\Gamma})\).

The \(\dot{E}^2\) condition is a  consequence of the asymptotics at infinity. 
Note that \(m^{app}-I\), \(D(\zeta,k)^{\sigma_3}-D(\zeta,\infty)^{\sigma_3}\), \(H_4^{-1}-I\) and \(e^{-i(t g(k) - kx - 2k^2 t)\sigma_3}-e^{-itg_\infty\sigma_3}\) all lie in \(O(k^{-1})\) as \(k\to\infty\). Thus the same holds for the matrix \eqref{transformationmatrix}. From this it follows that the transformation matrix lies in \(\dot{E}^2\) (cf. Lemma 3.7 in \cite{CarlesonContoursLenells}).

The boundedness of \eqref{transformationmatrix} away from the branch points is immediate. 
At the branch points the only singular matrices are \(m^{app}\),  \(D^{\sigma_3}\) and \(H_4^{-1}\), so it suffices to show that \(m^{app}D^{\sigma_3}H_4^{-1}\) is bounded near the branch points.
We only study the case \(k\to E_2\), \(\im k>0\). The other cases follow similarly.
Let \(\im k>0\).
We compute
\begin{align*}
&m^{app}(k)D^{\sigma_3}(k)H_4^{-1}(k)
	\\
	 &=\frac{1}{2}\begin{pmatrix} (\Delta+ \Delta^{-1})D &  \Delta D^{-1} (D^2r_{2,a} e^{-2it g}-i)+\Delta^{-1}D^{-1}( D^2r_{2,a}e^{-2itg}+i) \\
i(\Delta - \Delta^{-1})D &   i\Delta D^{-1} (D^2r_{2,a} e^{-2it g}-i)-i\Delta^{-1}D^{-1}( D^2r_{2,a}e^{-2itg}+i)  \end{pmatrix}.
\end{align*}
The asymptotics of \(\Delta\) and \(D\) near \(E_2\) (cf. Lemma \ref{Dlemma})  imply that the first column of the previous matrix is bounded as \(k\to E_2\).
To study the second column we have to go back to the construction of \(r_{2,a}\). Looking at the proof of Lemma \ref{analyticapproximation r2} it follows that the singular behaviour of \(r_{2,a}=f_0+f_a\) is given by \(f_0\).
In view of \eqref{asymptoticsf}, we thus may compute the behavior of \(f_0\) near \(E_2\) using \eqref{seriesexpansionr}.
It follows that
\begin{align}\label{asymptoticsr2a}
r_{2,a}(k)=f_0(k)+O(1)=\frac{i}{2q_{2,1}}\frac{1}{\sqrt{k-E_2}}+O(1)
\end{align}
 as \(k\to E_2\), where the square root denotes the principal branch of the root with a branch cut along \((-\infty,E_2)\). 
Combined with the asymptotics of \(\Delta\) and \(D\) it follows that the first summand in each entry of the second column is bounded near \(E_2\). 
Finally, to show the boundedness of the remaining terms, it suffices to show that \( D^2(k)r_{2,a}(k)e^{-2itg(k)}+i=O(|k-E_2|^{1/2})\) as \(k\to E_2\).
Using \(g(k)=O(|k-E_2|^{1/2})\), statement \eqref{asymptoticsr2a}, as well as the precise asymptotics of \(D\) near \(E_2\) provided by Lemma \ref{Dlemma} (f), we find
\begin{align*}
D^2(k)r_{2,a}(k)e^{-2itg(k)}	
	&=\frac{i}{2q_{2,1}} \frac1{\sqrt{k-E_2}}D^2(k)+O(|k-E_2|^{1/2})
	\\
	&=\frac{\sqrt{E_2-k}}{\sqrt{k-E_2}} +O(|k-E_2|^{1/2})
	\\
	&=-i+O(|k-E_2|^{1/2})
\end{align*}
as \(k\to E_2\).

\subsection{Dressing Type Arguments}

So far we have shown that the RH problem \eqref{RHm} has a solution for each \((x,t)\in S=\{0\leq\frac{x}{t}\leq c_0\}\cap\{t\geq T\}\) and that the limit in \eqref{ulim} exists.
The next step is to show that the the function \(u\) defined by \eqref{ulim} is a solution of \eqref{nls} and satisfies \eqref{uxlim}.
Note that since the NLS equation \eqref{nls} is invariant under multiplication with a constant of absolute value \(1\), and since \(|D(0,\infty)|=1\) (cf. Lemma \ref{Dlemma} (c)), we may ignore the factor \(-D(0,\infty)^2\) in \eqref{ulim} while investigating if \eqref{ulim} is a solution of \eqref{nls}.
The remainder of the proof is however standard and will therefore be omitted.
The main step in the proof is to show that the solution \(m(x,t,k)\) of the RH problem \eqref{RHm} solves the Lax pair equations
\begin{align*}
\begin{cases}
  m_x + ik[\sigma_3, m] = Um,
  	\\ 
 m_t + 2ik^2 [\sigma_3, m] = V m,	
\end{cases}
\end{align*}
with
\begin{align*}
U = \begin{pmatrix} 0 & u \\
\bar{u} & 0 \end{pmatrix}, \qquad
V = \begin{pmatrix} -i |u|^2 & 2ku + iu_x \\
2k \bar{u} - i \bar{u}_x & i |u|^2 \end{pmatrix}, \qquad \sigma_3 = \begin{pmatrix} 1 & 0 \\ 0 & -1 \end{pmatrix},
\end{align*}
and \(u\) being given by \eqref{ulim}.
For the details we refer to the proof of Theorem 7 in \cite{mKdVLenells}. Although \cite{mKdVLenells} studies the mKdV equation, the corresponding proof can be easily adapted to our situation. 
The only difference is that in \cite{mKdVLenells} a vanishing lemma is used which is missing in our situation. However, since we already know that our  RH problem \eqref{RHm} has a solution in the appropriate sector, the arguments in \cite{mKdVLenells} still apply.

\subsection{Asymptotics of \(u\)}\label{asymptoticsu}

Taking into account the asymptotics of \(g\), \(D\), \(m^{(\infty)}\),  \(H_4\), and \(\hat m\),  the equations \eqref{ulim} and \eqref{fulltransformation} yield
\begin{align}\label{exacttermu}
u(x,t) &= -2i D(0,\infty)^2 \overset{\angle}{\lim_{k \to \infty}} (k\,m(x,t,k))_{12}
	\nonumber
	\\
	\nonumber
	&=-2ie^{2itg_\infty(\zeta)}D(0,\infty)^2 D(\zeta,\infty)^{-2}\lim_{k\to \infty}(\hat mm^{(\infty)})
	\\
	&=e^{2itg_\infty(\zeta)}e^{-\frac{1}{\pi i}\int_{k_0}^{\kappa_+} \frac{\ln(1 - |r(s)|^2)}{X(s)} ds}(\alpha-2i\hat m_{1,12}(x,t)).
\end{align}
Here we have used that \(D(0,\infty)^2 D(\zeta,\infty)^{-2}=e^{-\frac{1}{\pi i}\int_{k_0}^{\kappa_+} \frac{\ln(1 - |r(s)|^2)}{X(s)} ds}\) (cf. \eqref{Dinfinitydefinition}).
Using \eqref{defginfty}, \(\omega=-2(\alpha^2+2\beta^2)\), and \eqref{mhatasymptoticspart2}, we thus arrive at
\begin{align*}
	u(x,t)&=\alpha e^{2i\beta x +it\omega}e^{-\frac{1}{\pi i}\int_{k_0}^{\kappa_+} \frac{\ln(1 - |r(s)|^2)}{X(s)} ds}
	\\
	&\quad-2ie^{2i\beta x +it\omega}e^{-\frac{1}{\pi i}\int_{k_0}^{\kappa_+} \frac{\ln(1 - |r(s)|^2)}{X(s)} ds}\left(\frac{Y(\zeta,t,k_0)m_1^X(\zeta)Y(\zeta,t,k_0)^{-1}}{\sqrt{t}\psi(\zeta,k_0)}\right)_{12}
	\\
	&\quad+O\left(\frac{\ln t}{t}\right)
\end{align*}
as \(t\to\infty\). 
Inserting the definition of \(Y\), \(m_1^X\) and \(\psi\) (see \eqref{defY}, \eqref{definitionm1X} and \eqref{defpsi}, respectively) into the above identity, the first part of Theorem \ref{maintheorem} (d) follows.

\subsection{Asymptotics of \(u_x\)}\label{asymptoticsux}

Similar to the previous section, formula \eqref{uxlim} yields
\begin{align}\label{exacttermux}
u_x(x,t)&=-D(0,\infty)^2\overset{\angle}{\lim_{k \to \infty}}\left(4k^2m_{12}(x,t,k)+2i u(x,t) k m_{22}(x,t,k)\right)
	\nonumber
	\\
	\begin{split}
	&=2e^{2i\beta x +it\omega}e^{-\frac{1}{\pi i}\int_{k_0}^{\kappa_+} \frac{\ln(1 - |r(s)|^2)}{X(s)} ds}(i\alpha (\beta-\hat m_{1,11}(x,t)+\hat m_{1,22}(x,t))
	\\&\qquad+2\hat m_{1,12}(x,t)\hat m_{1,22}(x,t)-2\hat m_{2,12}(x,t))
	\end{split}
	\\
	&=2i\alpha\beta e^{2i\beta x +it\omega}e^{-\frac{1}{\pi i}\int_{k_0}^{\kappa_+} \frac{\ln(1 - |r(s)|^2)}{X(s)} ds}
	\nonumber
	\\
	&\quad-2e^{2i\beta x +it\omega}e^{-\frac{1}{\pi i}\int_{k_0}^{\kappa_+} \frac{\ln(1 - |r(s)|^2)}{X(s)} ds}
	(i\alpha \hat m_{1,11}(x,t)\nonumber\\&\qquad-i\alpha \hat m_{1,22}(x,t)+2\hat m_{2,12}(x,t))		
	+O\left(\frac{1}{t}\right)
	\nonumber
\end{align}
as \(t\to\infty\). 
As before, after substitution \(\hat m\) as well as the functions appearing in \(\hat m\) into the above formula (see \eqref{mhatasymptoticspart2} as well as \eqref{defY}, \eqref{definitionm1X} and \eqref{defpsi}) the desired asymptotics for \(u_x\) as stated in Theorem \ref{maintheorem} follow.
This concludes the proof of Theorem \ref{maintheorem}.

\begin{remark}[Alternative Derivation of the Asymptotics of $u_x$]\label{comparisonuxderivation}

Note that by differentiating the expression \eqref{exacttermu} with respect to \(x\), we get another expression for \(u_x\) and hence another way to calculate the asymptotics of \(u_x\).
This alternative way of calculating the asymptotics of \(u_x\) leads also to a different formula of the subleading coefficients \(u_b\).
The new formula requires the asymptotics of \(\frac{d}{dx} \hat m_{1,12}\), which can be obtained by differentiating \eqref{mhatasymptoticspart2}  with respect to \(x\). The only terms of order \(O(t^{-1/2})\) stem from the case when the \(x\)-derivative is applied to the oscillatory term \(e^{-itg(\zeta,k_0)\sigma_3}\) appearing in \(Y\).
This leads to 
\begin{align*}
u^{alt}_b(x,t)&=2e^{2i\beta x +i\omega t}e^{-\frac{1}{\pi i}\int_{k_0}^{\kappa_+} \frac{\ln(1 - |r(s)|^2)}{X(s)} ds}\Big(2\beta \frac{Y(\zeta,t,k_0)m_1^X(\zeta)Y(\zeta,t,k_0)^{-1}}{\sqrt{t}\psi(\zeta,k_0)}\\&\quad-X(k_0)\frac{Y(\zeta,t,k_0)\sigma_3m_1^X(\zeta)Y(\zeta,t,k_0)^{-1}}{\sqrt{t}\psi(\zeta,k_0)}
\\&\quad+X(k_0)\frac{Y(\zeta,t,k_0)m_1^X(\zeta)\sigma_3Y(\zeta,t,k_0)^{-1}}{\sqrt{t}\psi(\zeta,k_0)}
\Big)_{12}.
\end{align*}
It is straightforward (albeit tedious) to verify that \(u^{alt}_b=u_b\).
\end{remark}



%

\bibliographystyle{plain}
\bibliography{is}

\end{document}